\documentclass[onecolumn]{IEEEtran}
\usepackage{amsmath,amssymb,enumerate,theorem}
\usepackage{bm,color,xcolor}
\usepackage{cite}
\usepackage{graphicx}
\usepackage{float}
\usepackage{extarrows}
\usepackage{multirow}
\usepackage{hyperref}
\usepackage{cleveref}
\usepackage{soul}

\newtheorem{theorem}{Theorem}[section]
\newtheorem{lemma}{Lemma}[section]

\newtheorem{fact}{Fact}[section]
\newtheorem{proposition}{Proposition}[section]
\newtheorem{corollary}{Corollary}[section]
\newtheorem{definition}{Definition}[section]

\newtheorem{example}{Example}[section]
\newtheorem{remark}{Remark}[section]
\newtheorem{construction}{Construction}[section]

\newcommand\nc\newcommand
\nc{\cA}{\mathcal{A}}\nc{\cB}{\mathcal{B}}\nc{\cC}{\mathcal{C}}\nc{\cD}{\mathcal{D}}
\nc{\cE}{\mathcal{E}}\nc{\cF}{\mathcal{F}}\nc{\cG}{\mathcal{G}}\nc{\cH}{\mathcal{H}}
\nc{\cI}{\mathcal{I}}\nc{\cJ}{\mathcal{J}}\nc{\cK}{\mathcal{K}}\nc{\cL}{\mathcal{L}}
\nc{\cM}{\mathcal{M}}\nc{\cN}{\mathcal{N}}\nc{\cO}{\mathcal{O}}\nc{\cP}{\mathcal{P}}
\nc{\cQ}{\mathcal{Q}}\nc{\cR}{\mathcal{R}}\nc{\cS}{\mathcal{S}}\nc{\cT}{\mathcal{T}}
\nc{\cU}{\mathcal{U}}\nc{\cV}{\mathcal{V}}\nc{\cW}{\mathcal{W}}\nc{\cX}{\mathcal{X}}
\nc{\cY}{\mathcal{Y}}\nc{\cZ}{\mathcal{Z}}

\nc{\bba}{\mathbf{a}}\nc{\bbb}{\mathbf{b}}\nc{\bbc}{\mathbf{c}}\nc{\bbd}{\mathbf{d}}
\nc{\bbe}{\mathbf{e}}\nc{\bbf}{\mathbf{f}}\nc{\bbg}{\mathbf{g}}\nc{\bbh}{\mathbf{h}}
\nc{\bbi}{\mathbf{i}}\nc{\bbj}{\mathbf{j}}\nc{\bbk}{\mathbf{k}}\nc{\bbl}{\mathbf{l}}
\nc{\bbm}{\mathbf{m}}\nc{\bbn}{\mathbf{n}}\nc{\bbo}{\mathbf{o}}\nc{\bbp}{\mathbf{p}}
\nc{\bbq}{\mathbf{q}}\nc{\bbr}{\mathbf{r}}\nc{\bbs}{\mathbf{s}}\nc{\bbss}{\mathbf{s}^{\star}}\nc{\bbt}{\mathbf{t}}\nc{\bbtt}{\mathbf{t}^{\star}}
\nc{\bbu}{\mathbf{u}}\nc{\bbv}{\bm{v}}\nc{\bbw}{\mathbf{w}}\nc{\bfx}{\mathbf{x}}
\nc{\bby}{\mathbf{y}}\nc{\bbz}{\mathbf{z}}

\nc{\bbA}{\mathbf{A}}\nc{\bbB}{\mathbf{B}}\nc{\bbC}{\mathbf{C}}\nc{\bbD}{\mathbf{D}}
\nc{\bbE}{\mathbf{E}}\nc{\bbF}{\mathbf{F}}\nc{\bbG}{\mathbf{G}}\nc{\bbH}{\mathbf{H}}
\nc{\bbI}{\mathbf{I}}\nc{\bbJ}{\mathbf{J}}\nc{\bbK}{\mathbf{K}}\nc{\bbL}{\mathbf{L}}
\nc{\bbM}{\mathbf{M}}\nc{\bbN}{\mathbf{N}}\nc{\bbO}{\mathbf{O}}\nc{\bbP}{\mathbf{P}}
\nc{\bbQ}{\mathbf{Q}}\nc{\bbR}{\mathbf{R}}\nc{\bbS}{\mathbf{S}}\nc{\bbT}{\mathbf{T}}
\nc{\bbU}{\mathbf{U}}\nc{\bbV}{\mathbf{V}}\nc{\bbW}{\mathbf{W}}\nc{\bfX}{\mathbf{X}}
\nc{\bbY}{\mathbf{Y}}\nc{\bbZ}{\mathbf{Z}}

\nc{\sA}{\mathsf{A}}\nc{\sB}{\mathsf{B}}\nc{\sC}{\mathsf{C}}\nc{\sD}{\mathsf{D}}
\nc{\sE}{\mathsf{E}}\nc{\sF}{\mathsf{F}}\nc{\sG}{\mathsf{G}}\nc{\sH}{\mathsf{H}}
\nc{\sI}{\mathsf{I}}\nc{\sJ}{\mathsf{J}}\nc{\sK}{\mathsf{K}}\nc{\sL}{\mathsf{L}}
\nc{\sM}{\mathsf{M}}\nc{\sN}{\mathsf{N}}\nc{\sO}{\mathsf{O}}\nc{\sP}{\mathsf{P}}
\nc{\sQ}{\mathsf{Q}}\nc{\sR}{\mathsf{R}}\nc{\sS}{\mathsf{S}}\nc{\sT}{\mathsf{T}}
\nc{\sU}{\mathsf{U}}\nc{\sV}{\mathsf{V}}\nc{\sW}{\mathsf{W}}\nc{\sX}{\mathsf{X}}
\nc{\sY}{\mathsf{Y}}\nc{\sZ}{\mathsf{Z}}

\newcommand{\mathset}[1]{\left\{#1\right\}}
\newcommand{\multiset}[1]{\left\{\left\{#1\right\}\right\}}
\newcommand{\abs}[1]{\left|#1\right|}
\newcommand{\ceilenv}[1]{\left\lceil #1 \right\rceil}
\newcommand{\floorenv}[1]{\left\lfloor #1 \right\rfloor}
\newcommand{\parenv}[1]{\left( #1 \right)}
\newcommand{\sparenv}[1]{\left[ #1 \right]}

\nc{\set}[1]{\llbracket #1 \rrbracket}

\newcommand{\wth}[1]{\mathsf{wt}_H\left(#1\right)}

\newcommand{\ed}[2]{d_{#1\text{-}H}\left(#2\right)}
\newcommand{\eS}[2]{\mathcal{S}_{#1\text{-}H}\left(#2\right)}
\newcommand{\eB}[2]{\mathcal{B}_{#1\text{-}H}\left(#2\right)}
\newcommand{\vS}[2]{S_{#1\text{-}H}\left(#2\right)}
\newcommand{\vB}[2]{B_{#1\text{-}H}\left(#2\right)}

\begin{document}

\title{\textbf{More on Codes for \\ Combinatorial Composite DNA}}
\author{\textbf{\IEEEauthorblockN{Zuo Ye} \and\hspace*{.5in} \IEEEauthorblockN{Omer Sabary} \and\hspace*{.5in} \IEEEauthorblockN{Ryan Gabrys} \and\hspace*{.5in} \IEEEauthorblockN{Eitan Yaakobi} \and\hspace*{.5in} \IEEEauthorblockN{Ohad Elishco}}
}

\maketitle

{\renewcommand{\thefootnote}{}\footnotetext{

\vspace{-.2in}
 
\noindent\rule{1.5in}{.4pt}

{ 

O. Elishco and Z. Ye are with the School of Electrical and Computer Engineering, Ben-Gurion University of the Negev, 
Be'er Sheva, Israel. Email: elishco@gmail.com. 
O. Elishco and Z. Ye research was supported in part by the Israel Science Foundation (Grant No. 1789/23). 
R. Gabrys is with the University of California, San Diego, California, USA. Email: rgabrys@ucsd.edu. 
R. Gabrys research was supported in part by NSF Grant CCF2212437. O. Sabary and E. Yaakobi are with the Henry and Marilyn Taub Faculty of Computer Science, Technion - Israel Institute of Technology, Haifa 3200003, Israel 
  (e-mail: \texttt{\{omersabary,\linebreak yaakobi\}@cs.technion.ac.il}). 
  O. Sabary and E. Yaakobi have received funding for this project from the European Union (DiDAX, 101115134). Views and opinions expressed are, however, those of the author(s) only and do not necessarily reflect those of the European Union or the European Research Council Executive Agency. Neither the European Union nor the granting authority can be held responsible for them.
}
}}

\begin{abstract}
In this paper, we focus on constructions of unique-decodable/list-decodable on the recently studied $(t,e)$-composite-asymmetric error-correcting codes ($(t,e)$-CAECCs). Let $\cX$ be an $m\times n$ binary matrix, in which each row has Hamming weight $w$. When at most $t$ rows of $\cX$ suffer from errors and in each of these erroneous rows, there are at most $e$ $1 \to 0$ errors, we say that  a $(t,e)$-composite-asymmetric-error occurs in $\cX$.

For general $m,n,w,t,e$, we propose new constructions of $(t,e)$-CAECCs with redundancy at most $(t-1)\log(m)+O(1)$, where $O(1)$ is a number independent of the code-length $m$. In particular, this gives a class of $(2,e)$-CAECCs that are optimal in terms of their redundancy. 
When $m$ is a prime power, the redundancy can be further reduced to $(t-1)\log(m)-O(\log(m))$.
To further increase the size of these codes, we introduce a combinatorial object called a weak $B_e$-sets. When $e=w$, we show an efficient way to encode/decode our codes.

At last, we investigate how much we can gain if we relax the requirement of uniquely decoding to list-decoding. It is shown that when the list size is $t!$ or an exponential function of $t$, there are list-decodable $(t,e)$-CAECCs with constant redundancy. When the list size is two, we show that there are list-decodable $(3,2)$-CAECCs with redundancy $\log(m)+O(1)$.
\end{abstract}

\section{Introduction}
The digital data in the current information age is undergoing rapid growth. According to the International Data Corporation (IDC), the global datasphere is predicted to expand to 175 zettabytes by 2025 \cite{IDC2018}. This massive accumulation of digital data has prompted the search for alternative storage methods. One promising solution is DNA-based storage, which offers high information density, long-term stability, and robustness \cite{Yazdi2015TMBMC}.
In classical DNA-based data storage systems, binary digital data is firstly converted into sequences (DNA sequences) over the standard DNA alphabet $\mathset{A,C,G,T}$ (since there are four DNA bases $A$, $C$, $G$ and $T$). Then according to these quaternary sequences, DNA molecules called strands are generated via a biochemical process called \emph{DNA synthesis}. After that, the synthesized strands are stored together in a suitable container. To retrieve the original binary data, a process termed \emph{DNA sequencing} is applied to stored strands to read back the quaternary sequences, which will be further decoded back into their corresponding binary representations.

Although the feasibility of DNA-based storage has been demonstrated experimentally in many works (see Table 1 in \cite{Leon2019NB} for a summary), the prohibitively high cost of DNA-based storage keeps it far from commercial applications.
The expense of DNA storage is influenced by both the cost of DNA synthesis and the price of data retrieval via DNA sequencing
It is reported that synthesis cost per position is approximately $80,000$ times more expensive than sequencing cost per position \cite{YANIV2017Science}. Therefore, there is a strong motivation to reduce the synthesis cost. One of the proposed solutions is to maximize the logical density, i.e., the number of written bits per cycle of synthesis (bits/cycle) \cite{Yeongjae2019SR,Leon2019NB}.
In classical DNA-based storage systems, synthesizing a strand is done base-by-base, meaning one of the four bases is added to the growing strand with each cycle of synthesis. Consequently, the logical density is theoretically bounded by $\log 4 = 2$ bits per cycle. Furthermore, since the processes of synthesis, storage, and sequencing are prone to errors, error-correcting codes need to be integrated into the storage system, which further reduces the achievable logical density. 

It should be noted that standard DNA synthesis and sequencing methodologies create a high number of strands for each of the designed DNA sequences. Recently, inspired by the inherent multiplicity of the created DNA strands, Choi \textit{et al.} \cite{Yeongjae2019SR} and Anavy \textit{et al.} \cite{Leon2019NB} suggested a novel way to extend the standard DNA alphabet, by the addition of composite DNA letters. A composite letter is a combination of standard DNA letters ($A$, $C$, $G$ and $T$) with predetermined ratio $\sigma=\parenv{\sigma_A,\sigma_C,\sigma_G,\sigma_T}$, where $\sigma_A+\sigma_C+\sigma_G+\sigma_T=k$ and $k$ is the resolution parameter of the composite letter. The alphabet consisting of all composite letters of resolution $k$ is denoted by $\Phi_k$. A subset of $\Phi_k$ is called a composite alphabet. Since the cardinality of $\Phi_k$ is $\binom{k+3}{3}$, the theoretical limit of logical density can be increased beyond $2$ bits/cycle. As examples, the authors of \cite{Leon2019NB} experimentally achieved logical densities of $1.96$ bits/cycle when storing a $6.42$ MB file by using a six-letter subset of $\Phi_2$ as the alphabet, and $4.29$ bits/cycle when storing a $22.5$-byte file by using $\Phi_3$ as the alphabet. The channel of composite DNA has also been studied from a coding theory perspective in several works~\cite{Wenkai2022ICC,Frederik2024ISIT,Wenkai202405}.

Although the use of composite alphabets contributes higher logical densities and thus reduces the overall cost of a DNA-based storage system, it is a complex task to synthesize and/or read a composite letter with given ratio $\sigma$ \cite{Yeongjae2019SR,Leon2019NB}. Inspired by these two works and motif-based DNA storage \cite{Roquet2021,Eugenio2022}, Preuss \textit{et al.}~\cite{Press2024SR} and Yan \textit{et al.} \cite{Yan2023SR} extended the idea of composite letters to be based on on \emph{motifs}, also known as \emph{shortmers}, which are short oligonucleotide sequences. They used a set of motifs, instead of a combination of DNA bases with a given ratio, as a composite letter. Due to this reason, we also call a composite letter in their scheme a \textit{combinatorial composite letter}. They experimentally achieved a logical density of $84$ bits/cycle when storing a $80$-bit text. In their storage systems, the ratio of different motifs in a composite letter was not taken into account. As a result, the synthesis and reading are much less complex than those of schemes proposed in \cite{Yeongjae2019SR,Leon2019NB}.

Preuss \textit{et al.} \cite{Press2024SR} presented a novel DNA storage scheme in which a combinatorial composite letter is composed of $w$ DNA $k$-mers (motifs of length $k$) from a fixed set $\Omega$ of $n$ prefabricated DNA $k$-mers. Therefore, the theoretical limit of logical density is $\log\binom{n}{w}$ bits/cycle. This DNA storage system is prone to several types of errors, among which is the missing of one or more $k$-mers in a combinatorial composite letter. Since a combinatorial composite letter is a subset of $\Omega$ of size $w$, it can also be represented as a binary vector of Hamming weight $w$, that is, the indicator vector of this subset. Then a sequence of $m$ composite letters is equivalent to an $m\times n$ binary matrix, where each row represents a composite letter and thus has Hamming weight $w$ (the set of all such matrices is denoted by $\Sigma_w^{m\times n}$). Thus, when a $k$-mer is missing in a composite letter, it results in a decrease of the Hamming weight in the row corresponding to that composite letter. 

Suppose there are at most $t$ erroneous composite letters, and each composite letter loses at most $e$ $k$-mers. This type of error is termed a $(t,e)$-composite-asymmetric error ($(t,e)$-CAE). Sabary \textit{et al.} \cite{Omer2024ISIT} initiated the study of codes correcting such errors. When $m$ is at most $p^e$, where $p$ is the smallest prime between $n$ and $2n$, a class of codes with redundancy at most $et\log(p)$ was constructed in \cite{Omer2024ISIT}. When $e=1$, $n$ is a prime and $m\le n$, they presented an efficient encoder/decoder for the constructed codes. They also defined a metric, called the $2e$-Hamming distance, on $\Sigma_w^{m\times n}$. They showed that a subset of $\Sigma_w^{m\times n}$ is a $(t,e)$-CAE correcting code if and only if its minimum $2e$-Hamming distance is at least $t+1$. With this conclusion, they derived a sphere-packing upper bound on the maximum size of $(t,e)$-CAE correcting codes. For other works related to the storage scheme proposed in \cite{Press2024SR}, we refer to \cite{Preuss2024,Cohen2024ISIT,Roman202406}.

Following the work of Sabary \textit{et al.} \cite{Omer2024ISIT}, this paper aims to further explore $(t,e)$-CAE correcting codes. We first improved the sphere-packing bound in \cite{Omer2024ISIT} to a Johnson-type upper bound when $t$ is odd. To correct a $(t,e)$-CAE, one has to identify error positions in each row of the corrupted codeword (which is a matrix). To that end, the idea in \cite{Omer2024ISIT} is to firstly calculate $e$ syndromes of each row of the matrix. Then those $e$ syndromes of the same row are encoded into a symbol in the finite field $\mathbb{F}_{p^e}$. Thus, the matrix is encoded into a vector in $\mathbb{F}_{p^e}^m$. Now, the code presented in \cite{Omer2024ISIT} is a subset of matrices whose corresponding vectors belong to the same coset of a linear $t$-erasure correcting code in $\mathbb{F}_{p^e}^m$. In this paper, instead of encoding the $e$ syndromes in a symbol into $\mathbb{F}_{p^e}$, we encode them into an integer between $0$ and $p^e-1$. With this idea, we succeed in constructing codes with more flexible parameters. Specifically, for general $t\ge2$, we give codes correcting a $(t,e)$-composite-asymmetric error with redundancy at most $(t-1)\log(m)+O(1)$, where $O(1)$ is a number depending only on $n,e$ and $t$. Therefore, our code outperforms the one given in \cite{Omer2024ISIT} when $n,e$ and $t$ are fixed while $m$ is large. When $m$ is a prime power, the redundancy can be further reduced to $(t-1)\log(m)-O(\log(m))$.
In addition to gains in redundancy, our constructions of codes and decoding processes only involve modular operations of integers, which are simpler than operations of elements in the field $\mathbb{F}_{p^e}$. To further increase code size, we define the weak $B_e$ set, which can be applied to identify error positions in a row using only one syndrome. As a byproduct, an efficient encoding/decoding algorithm is obtained for the case $e=w$.

We also studied list-decodable codes. A code is called $(t,e)$-list decodable with list size $L$, if the intersection of error balls of any $L+1$ codewords is empty. Firstly, we show that a $(d-1,e)$-CAE correcting code with minimum $2e$-Hamming distance $d$ is $(t,e)$-list decodable with a list size upper bounded by a function of $d$ and $t$. As a corollary, we derive an Elias-type upper bound on the maximum size of $(d-1,e)$-CAE correcting codes. Regarding constructions, we show that when the list size is $t!$ or an exponential function of $t$, there are $(t,e)$-CAE list-decodable codes with redundancy independent of $m$. When the list size is two, we show that there is a $(3,2)$-CAE list-decodable code with redundancy $\log(m)+O(1)$.

This paper is organized as follows. In \Cref{sec_pre}, some necessary terminologies and auxiliary results are introduced. \Cref{sec_unique} and \Cref{sec_list} are devoted to bounds and constructions of $(t,e)$-CAE correcting codes and $(t,e)$-CAE list-decodable codes, respectively. At last, \Cref{sec_conclusion} concludes this paper.

\section{Preliminaries}\label{sec_pre}
For an integer $n\ge1$, denote $[n]\triangleq\mathset{0,1,\ldots,n-1}$. For an integer $Q>1$, let $\cA_Q$ be the alphabet $\mathset{0,1,\ldots,Q-1}$. The set of length-$n$ sequences over $\cA_Q$ is denoted by $\cA_Q^n$. In this paper, we represent sequences and vectors with bold letters, such as $\bm{x}$ and $\bm{y}$. Symbols in a sequence of length $n$ are indexed by $[n]$. For $i\in[n]$, the $i$-th symbol in sequence $\bm{x}\in\cA_Q^n$ is denoted by $x_i$. In other words, $\bm{x}=x_0x_1\cdots x_{n-1}$. The Hamming weight of a sequence $\bm{x}$ is denoted by $\wth{\bm{x}}$. The cardinality of a finite set $S$ is denoted by $\abs{S}$.

\subsection{Combinatorial DNA}
Let $\cA_4$ be the DNA quaternary alphabet, where each symbol can be represented by $\{A, C, G, T\}$ and let $\ell \in N$ be the shortmer length. We let $\cS = \{ \textbf{s}_0, \textbf{s}_1, \ldots, \textbf{s}_{n-1} \}$ be a set of $n>1$ \emph{shortmers}, $\textbf{s}_i \in \cA_4^\ell$ for $i\in [n]$. For $w<n$, the set of all subsets of $\cS$ of size $w$ defines the \emph{$w$-combinatorial composite alphabet of $\cS$}, where each \emph{combinatorial composite symbol} $\bfx_i^{\cS}$, for $i\in [\binom{n}{w}]$ is a \emph{set} of $w$ different shortmers chosen from the shortmers set $\cS$.
For simplicity, as was done in~\cite{Omer2024ISIT}, the $w$-composite symbols are abstracted as length-$n$ binary vectors of weight $w$ in which every bit indicates whether a shortmer in $\cS$ belongs to the set. The example below demonstrates one of the combinatorial symbols that were used in~\cite{Preuss2024}. 

\begin{small}
\begin{example} In~\cite{Press2024SR}, the authors used the following parameters {$\ell=3$}, $n=16$ and $w=5$,  

\vspace{-2.5ex}\begin{small}
\begin{align*}\cS =
\begin{Bmatrix}
\textbf{s}_0  = AAT, \textbf{s}_1 = ACA, \textbf{s}_2 = ATG, \textbf{s}_3 = AGC, \\ \textbf{s}_4 = TAA, \textbf{s}_5 = TCT, \textbf{s}_6 = TTC, \textbf{s}_7 = TGG, \\ \textbf{s}_8 = GAG, \textbf{s}_{9} = GCC, \textbf{s}_{10} = GTT, \textbf{s}_{11} = GGA, \\ \textbf{s}_{12} = CAC, \textbf{s}_{13} = CCG, \textbf{s}_{14} = CTA, \textbf{s}_{15} = CGT
\end{Bmatrix} .
\end{align*}
\end{small}


The set $\cS$ was selected as a code with Hamming distance  $d=2$. In this setup, an example of the first composite symbol of the alphabet is $\bfx_{0}^\cS = (1, 1, 1, 1, 1, 0, 0, 0, 0, 0, 0, 0, 0, 0, 0, 0)$, which represents the set consisting of the shortmers $\{ \textbf{s}_0, \textbf{s}_1, \textbf{s}_2, \textbf{s}_3, \textbf{s}_{4}\}$. 
\end{example}
\end{small}

Following the definitions above, a sequence of length $m>1$ over the $w$-combintorial composite alphabet of $\cS$ can be represented as an $m \times n$ binary array $\mathcal{X}$, in which each row is of weight $w$. The set of such matrices is denoted by $\Sigma_{w}^{m\times n}$. Rows and columns in an $m\times n$ matrix $\cX$ (not necessarily in $\Sigma_{w}^{m\times n}$) are indexed by $[m]$ and $[n]$, respectively. The $i$-th row of $\cX$ is denoted by $\cX_i$.

\subsection{Model description}
In this paper, we examine asymmetric errors, in which bits can only change from 1 to 0. The next definition formalizes some of these ideas. 
\begin{definition}
Let $1\le e\le w$. Given a sequence $\bm{x}\in\cA_2^n$ with Hamming weight $w$, we say that a sequence $\bm{y}\in\cA_2^n$ is obtained from $\bm{x}$ by $e$ \emph{asymmetric errors}, if $\wth{\bm{y}}=w-e$ and $y_i\le x_i$ for each $i\in[n]$.
\end{definition}

\begin{definition}\cite{Omer2024ISIT}
  Let $1\le e\le w<n$ and $1\le t<m$. Given a matrix $\cX\in\Sigma_w^{m\times n}$, we say an $m\times n$ matrix $\cY$ is obtained from $\cX$ by a $(t,e)$-\emph{composite-asymmetric-error} (or $(t,e)$-CAE for short), if there are at most $t$ rows in $\cX$ that have experienced at most $e$ composite-asymmetric errors.
\end{definition}

\begin{definition}\cite{Omer2024ISIT}
  A nonempty set $\cC\subseteq\Sigma_w^{m\times n}$ is called a $(t,e)$-composite-asymmetric error-correcting code (or $(t,e)$-CAECC for short), if it can correct any $(t,e)$-CAE. Such a code will also be referred as an $\sparenv{m,(n,w);t,e}$-composite code.
\end{definition}

Let $\cC$ be an $\sparenv{m,(n,w);t,e}$-composite code. The size of $\cC$ is denoted by $\abs{\cC}$ and the redundancy of $\cC$ is defined to be $\rho\parenv{\cC}\triangleq m\log\binom{n}{w}-\log\parenv{\abs{\cC}}$, where $\log(\cdot)$ is the logarithm function with base $2$. We aim to construct $\sparenv{m,(n,w);t,e}$-composite codes with large cardinality, i.e. small redundancy. 

$\abs{\cC}$ or equivalently, with small $\rho\parenv{\cC}$.

Let $p$ be a prime number. For any sequence $\bm{x}=x_0\cdots x_{n-1}\in\cA_Q^n$ and nonnegative integer $\ell$, we define $s_{\ell}^{p}\parenv{\bm{x}}\triangleq\sum_{j=0}^{n-1}j^{\ell}x_j\pmod{p}$, which is called the $\ell$-VT-syndrome modulo $p$ of $\bm{x}$. For a sequence $\bm{x}\in\cA_2^n$ with a given Hamming weight, the values of $s_{1}^{p}\parenv{\bm{x}},\ldots,s_{e}^{p}\parenv{\bm{x}}$ can be leveraged to correct $e$ composite-asymmetric errors occurring in $\bm{x}$, as shown in next two lemmas.
\begin{lemma}\cite[the proof of Lemma 4.1]{Lara2010SIAM}\label{lem_pre1}
  Let $Q>1$ be an integer and $q\ge\max\{Q,t+1\}$ be a prime nymber. Suppose that $0\le\alpha_1,\ldots,\alpha_t<Q$ satisfy
  \begin{equation}\label{eq_pre1}
    \left\{
    \begin{array}{c}
     \alpha_1+\cdots+\alpha_t\equiv a_1\pmod{q}\\
     \alpha_1^2+\cdots+\alpha_t^2\equiv a_2\pmod{q}\\
     \vdots\\
     \alpha_1^t+\cdots+\alpha_t^t\equiv a_t\pmod{q}
    \end{array}
    \right.,
  \end{equation}
where $a_1,\ldots,a_t\in\mathbb{Z}_q$ are given. Then the multiset $\multiset{\alpha_1,\ldots,\alpha_t}$ can be uniquely determined by \Cref{eq_pre1}.
\end{lemma}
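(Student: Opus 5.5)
The plan is to pass from power sums to elementary symmetric polynomials over the field $\mathbb{F}_q$ using Newton's identities, and then to read off the multiset as the roots of a polynomial. Throughout, work in $\mathbb{F}_q=\mathbb{Z}_q$. Since each $\alpha_i$ is an integer with $0\le\alpha_i<Q\le q$, the reduction map $\alpha_i\mapsto\alpha_i \bmod q$ is injective on $\{0,1,\ldots,Q-1\}$. It therefore suffices to show that the multiset $\multiset{\bar\alpha_1,\ldots,\bar\alpha_t}$ of residues in $\mathbb{F}_q$ is determined by $a_1,\ldots,a_t$. The original integers are then recovered as the unique representatives in $\{0,\ldots,Q-1\}$.

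Let $p_k=\sum_i\bar\alpha_i^k$ be the power sums, so that $p_k=a_k$ for $1\le k\le t$. Let $\sigma_1,\ldots,\sigma_t$ be the elementary symmetric polynomials of $\bar\alpha_1,\ldots,\bar\alpha_t$. Newton's identities hold over any commutative ring:
\begin{equation*}
k\sigma_k=\sum_{j=1}^{k}(-1)^{j-1}\sigma_{k-j}p_j,\qquad 1\le k\le t,
\end{equation*}
with $\sigma_0=1$. Because $q\ge t+1$ is prime, each $k\in\{1,\ldots,t\}$ is a unit in $\mathbb{F}_q$. So we can divide by $k$ and compute $\sigma_1,\sigma_2,\ldots,\sigma_t$ recursively from $a_1,\ldots,a_t$, and each $\sigma_k$ is uniquely determined by the $a_j$. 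This is the one place where the hypothesis $q\ge t+1$ enters. It is also the main point to check: if $q\le t$, the coefficient $k=q$ would vanish and the recursion would break down.

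Finally, consider the monic polynomial
\begin{equation*}
f(x)=\prod_{i=1}^{t}(x-\bar\alpha_i)=\sum_{k=0}^{t}(-1)^k\sigma_k x^{t-k}\in\mathbb{F}_q[x].
\end{equation*}
Its coefficients are determined by the $a_j$. Since $\mathbb{F}_q[x]$ is a unique factorization domain, the multiset of roots of $f$, counted with multiplicity, is exactly $\multiset{\bar\alpha_1,\ldots,\bar\alpha_t}$. Hence any two solution tuples of \Cref{eq_pre1} with entries in $[Q]$ give the same $f$, hence the same residue multiset, and by the injectivity noted above the same integer multiset $\multiset{\alpha_1,\ldots,\alpha_t}$.
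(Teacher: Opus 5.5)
Your proof is correct. It is the standard argument behind the result the paper cites, and the paper gives no proof of its own beyond that citation. The argument uses Newton's identities to recover $\sigma_1,\ldots,\sigma_t$ from the power sums, which needs $q\ge t+1$ so that $1,\ldots,t$ are invertible in $\mathbb{F}_q$. It then reads off the residue multiset as the roots of $\prod_i(x-\bar\alpha_i)$, and lifts back to integers because reduction mod $q$ is injective on $\{0,\ldots,Q-1\}$, which needs $q\ge Q$.
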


\begin{remark}\label{rmk_exactte}
  Our codewords are taken from $\Sigma_w^{m\times n}$ and thus each row of a codeword $\cX$ has Hamming weight $w$. This means that once a corrupted version of $\cX$ is received, we can know which rows are erroneous and how many asymmetric errors there are in each row.
  Due to this observation, when proving the correctness of a constructed code, it is sufficient to assume that there are exactly $t$ rows are suffered from errors and in each erroneous row, there are exactly $e$ composite-asymmetric errors.
\end{remark}

The following lemma follows directly from \Cref{lem_pre1}. For completeness, we present its proof in Appendix~\ref{appendix_pfpre2}.
\begin{lemma}\label{lem_pre2}
  Let $\bm{x}\in\Sigma_2^n$ be a sequence of Hamming weight at least $e$, where $e<n$. Let $p\ge n$ a prime. Suppose that $\bm{y}$ is obtained from $\bm{x}$ by exactly $e$ composite-asymmetric errors. Then given $s_{\ell}^p(\bm{x})$ for all $1\le\ell\le e$, $\bm{x}$ can be uniquely recovered from $\bm{y}$.
\end{lemma}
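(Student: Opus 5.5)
The plan is to reduce the recovery of $\bm{x}$ to identifying the $e$ positions where a $1$ was turned into a $0$, and then apply \Cref{lem_pre1} to the multiset of those positions. Let $E=\mathset{j\in[n]: x_j=1,\ y_j=0}$ be the set of error positions. By the definition of asymmetric errors, $\abs{E}=e$ and $\bm{x}=\bm{y}+\bm{1}_E$, where $\bm{1}_E$ is the indicator vector of $E$. Once $E$ is known, $\bm{x}$ is therefore determined.

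Since the syndromes are linear, for each $1\le\ell\le e$ we have
\begin{equation*}
s_\ell^p(\bm{x})-s_\ell^p(\bm{y})\equiv\sum_{j\in E} j^{\ell}\pmod{p}.
\end{equation*}
The decoder knows $s_\ell^p(\bm{x})$ and can compute $s_\ell^p(\bm{y})$ from the received $\bm{y}$. Hence it knows the power sums $a_\ell\triangleq\sum_{j\in E}j^\ell \bmod p$ for all $\ell=1,\ldots,e$. Write $E=\mathset{\alpha_1,\ldots,\alpha_e}$. Each $\alpha_i$ satisfies $0\le\alpha_i<n$, and these values satisfy the system \eqref{eq_pre1} with $t=e$, $Q=n$ and $q=p$.

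To invoke \Cref{lem_pre1} we must check that $p\ge\max\{n,e+1\}$. This is the only point needing care. We have $p\ge n$ by assumption, and $e<n$ gives $e+1\le n\le p$. \Cref{lem_pre1} then says the multiset $\multiset{\alpha_1,\ldots,\alpha_e}$ is uniquely determined by $a_1,\ldots,a_e$. Since $E$ is a set of distinct indices, this multiset is exactly $E$.

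Consequently, any two candidates $\bm{x},\bm{x}'$ that have the same syndromes and both yield $\bm{y}$ after $e$ asymmetric errors must have the same error set. This gives $\bm{x}=\bm{y}+\bm{1}_E=\bm{x}'$, so $\bm{x}$ is uniquely recovered. I expect no step to be a real obstacle. The only subtleties are the parameter check $q\ge t+1$ (handled by $e<n\le p$) and noting that the error positions are distinct, which makes the multiset a set.
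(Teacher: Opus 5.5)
Your proposal is correct and matches the paper's proof in Appendix~\ref{appendix_pfpre2}: both read the differences $s_\ell^p(\bm{x})-s_\ell^p(\bm{y})$ as the power sums, modulo $p$, of the $e$ error positions, and then apply \Cref{lem_pre1} to recover those positions. The only addition in your version is the explicit check that $p\ge\max\{n,e+1\}$, which follows from $e<n\le p$; the paper uses this condition without stating it.
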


For integer $e\ge 1$ and a prime $p$, define
\begin{equation*}
  f_{p,e}\parenv{\bm{x}}\triangleq s_{1}^{p}\parenv{\bm{x}}+p\cdot s_{2}^{p}\parenv{\bm{x}}+\cdots p^{e-1}\cdot s_{e}^{p}\parenv{\bm{x}}.
\end{equation*}
Then $0\le f_{p,e}\parenv{\bm{x}}\le p^e-1$. In other words, $f_{p,e}\parenv{\bm{x}}$ is a symbol in the alphabet $\cA_{p^e}$.
\begin{fact}\label{fact_pre1}
 The tuple $\parenv{s_{1}^{p}\parenv{\bm{x}},\ldots,s_{e}^{p}\parenv{\bm{x}}}$ is uniquely determined by $ f_{p,e}\parenv{\bm{x}}$.
\end{fact}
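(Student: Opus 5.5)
The statement is Fact~\ref{fact_pre1}: the value $f_{p,e}(\bm{x})$ determines the tuple $(s_1^p(\bm{x}),\ldots,s_e^p(\bm{x}))$. The plan is to recognize $f_{p,e}(\bm{x})$ as a base-$p$ expansion whose digits are the syndromes, and then invoke uniqueness of base-$p$ representations.

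First, we fix the normalization. Each $s_\ell^p(\bm{x})$ is defined as a residue modulo $p$, and we take it to be the representative in $\{0,1,\ldots,p-1\}$. This convention is exactly what makes the paper's bound $0\le f_{p,e}(\bm{x})\le p^e-1$ hold. With it,
\begin{equation*}
f_{p,e}(\bm{x})=\sum_{\ell=1}^{e} p^{\ell-1} s_\ell^p(\bm{x})
\end{equation*}
is the integer whose base-$p$ digits, from least to most significant, are $s_1^p(\bm{x}),\ldots,s_e^p(\bm{x})$.

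Next, we show the map $\phi:[p]^e\to[p^e]$, $(d_1,\ldots,d_e)\mapsto\sum_{\ell} d_\ell p^{\ell-1}$, is injective. We can do this in either of two ways.
\begin{itemize}
\item \emph{Explicit recovery.} Reducing $N=\phi(d_1,\ldots,d_e)$ modulo $p$ gives $d_1$, since every other term is divisible by $p$ and $0\le d_1<p$. Then $(N-d_1)/p=\sum_{\ell\ge2} d_\ell p^{\ell-2}$, and induction on $e$ recovers $d_2,\ldots,d_e$. In closed form, $d_\ell=\lfloor N/p^{\ell-1}\rfloor \bmod p$.
\item \emph{Counting.} Suppose $\phi(\bm d)=\phi(\bm d')$ with $\bm d\ne\bm d'$. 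Let $k$ be the smallest index with $d_k\ne d_k'$. Then $p^{k-1}(d_k-d'_k)\equiv 0 \pmod{p^k}$, so $p\mid d_k-d'_k$. Since $|d_k-d'_k|<p$, this forces $d_k=d'_k$, a contradiction.
\end{itemize}
Applying this with $d_\ell=s_\ell^p(\bm{x})$ gives the Fact.

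There is no real obstacle here. The only point needing care is that the syndromes be taken as least nonnegative residues. If arbitrary integer representatives were allowed, different tuples could give the same value of $f_{p,e}(\bm{x})$. So the proof should state this convention explicitly.
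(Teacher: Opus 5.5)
Your proof is correct and matches the paper. The paper states this Fact without proof, as immediate once $f_{p,e}(\bm{x})$ is read as a base-$p$ integer whose digits are the least nonnegative residues $s_1^p(\bm{x}),\ldots,s_e^p(\bm{x})$; that convention is implicit in its bound $0\le f_{p,e}(\bm{x})\le p^e-1$, and you were right to make it explicit (your second bullet is a direct uniqueness argument rather than a counting one, but it is valid).
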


For each $\cX\in\Sigma_{w}^{m\times n}$, define
$$
f\parenv{\cX}\triangleq\parenv{f_{p,e}\parenv{\cX_0},\ldots,f_{p,e}\parenv{\cX_{m-1}}}.
$$

\begin{proposition}\label{prop_main}
Let $p\ge n$ be a prime. Let $\cX\in\Sigma_w^{m\times n}$ and $\cY$ be obtained from $\cX$ by a $(t,e)$-CAE. According to \Cref{rmk_exactte}, we can assume that there are exactly $t$ erroneous rows $\cY_{i_1},\ldots,\cY_{i_t}$ and $\wth{\cY_{i_j}}=w-e$ for all $1\le j\le t$. By \Cref{lem_pre2} and \Cref{fact_pre1}, to decode $\cX$ from $\cY$, it is sufficient to obtain the values of $f_{p,e}(\cX_{i_j})$ for all $1\le j\le t$. In addition, since $\cY_i=\cX_i$ when $i\notin\{i_1,\ldots,i_t\}$, we have $f_{p,e}(\cX_i)=f_{p,e}(\cY_i)$ for all $i\notin\{i_1,\ldots,i_t\}$. Now the problem boils down to correcting erasures in $f(\cX)$ at positions $i_1,\ldots,i_t$.
\end{proposition}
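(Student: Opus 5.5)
The plan is to verify the reduction claimed in \Cref{prop_main} in three short steps: locate the erroneous rows, show that the unaffected rows keep their symbols, and show that knowing the erased symbols lets us recover $\cX$.

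\emph{Locating the errors.} By \Cref{rmk_exactte}, every row of a codeword has weight exactly $w$. Hence the erroneous rows of $\cY$ are exactly the rows whose weight is smaller than $w$. This identifies the index set $\mathset{i_1,\ldots,i_t}$ directly from $\cY$, without any knowledge of $\cX$. If fewer than $t$ rows, or fewer than $e$ positions in some row, are in error, we would rather handle the general case directly than pad with fictitious errors. Lemma~\ref{lem_pre2} is applied row by row with the actual number $e'\le e$ of errors. Its proof, via \Cref{lem_pre1} with $t$ replaced by $e'$, only needs the power sums of the $e'$ missing indices up to order $e'$, and these are available among $s^p_1,\ldots,s^p_e$. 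So nothing is lost by assuming exactly $t$ rows with exactly $e$ errors each.

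\emph{Unaffected rows.} For $i\notin\mathset{i_1,\ldots,i_t}$ we have $\cY_i=\cX_i$. Therefore $f_{p,e}(\cY_i)=f_{p,e}(\cX_i)$ holds trivially, since $f_{p,e}$ is a function of the row alone. Consequently the vector $f(\cY)$ agrees with $f(\cX)$ outside the known positions $i_1,\ldots,i_t$. These $t$ positions can thus be treated as erasures in the word $f(\cX)\in\cA_{p^e}^m$.

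\emph{Sufficiency of the erased symbols.} Suppose the erasures are filled in, i.e., we know $f_{p,e}(\cX_{i_j})$ for each $j$. By \Cref{fact_pre1}, which is just the base-$p$ expansion of an integer in $[p^e]$ with digits in $[p]$, we read off $s^p_1(\cX_{i_j}),\ldots,s^p_e(\cX_{i_j})$. The erroneous row $\cY_{i_j}$ has weight $w-e\ge0$, and $\cX_{i_j}$ has weight $w\ge e$. Since $p\ge n$ is prime, Lemma~\ref{lem_pre2} recovers $\cX_{i_j}$ from $\cY_{i_j}$ and these syndromes.

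The underlying computation is as follows. We have $s^p_\ell(\cX_{i_j})-s^p_\ell(\cY_{i_j})\equiv\sum_{k}\beta_k^\ell \pmod p$, where $\beta_1,\ldots,\beta_e$ are the distinct positions of the lost ones. These positions lie in $[n]$, and \Cref{lem_pre1} applies with $Q=n$, $q=p$; for $q\ge e+1$ note that $p\ge n>e$. \Cref{lem_pre1} then determines $\multiset{\beta_1,\ldots,\beta_e}$, so we set those positions back to $1$. Doing this for every $j$ yields $\cX$.

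The only point needing care is the hypothesis $q\ge\max\{Q,t+1\}$ in \Cref{lem_pre1} when it is applied inside \Cref{lem_pre2}, and it is guaranteed by $p\ge n>e$. I expect no real obstacle: the proposition is a bookkeeping statement, and the substantive work, namely designing $\cC$ so that $f(\cX)$ lies in a structure correcting $t$ erasures, is deferred to the constructions that follow.
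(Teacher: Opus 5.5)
Your proposal is correct and follows the same reduction the paper states inline. Erroneous rows are identified by their weight deficit, and $f_{p,e}$ is unchanged on untouched rows, so those $t$ positions become erasures in $f(\cX)$. The recovered symbols then give the syndromes via \Cref{fact_pre1}, and \Cref{lem_pre2} (that is, \Cref{lem_pre1} with $Q=n$ and $q=p\ge n>e$) restores each row. Your direct handling of rows with $e'<e$ errors, using only the first $e'$ power sums, is a welcome explicit justification of the reduction in \Cref{rmk_exactte}, which the paper leaves implicit.
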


In \cite{Omer2024ISIT}, the authors viewed the vector $\parenv{s_1^p(\bm{x}),\ldots,s_1^p(\bm{x})}$ as an element in the finite field $\mathbb{F}_{p^e}$. Here, we encode the vector $\parenv{s_1^p(\bm{x}),\ldots,s_1^p(\bm{x})}$ into $f_{p,e}(\bm{x})$, which is a symbol in the alphabet $\Sigma_{p^e}$. As an illustration of our method, next we show an alternative construction to \cite[Construction 1]{Omer2024ISIT}.

\begin{construction}\label{constuction_pre1}
  Let $m,n>1$, $1\le w<n$, $1\le e\le w$ and $1\le t<m$. Let $p$ be the smallest prime in $\sparenv{n,2n}$ and $q\ge p^e$ is a prime. Suppose that $\cC_t\subseteq\mathbb{F}_q^m$ is a coset of an $\sparenv{m,k,t+1}$ code over $\mathbb{F}_q$. The code $\cC$ is defined as
  $$
  \cC\triangleq\mathset{\cX\in\Sigma_{w}^{m\times n}:~f\parenv{\cX}\in\cC_t}.
  $$
\end{construction}

Then the following theorem follows from \Cref{prop_main}.
\begin{theorem}\label{thm_pre1}
  The code $\cC$ is a $(t,e)$-CAECC. When $m\le q<2m$, there is a choice of $\cC_t$ such that the redundancy is at most $t\log\parenv{q}\in\sparenv{t\log(m),t\log(m)+t}$.
\end{theorem}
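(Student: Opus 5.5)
The proof has two parts: correctness, which will follow from \Cref{prop_main}, and the redundancy bound, which comes from choosing $\cC_t$ as the best coset of an MDS code and averaging over cosets.

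\textbf{Correctness.} Let $\cX\in\cC$ and let $\cY$ be obtained from $\cX$ by a $(t,e)$-CAE. By \Cref{rmk_exactte}, the decoder can identify the erroneous rows $i_1,\ldots,i_t$ from the row weights of $\cY$. Since $p^e\le q$, every value $f_{p,e}(\cX_i)\in\{0,\ldots,p^e-1\}$ can be read as an element of $\mathbb{F}_q$, so $f(\cX)\in\mathbb{F}_q^m$ is well defined. The coordinates of $f(\cX)$ outside $\{i_1,\ldots,i_t\}$ are equal to those of $f(\cY)$. Write $\cC_t=\bm{v}+L$, where $L$ is an $[m,k,t+1]$ code. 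Two vectors of $\cC_t$ that agree outside $t$ positions differ by a codeword of $L$ of weight at most $t<t+1$, hence they are equal. So the $t$ erased coordinates of $f(\cX)$ are uniquely determined. Since each recovered value lies in $\{0,\ldots,p^e-1\}$, it gives back the integer $f_{p,e}(\cX_{i_j})$. By \Cref{fact_pre1} we then obtain the syndromes $s^p_1,\ldots,s^p_e$ of each erroneous row. \Cref{lem_pre2} applies because $p\ge n$ and each row has weight $w\ge e$, so it recovers each row $\cX_{i_j}$. This is exactly the reduction stated in \Cref{prop_main}.

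\textbf{Redundancy.} Take $L$ to be a linear $[m,m-t,t+1]$ MDS code over $\mathbb{F}_q$. Such a code exists when $m\le q$, for instance a shortened or extended Reed--Solomon code, whose length can be as large as $q+1$. The $q^t$ cosets of $L$ partition $\mathbb{F}_q^m$. Hence the sets $\{\cX\in\Sigma_w^{m\times n}: f(\cX)\in\bm{v}+L\}$ partition $\Sigma_w^{m\times n}$, and by pigeonhole some coset $\cC_t$ gives
\[
|\cC|\ge \binom{n}{w}^m/q^t .
\]
Therefore $\rho(\cC)\le t\log(q)$. Finally, $m\le q<2m$ gives $t\log(q)\in[t\log(m),\,t\log(m)+t)$, which is the claimed interval.

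The step that needs the most care is not an estimate but the identification of integers with field elements. The map $\{0,\ldots,p^e-1\}\to\mathbb{F}_q$ must be injective, which is where $q\ge p^e$ is used, so that erasure decoding over $\mathbb{F}_q$ returns the true integer $f_{p,e}$. The other point to verify is that an MDS code of length $m$ exists for $m\le q$.
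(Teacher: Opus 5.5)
Your proposal is correct and follows essentially the same route as the paper, which simply invokes \Cref{prop_main}: you decode $f(\cX)$ as $t$ erasures within a coset of a distance-$(t+1)$ code over $\mathbb{F}_q$, where $q\ge p^e$ makes the integer-to-field identification injective, and you obtain the redundancy bound from an $[m,m-t,t+1]$ (generalized) Reed--Solomon code, which exists since $m\le q$, by pigeonholing over its $q^t$ cosets. You spell out the details the paper leaves implicit (the choice of MDS code, the partition of $\Sigma_w^{m\times n}$ by cosets, and $m\le q<2m$ giving $t\log(q)\in[t\log(m),t\log(m)+t)$), and all of them check out.
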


Note that in \cite[Construction 1]{Omer2024ISIT}, the coset $\cC_t$ is over the field $\mathbb{F}_{p^e}$, whose size is a prime power. In our construction, the code $\cC_t$ is over a field of size $q$, which is a prime number. In our construction, all calculations can be done in the context of residue integers.
It should be noted that the redundancy of our code is larger than that of the code in \cite[Corollary 1]{Omer2024ISIT} by at most $t$. But our construction can allow larger values of $m$, i.e, $m>p^e$. If we replace $\mathbb{F}_{p^{e}}$ with $\mathbb{F}_{p^{el}}$ where $l>1$ in \cite[Construction 1]{Omer2024ISIT}, then \cite[Corollary 1]{Omer2024ISIT} allows larger values of $m$. However, the computations in a larger field of prime power size will be more complicated.

Apart from advantages mentioned above, encoding the vector $\parenv{s_1^p(\bm{x}),\ldots,s_1^p(\bm{x})}$ into a symbol in the alphabet $\Sigma_{p^e}$ enables us to construct new codes, which are defined according to constraints given by congruent equations. As a result, this approach leads to codes with more flexible parameters and in some cases codes with optimal levels of redundancy, which will be shown in the next section. Throughout this paper, whenever it comes to $\sparenv{m,(n,w);t,e}$-composite codes, it is always assumed that $m$ is large, and $n,w,e,t$ are constants.

\section{Uniquely decodable CAECCs}\label{sec_unique}
In this section, we first improve the sphere-packing upper bound on the size of an $\sparenv{m,(n,w);t,e}$-composite code given in \cite[Theorem 2]{Omer2024ISIT} to a Johnson type upper bound, when $t$ is odd. After that, we present several constructions of $\sparenv{m,(n,w);t,e}$-composite codes. Recall that it is always assumed that $m$ is large, while $n,w,e,t$ are constants. We aim to improve the cardinality or redundancy of codes in this regime of parameters. In \Cref{subsection_outer}, we construct codes which attain redundancy which is much smaller than those given in~\cite[Theorem 1]{Omer2024ISIT}. To further improve the code size, we introduce the notion of weak $B_e$ sets and use them to determine the error positions in each row. It turns out that the code size can be improved by a linear factor.

\subsection{Bounds}
For any two sequences $\bm{x},\bm{y}\in\Sigma_q^n$, the Hamming distance between them is denoted by $d_H\parenv{\bm{x},\bm{y}}$.
\begin{definition}\cite{Omer2024ISIT}\label{dfn_etdistance}
  For integer $e\ge0$, the $e$-Hamming distance of $\cX,\cY\in\{0,1\}^{m\times n}$, $\ed{e}{\cX,\cY}$, is defined as
  $$
  \ed{e}{\cX,\cY}=
  \begin{cases}
    \infty, & \mbox{if } d_H\parenv{\cX_i,\cY_i}>e\text{ for some }i, \\
    \abs{\mathset{i:~\cX_i\ne\cY_i}}, & \mbox{otherwise}.
  \end{cases}
  $$
  The $e$-Hamming distance of a code $\cC$ is defined to be $\ed{e}{\cC}\triangleq\min\mathset{\ed{e}{\cX,\cY}:~\cX,\cY\in\cC,\cX\ne\cY}$.
\end{definition}

It is proved in \cite[Lemma 1]{Omer2024ISIT} that $\cC\subseteq\Sigma_w^{m\times n}$ is a $(t,e)$-CAECC if and only if $\ed{2e}{\cC}\ge t+1$. Although we call the function $\ed{e}{\cdot,\cdot}$ ``distance", it does not satisfy the triangle inequality, as shown in next example.
\begin{example}
  Let
  \begin{equation*}
    \cX=
    \begin{pmatrix}
      1 & 1 & 1 & 0 & 0 \\
      0 & 1 & 1 & 1 & 0 \\
      0 & 0 & 1 & 1 & 1
    \end{pmatrix},
     \cY=
    \begin{pmatrix}
      0 & 1 & 1 & 1 & 0 \\
      1 & 1 & 1 & 0 & 0 \\
      0 & 0 & 1 & 1 & 1
    \end{pmatrix},
     \cZ=
    \begin{pmatrix}
      0 & 1 & 1 & 1 & 0 \\
      1 & 1 & 0 & 0 & 1 \\
      0 & 1 & 1 & 1 & 0
    \end{pmatrix}.
  \end{equation*}
  It is easy to see that $\ed{2}{\cX,\cY}=\ed{2}{\cY,\cZ}=2$. On the other hand, since $d_H\parenv{\cX_2,\cZ_2}=4$, according to \Cref{dfn_etdistance}, we have $\ed{2}{\cX,\cZ}=\infty$ and hence $\ed{2}{\cX,\cZ}>\ed{2}{\cX,\cY}+\ed{2}{\cY,\cZ}$.
\end{example}

Fortunately, we have the following lemma.
\begin{lemma}\label{lem_trinequality}
  For any $\cX,\cY,\cZ\in\Sigma_w^{m\times n}$ and integer $e\ge0$, it holds that
  $$
  \ed{e}{\cX,\cZ}\le\ed{\floorenv{e/2}}{\cX,\cY}+\ed{\floorenv{e/2}}{\cY,\cZ}.
  $$
\end{lemma}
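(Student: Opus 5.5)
Let $e'=\floorenv{e/2}$; the plan is to compare the two sides row by row. First I dispose of the trivial case: if either $\ed{e'}{\cX,\cY}$ or $\ed{e'}{\cY,\cZ}$ equals $\infty$, the right-hand side is $\infty$ and there is nothing to prove. So I assume both are finite. By \Cref{dfn_etdistance}, this means $d_H\parenv{\cX_i,\cY_i}\le e'$ and $d_H\parenv{\cY_i,\cZ_i}\le e'$ for every $i\in[m]$.

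\textbf{Step 1 (finiteness of the left-hand side).} For each row $i$, the ordinary Hamming distance satisfies the triangle inequality, so
$$
d_H\parenv{\cX_i,\cZ_i}\le d_H\parenv{\cX_i,\cY_i}+d_H\parenv{\cY_i,\cZ_i}\le 2\floorenv{e/2}\le e.
$$
Hence no row violates the threshold $e$, and $\ed{e}{\cX,\cZ}=\abs{\mathset{i:~\cX_i\ne\cZ_i}}$ is finite. It is worth noting that the hypothesis $\cX,\cY,\cZ\in\Sigma_w^{m\times n}$ is not actually needed here. Within $\Sigma_w$, Hamming distances between rows are even, so one could even sharpen the bound. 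For the stated inequality, however, the plain row-wise triangle inequality suffices.

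\textbf{Step 2 (counting differing rows).} If $\cX_i\ne\cZ_i$, then $\cX_i\ne\cY_i$ or $\cY_i\ne\cZ_i$, since otherwise $\cX_i=\cY_i=\cZ_i$. Therefore
$$
\mathset{i:\cX_i\ne\cZ_i}\subseteq\mathset{i:\cX_i\ne\cY_i}\cup\mathset{i:\cY_i\ne\cZ_i}.
$$
Applying the union bound and the finite case of the definition, the size of the left set is at most $\ed{e'}{\cX,\cY}+\ed{e'}{\cY,\cZ}$, which is the claim.

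There is no real obstacle in this argument. The only point needing care is Step 1: one must ensure that finiteness of the right-hand side forces finiteness of the left-hand side. This is exactly where the halving $\floorenv{e/2}$ is used, since it guarantees $2\floorenv{e/2}\le e$.
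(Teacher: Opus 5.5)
Your proof is correct and follows the same route as the paper: bound each row by the ordinary Hamming triangle inequality, $d_H\parenv{\cX_i,\cZ_i}\le 2\floorenv{e/2}\le e$, and observe that the rows where $\cX$ and $\cZ$ differ lie in the union of the rows where $\cX,\cY$ differ or $\cY,\cZ$ differ. The paper also reduces to even $e$ using the parity of Hamming distances in $\Sigma_w$ and treats $e\in\{0,2\}$ separately, but as you note, neither step (nor the constant-weight hypothesis) is actually needed.
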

\begin{IEEEproof}
For any two sequences $\bm{x}$ and $\bm{y}$ with the same Hamming weight, it is clear that $d_H\parenv{\bm{x},\bm{y}}$ is an even integer. Thus we may assume that $e$ is even. When $\ed{\floorenv{e/2}}{\cX,\cY}=\infty$ or $\ed{\floorenv{e/2}}{\cY,\cZ}=\infty$, the conclusion is trivial. Next, assume $\ed{\floorenv{e/2}}{\cX,\cY}<\infty$ and $\ed{\floorenv{e/2}}{\cY,\cZ}<\infty$. Then if $e=0$ or $2$, we have $\cX=\cY=\cZ$ and then the conclusion follows.

Now assume $e\ge4$, $\ed{\floorenv{e/2}}{\cX,\cY}=d_1$ and $\ed{\floorenv{e/2}}{\cY,\cZ}=d_2$. According to \Cref{dfn_etdistance}, there exist $I_1=\mathset{i_1,\ldots,i_{r_1}}$, $I_2=\mathset{j_1,\ldots,j_{r_2}}$, $I_3=\mathset{k_1,\ldots,k_{r_3}}\subseteq[m]$ such that $\cX_i\ne\cY_i$ if and only if $i\in I_1\cup I_2$, and $\cY_i\ne\cZ_i$ if and only if $i\in I_2\cup I_3$. Here, we have $r_1,r_2,r_3\ge0$, $r_1+r_2=d_1$, $r_2+r_3=d_2$ and $I_1,I_2,I_3$ are mutually disjoint. Therefore, it holds that $\cX_i\ne\cZ_i$ only if $i\in I_1\cup I_2\cup I_3$ and that $j \in I_2$ if and only if $\cX_j \neq \cY_j$ and $\cY_j \neq \cZ_j$.

For $i\in I_1$, since $d_H\parenv{\cX_i,\cY_i}\le\floorenv{e/2}$ and $d_H\parenv{\cY_i,\cZ_i}=0$, we have $d_H\parenv{\cX_i,\cZ_i}\le\floorenv{e/2}$. Similarly, it holds that $d_H\parenv{\cX_i,\cZ_i}\le\floorenv{e/2}$ for $i\in I_3$. When $i\in I_2$, since $d_H\parenv{\cX_i,\cY_i}\le\floorenv{e/2}$ and $d_H\parenv{\cY_i,\cZ_i}\le\floorenv{e/2}$, we have $d_H\parenv{\cX_i,\cZ_i}\le e$. In other words, for each $i\in I_1\cup I_2\cup I_3$, it holds that $d_H\parenv{\cX_i,\cZ_i}\le e$. Therefore, $\ed{e}{\cX,\cZ}\le\abs{I_1\cup I_2\cup I_3}=d_1+d_2-r_2\le\ed{\floorenv{e/2}}{\cX,\cY}+\ed{\floorenv{e/2}}{\cY,\cZ}$. Now the proof is completed.
\end{IEEEproof}

For $\cX\in\Sigma_w^{m\times n}$, $0\le e\le 2w$ and $0\le r\le m$, define
$$
\begin{array}{l}
\eS{0}{\cX,r}=\mathset{\cX},\\
\eS{e}{\cX,r}=\mathset{\cY\in\Sigma_w^{m\times n}:~\ed{e}{\cX,\cY}=r},\text{ for }e>0
\end{array}
$$
and
\begin{equation}\label{eq_dfnball1}
\eB{e}{\cX,r}=\mathop{\cup}\limits_{s=0}^{r}\eS{e}{\cX,s}.
\end{equation}


\begin{corollary}\label{cor_packingradius}
If $\cC$ is a $(t,e)$-CAECC, then $\eB{e}{\cX,\floorenv{t/2}}\cap\eB{e}{\cY,\floorenv{t/2}}=\emptyset$ for any two distinct $\cX$, $\cY\in\cC$.
\end{corollary}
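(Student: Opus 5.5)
We argue by contradiction, combining the characterization \cite[Lemma 1]{Omer2024ISIT} (a code $\cC\subseteq\Sigma_w^{m\times n}$ is a $(t,e)$-CAECC if and only if $\ed{2e}{\cC}\ge t+1$) with the triangle-type inequality of \Cref{lem_trinequality}. Suppose $\cX\ne\cY$ are codewords of $\cC$ and that some $\cZ\in\Sigma_w^{m\times n}$ lies in $\eB{e}{\cX,\floorenv{t/2}}\cap\eB{e}{\cY,\floorenv{t/2}}$. We aim to show $\ed{2e}{\cX,\cY}\le t$, which contradicts the characterization.

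First we unpack the ball membership. By the definition in \Cref{eq_dfnball1}, $\cZ\in\eB{e}{\cX,\floorenv{t/2}}$ means $\ed{e}{\cX,\cZ}=s$ for some $0\le s\le\floorenv{t/2}$. The case $e=0$ is degenerate, since the ball is then $\{\cX\}$; there we simply get $\cZ=\cX$ and $\cZ=\cY$, contradicting $\cX\ne\cY$. For $e>0$ we have $\ed{e}{\cX,\cZ}\le\floorenv{t/2}$ and, likewise, $\ed{e}{\cZ,\cY}\le\floorenv{t/2}$. In particular, both quantities are finite.

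Next we apply \Cref{lem_trinequality} with the parameter $2e$ in place of $e$, noting that $\floorenv{2e/2}=e$. The lemma is stated for matrices in $\Sigma_w^{m\times n}$, and all three of $\cX,\cY,\cZ$ lie there. This gives
$$
\ed{2e}{\cX,\cY}\le\ed{e}{\cX,\cZ}+\ed{e}{\cZ,\cY}\le 2\floorenv{t/2}\le t.
$$
This contradicts $\ed{2e}{\cX,\cY}\ge t+1$, so the two balls are disjoint.

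The only delicate points are bookkeeping. The function $\ed{e}{\cdot,\cdot}$ is symmetric directly from \Cref{dfn_etdistance}, so the order of arguments does not matter. The $e=0$ convention for the ball has to be handled separately. Finally, the index in the lemma must be matched correctly: we use distance index $2e$ on the left and $e$ on the right, which is exactly what the lemma provides.
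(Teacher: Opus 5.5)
Your proof is correct and is the same as the paper's: you take $\cZ$ in the intersection and apply \Cref{lem_trinequality} with index $2e$, which gives $\ed{2e}{\cX,\cY}\le 2\floorenv{t/2}\le t$ and contradicts $\ed{2e}{\cC}\ge t+1$. Your separate handling of $e=0$ is harmless but unnecessary, since a $(t,e)$-CAECC already has $e\ge 1$.
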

\begin{IEEEproof}
Suppose on the contrary that $\eB{e}{\cX,\floorenv{t/2}}\cap\eB{e}{\cY,\floorenv{t/2}}\ne\emptyset$. Let $\cZ\in\eB{e}{\cX,\floorenv{t/2}}\cap\eB{e}{\cY,\floorenv{t/2}}$. Then by \Cref{lem_trinequality}, $\ed{2e}{\cX,\cY}\le\ed{e}{\cX,\cZ}+\ed{e}{\cY,\cZ}\le t$, which contradicts the fact that $\cC$ is a $(t,e)$-CAECC.
\end{IEEEproof}

If $r\le m$, by definition, it is easy to see that $\abs{\eS{e}{\cX,r}}=\abs{\eS{(e-1)}{\cX,r}}$ when $e$ is odd and
\begin{equation}\label{eq_spheresize1}
\abs{\eS{e}{\cX,r}}=\binom{m}{r}\sparenv{\sum_{l=1}^{\frac{e}{2}}\binom{w}{l}\binom{n-w}{l}}^r
\end{equation}
when $e\ge2$ is even. Since $\abs{\eS{e}{\cX,r}}$ is independent of $\cX$, we denote the size of any $\eS{e}{\cX,r}$ by $\vS{e}{r}$. Similarly, we denote the size of any $\eB{e}{\cX,r}$ by $\vB{e}{r}$.

\begin{theorem}\label{thm_uniquelybound1}
  Let $0<e\le w$ and $1\le t< m$. Suppose that $\cC\subseteq\Sigma_w^{m\times n}$ is a $(t,e)$-CAECC.
  \begin{itemize}
    \item(\textbf{Sphere-packing bound\cite[Theorem 2]{Omer2024ISIT}}) For all $t$, we have
    $$
    \abs{\cC}\le\frac{\binom{n}{w}^m}{\sum_{s=0}^{\floorenv{t/2}}\binom{m}{s}\sparenv{\sum_{l=1}^{\floorenv{e/2}}\binom{w}{l}\binom{n-w}{l}}^s}.
    $$
    \item(\textbf{Johnson bound}) If $t$ is odd, then
    $$
    \abs{\cC}\le\frac{\binom{n}{w}^m}{\sum_{s=0}^{(t-1)/2}\binom{m}{s}\sparenv{\sum_{l=1}^{\floorenv{e/2}}\binom{w}{l}\binom{n-w}{l}}^s+\frac{\binom{m}{(t+1)/2}\sparenv{\sum_{l=1}^{\floorenv{e/2}}\binom{w}{l}\binom{n-w}{l}}^{(t+1)/2}}{\floorenv{\frac{m}{(t+1)/2}}}}.
    $$
  \end{itemize}
\end{theorem}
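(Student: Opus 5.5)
The sphere-packing part follows directly from \Cref{cor_packingradius}. The balls $\eB{e}{\cX,\floorenv{t/2}}$, $\cX\in\cC$, are pairwise disjoint subsets of $\Sigma_w^{m\times n}$. Each has size $\vB{e}{\floorenv{t/2}}=\sum_{s=0}^{\floorenv{t/2}}\vS{e}{s}$. By \Cref{eq_spheresize1}, applied with $e$ replaced by the even number $2\floorenv{e/2}$ (recall $\vS{e}{r}=\vS{e-1}{r}$ for odd $e$), we have $\vS{e}{s}=\binom{m}{s}\sparenv{\sum_{l=1}^{\floorenv{e/2}}\binom{w}{l}\binom{n-w}{l}}^s$. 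Summing over $\cC$ gives $\abs{\cC}\,\vB{e}{\floorenv{t/2}}\le\binom{n}{w}^m$, which is the first bound.

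For the Johnson bound with $t=2r+1$, I follow the classical Johnson argument and count, in addition to the disjoint balls of radius $r$, the points of $\Sigma_w^{m\times n}$ at distance exactly $r+1$ from codewords. Let $\cS$ be the set of matrices $\cZ$ not in any ball $\eB{e}{\cX,r}$ but lying in $\eS{e}{\cX,r+1}$ for some codeword $\cX$. First, every $\cZ\in\eS{e}{\cX,r+1}$ with $\cX\in\cC$ lies outside all radius-$r$ balls: if $\ed{e}{\cY,\cZ}\le r$ for a codeword $\cY\neq\cX$, then \Cref{lem_trinequality} gives $\ed{2e}{\cX,\cY}\le 2r+1=t$, a contradiction. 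So double counting the pairs $(\cX,\cZ)$ with $\cX\in\cC$, $\cZ\in\eS{e}{\cX,r+1}$ gives $\abs{\cC}\vS{e}{r+1}\le N\abs{\cS}$, where $N$ bounds the number of codewords at distance exactly $r+1$ from a fixed $\cZ$. Then $\abs{\cC}\parenv{\vB{e}{r}+\vS{e}{r+1}/N}\le\binom{n}{w}^m$, and it remains to show $N\le\floorenv{m/(r+1)}$.

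\textbf{Main step: bounding $N$.} Fix $\cZ$ and let $\cX^{(1)},\ldots,\cX^{(N)}\in\cC$ satisfy $\ed{e}{\cX^{(j)},\cZ}=r+1$. Let $D_j\subseteq[m]$, $\abs{D_j}=r+1$, be the set of rows where $\cX^{(j)}$ differs from $\cZ$. I claim the $D_j$ are pairwise disjoint, which gives $N(r+1)\le m$. Suppose $i\in D_j\cap D_k$ with $j\neq k$. Each row of $\cX^{(j)}$ and of $\cX^{(k)}$ is within $2\floorenv{e/2}\le e$ of the corresponding row of $\cZ$, so corresponding rows of $\cX^{(j)}$ and $\cX^{(k)}$ are within $2e$ of each other. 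Hence $\ed{2e}{\cX^{(j)},\cX^{(k)}}\le\abs{D_j\cup D_k}\le 2(r+1)-1=t$. This contradicts $\ed{2e}{\cC}\ge t+1$ from \cite[Lemma 1]{Omer2024ISIT}. This is exactly the counting in the proof of \Cref{lem_trinequality}, with $r_2\ge1$.

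The only delicate point is this disjointness step, in particular checking the case in which the shared row $i$ has $\cX^{(j)}_i=\cX^{(k)}_i$. In that case the union bound still applies, since the $2e$-distance counts only rows where the two matrices differ, and these form a subset of $D_j\cup D_k$. Finally, substituting the value of $\vS{e}{r+1}$ from \Cref{eq_spheresize1} (again with $e$ replaced by $2\floorenv{e/2}$) gives the stated formula.
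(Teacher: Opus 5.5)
Your proposal is correct and is essentially the paper's argument: your set $\cS$ is exactly the paper's $\cN=\mathset{\cY:\ed{e}{\cY,\cC}=\floorenv{t/2}+1}$, your double count of pairs $(\cX,\cZ)$ is the paper's count of $\cP$, and your proof that the row sets $D_j$ are pairwise disjoint is the paper's proof that $I_1\cap I_2=\emptyset$. The paper likewise gets the sphere-packing part from the inequality $\abs{\cC}\vB{e}{\floorenv{t/2}}+\abs{\cN}\le\binom{n}{w}^m$ by dropping $\abs{\cN}$.
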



\begin{IEEEproof}
Our main goal is to prove the Johnson bound and the sphere-packing bound is just a byproduct.
For $\cY\in\Sigma_w^{m\times n}$, define $\ed{e}{\cY,\cC}=\min\mathset{\ed{e}{\cY,\cX}:\cX\in\cC}$. Let $\cN=\mathset{\cY\in\Sigma_w^{m\times n}:\ed{e}{\cY,\cC}=\floorenv{\frac{t}{2}}+1}$. Then $\cN\cap\eB{e}{\cX,\floorenv{t/2}}=\emptyset$ for all $\cX\in\cC$. By \Cref{cor_packingradius}, we have
\begin{equation}\label{eq_uniqueJohnsonbound1}
  \abs{\cC}\vB{e}{\floorenv{t/2}}+\abs{\cN}\le\binom{n}{w}^m.
\end{equation}
In order to upper bound $\abs{\cC}$, it is sufficient to derive a lower bound on $\abs{\cN}$. 
The statement regarding the sphere-packing in the theorem follows from \Cref{eq_dfnball1,eq_spheresize1,eq_uniqueJohnsonbound1} by observing that $|\cN| \geq 0$. In the following, we follow similar ideas in the proof of \cite[Theorem 2.3.8]{huffman_pless_2003} to improve the sphere-packing bound by identifying a tighter lower bound on $|\cN|$ when $t$ is odd. 
Let
$$
\cP=\mathset{\parenv{\cX,\cY}\in\cC\times\cN:~\ed{e}{\cX,\cY}=\frac{t+1}{2}}.
$$

For each $\cX\in\cC$, there are exactly $\vS{e}{\frac{t+1}{2}}$ $\cY$'s such that $\ed{e}{\cY,\cX}=(t+1)/2$. Suppose $\ed{e}{\cY,\cX}=\frac{t+1}{2}$, where $\cX\in\cC$. Then for any $\cX^\prime\in\cC$ and $\cX^\prime\ne\cX$, we have $t+1\le\ed{2e}{\cX,\cX^\prime}\le\ed{e}{\cX,\cY}+\ed{e}{\cY,\cX^\prime}$ by \Cref{lem_trinequality}, which implies $\ed{e}{\cY,\cX^\prime}\ge\frac{t+1}{2}$ and so $\cY\in\cN$. Therefore, we conclude that
\begin{equation}\label{eq_uniqueJohnsonbound2}
  \abs{\cP}=\abs{\cC}\vS{e}{\frac{t+1}{2}}.
\end{equation}

For each $\cY\in\cN$, suppose $\cX,\cX^\prime\in\cC$ such that $\cX\ne\cX^\prime$ and $\ed{e}{\cX,\cY}=\ed{e}{\cX^\prime,\cY}=\frac{t+1}{2}$. Then we have $t+1\le\ed{2e}{\cX,\cX^\prime}\le\ed{e}{\cX,\cY}+\ed{e}{\cX^\prime,\cY}=t+1$ and hence $\ed{2e}{\cX,\cX^\prime}=t+1$. By \Cref{dfn_etdistance}, there are two subsets $I_1,I_2\subseteq[m]$ with $\abs{I_1}=\abs{I_2}=\frac{t+1}{2}$, such that $\cX_i\ne\cY_i$ if and only if $i\in I_1$, and $\cX_i^\prime\ne\cY_i$ if and only if $i\in I_2$. This implies that $\cX_i\ne\cX_i^\prime$ only if $i\in I_1\cup I_2$. Then it follows that $t+1=\ed{2e}{\cX,\cX^\prime}\le\abs{I_1}+\abs{I_2}-\abs{I_1\cap I_2}=t+1-\abs{I_1\cap I_2}$, which implies $I_1\cap I_2=\emptyset$. Therefore, for each $\cY\in\cN$, there are at most $\floorenv{\frac{m}{(t+1)/2}}$ codewords $\cX\in\cC$ such that $\parenv{\cX,\cY}\in\cP$. Hence,
\begin{equation}\label{eq_uniqueJohnsonbound3}
 \abs{\cP}\le\abs{\cN}\floorenv{\frac{m}{(t+1)/2}}.
\end{equation}
Lastly, the second conclusion follows from \Cref{eq_dfnball1,eq_spheresize1,eq_uniqueJohnsonbound1,eq_uniqueJohnsonbound2,eq_uniqueJohnsonbound3}.
\end{IEEEproof}

Next we show by how much the Johnson bound improves the sphere-packing bound. Suppose that $t\ge3$ is odd. Let $\sigma_1=\sum_{s=0}^{\floorenv{t/2}}\binom{m}{s}\sparenv{\sum_{l=1}^{\floorenv{e/2}}\binom{w}{l}\binom{n-w}{l}}^s$ and  $\sigma_2=\frac{\binom{m}{(t+1)/2}\sparenv{\sum_{l=1}^{\floorenv{e/2}}\binom{w}{l}\binom{n-w}{l}}^{(t+1)/2}}{\floorenv{\frac{m}{(t+1)/2}}}$. Then, the Johnson bound improves the sphere packing bound by a factor $(\sigma_1+\sigma_2)/\sigma_1$. Since $\sigma_1\le\frac{t+1}{2}\binom{m}{(t-1)/2}\sparenv{\sum_{l=1}^{\floorenv{e/2}}\binom{w}{l}\binom{n-w}{l}}^{\frac{t-1}{2}}$, it follows that
\begin{align*}
\frac{\sigma_1+\sigma_2}{\sigma_1}=1+\frac{\sigma_2}{\sigma_1}&\ge 1+\frac{\binom{m}{(t+1)/2}\sparenv{\sum_{l=1}^{\floorenv{e/2}}\binom{w}{l}\binom{n-w}{l}}^{(t+1)/2}}{m\binom{m}{(t-1)/2}\sparenv{\sum_{l=1}^{\floorenv{e/2}}\binom{w}{l}\binom{n-w}{l}}^{(t-1)/2}}\\
&=1+\frac{t-1}{t+1}\parenv{1-\frac{t-1}{2m}}\sum_{l=1}^{\floorenv{e/2}}\binom{w}{l}\binom{n-w}{l}\\
&\ge 1+\frac{1}{2}\cdot\frac{t-1}{t+1}\sum_{l=1}^{\floorenv{e/2}}\binom{w}{l}\binom{n-w}{l}.
\end{align*}
For example, when $t=3$ and $e=2$, we have $(\sigma_1+\sigma_2)/\sigma_1\ge 1+w(n-w)/4\ge (n+3)/4$.

\subsection{Constructions---improving the outer code}\label{subsection_outer}
When $t,n,w,e$ are assumed to be constants while $m$ is a variable, \Cref{thm_uniquelybound1} implies $\rho\parenv{\cC}\ge\floorenv{t/2}\log(m)+\Omega(1)$ for any $[m,(n,w);t,e]$-composite code $\cC$. The codes in both \Cref{thm_pre1} and \cite[Corollary 1]{Omer2024ISIT} have redundancy $t\log(m)+O(1)$. In this section, we give a construction which reduces the redundancy to $(t-1)\log(m)+O(1)$ for all $t\ge 2$. This implies that when $t=2$, there are codes which are optimal up to a constant in terms of redundancy. Furthermore, when $m$ is a prime power (this prime power depends on $n$ and $e$) and $t\ge3$, we further improve the redundancy to $\parenv{t-1-k}\log(m)+O(1)$, where $k=\floorenv{\frac{t-1}{q}}$ or $\frac{1}{t-1}$.

Before proceeding, we briefly explain our idea. Recall that in \Cref{constuction_pre1}, it is required that the vector $f\parenv{\cX}$ belongs to a coset $\cC_t$ of an $[m,k,\ge t+1]$ linear code. Let $\mathbf{H}$ be a parity-check matrix of this linear code and $\cC_t=\mathset{\bm{c}\in\mathbb{F}_q^m:\mathbf{H}\cdot \bm{c}^T=\bm{b}^T}$, where $\bm{b}\in\mathbb{F}_q^{m-k}$. Then $f\parenv{\cX}\in\cC_t$ is equivalent to
\begin{equation}\label{eq_idea1}
  \mathbf{H}\cdot f\parenv{\cX}^T=\bm{b}^T.
\end{equation}
In other words, the vector $f\parenv{\cX}$ should satisfy $m-k$ equations over the field $\mathbb{F}_q$ given in (\ref{eq_idea1}). Therefore, the redundancy can only be guaranteed to be at most $(m-k)\log(q)$ (note that $m$ is a function of $q$). Since $\mathbf{H}$ is a parity-check matrix of a linear code, we can assume that the first row of $\mathbf{H}$ is $\parenv{1,1,\ldots,1}$. In other words, it is a vector consisting of $1$'s. To further reduce the redundancy when $m$ is large, we can replace the first equation in (\ref{eq_idea1}) by $\sum_{r=0}^{m-1}f_{p,e}\parenv{\cX_r}\equiv a\pmod{N}$, where $N$ is independent of $m$ and $a\in\mathbb{Z}_N$ is given. Then,  (\ref{eq_idea1}) can be replaced by
\begin{equation*}
  \begin{aligned}
  \sum_{r=0}^{m-1}f_{p,e}\parenv{\cX_r}\equiv a\pmod{N},\\
  \mathbf{H}^\prime\cdot f\parenv{\cX}^T=\parenv{\bm{b}^\prime}^T,
  \end{aligned}
\end{equation*}
where $\mathbf{H}^\prime$ is a matrix obtained from $\mathbf{H}$ by deleting the first row and $\bm{b}^\prime\in\mathbb{F}_q^{m-k-1}$. As will be clear later, the number $N$ only depends on $n$, $e$ and $t$, which means it can be treated as a constant. This implies that the redundancy of the resulting construction is at most
$(m-k-1)\log(q)+\log(N)=(m-k-1)\log(q)+O(1)$.

\bigskip
\subsubsection{Constructions for general $t$}
\begin{construction}[$t\ge2$]\label{constuction_uniquegeneral}
  Let $m,n>1$, $1\le w<n$, $1\le e\le w$ and $2\le t<m$. Let $p$ be the smallest prime in $\sparenv{n,2n}$ and $q$ be a prime and $q\ge\max\mathset{m,p^e}$.
  For any $\bm{a}=\parenv{a_0,a_1,a_2,\ldots,a_{t-1}}\in\mathbb{Z}_{tp^e}\times\mathbb{Z}_{q}^{t-1}$, the code $\cU$ is defined as
  $$
  \cU\triangleq\mathset{\cX\in\Sigma_{w}^{m\times n}:~\sum_{r=0}^{m-1}f_{p,e}\parenv{\cX_{r}}\equiv a_0\pmod{tp^e},\sum_{r=0}^{m-1}r^{\ell}\cdot f_{p,e}\parenv{\cX_r}\equiv a_{\ell}\pmod{q},\forall\ell\in\sparenv{1,t-1}}.
  $$
\end{construction}

\begin{theorem}\label{thm_uniquegeneral}
  The code $\cU$ is a $(t,e)$-CAECC. When $m\ge p^e$, there is an $\bm{a}$ such that the redundancy is at most $(t-1)\log(q)+e\log(p)+\log(t)\in\sparenv{(t-1)\log(m)+e\log(n)+\log(t),(t-1)\log(m)+e\log(n)+\log(t)+t-1+e}$. In particular, when $t=2$, the redundancy is optimal up to a constant.
\end{theorem}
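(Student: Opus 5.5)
By \Cref{prop_main}, it suffices to show that the constraints defining $\cU$ determine the erased values $f_{p,e}(\cX_{i_1}),\ldots,f_{p,e}(\cX_{i_t})$, given all other entries of $f(\cX)$ and the erasure positions $i_1<\cdots<i_t$. Write $u_j=f_{p,e}(\cX_{i_j})\in[0,p^e-1]$. Since the non-erased terms are known, from $\cY$ and $\bm{a}$ we can compute
$$S_0\equiv\sum_{j=1}^t u_j \pmod{tp^e}$$
and
$$S_\ell\equiv\sum_{j=1}^t i_j^{\ell}u_j\pmod q \quad\text{for } 1\le \ell\le t-1.$$

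\textbf{Step 1: recover $\sum_j u_j$ over $\mathbb{Z}$.} Each $u_j$ lies in $[0,p^e-1]$, so $0\le\sum_j u_j\le t(p^e-1)<tp^e$. Hence the residue $S_0$ modulo $tp^e$ equals the integer sum exactly, and in particular we know $\sum_j u_j \bmod q$.

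\textbf{Step 2: solve a Vandermonde system.} We now know $\sum_j i_j^{\ell}u_j \bmod q$ for all $0\le\ell\le t-1$. The coefficient matrix $(i_j^{\ell})$ is a $t\times t$ Vandermonde matrix over $\mathbb{F}_q$. The $i_j$ are distinct elements of $[m]$ and $q\ge m$, so they are distinct modulo $q$ and the matrix is invertible. This determines each $u_j \bmod q$. Since $0\le u_j\le p^e-1<q$, the residue equals $u_j$ itself. Applying \Cref{fact_pre1} and \Cref{lem_pre2} then recovers the erroneous rows. The point that needs care is exactly this interplay of moduli: the $tp^e$ modulus makes the first equation exact, and $q\ge\max\{m,p^e\}$ makes both the Vandermonde inversion and the lift from residues to integers valid. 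I expect no real obstacle beyond checking these two conditions.

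\textbf{Step 3: redundancy.} The codes $\cU$ indexed by $\bm{a}\in\mathbb{Z}_{tp^e}\times\mathbb{Z}_q^{t-1}$ partition $\Sigma_w^{m\times n}$. By pigeonhole, some $\bm{a}$ gives
$$|\cU|\ge \binom{n}{w}^m\Big/\bigl(tp^eq^{t-1}\bigr),$$
so the redundancy is at most $(t-1)\log q+e\log p+\log t$. For the stated interval, choose $q$ as the smallest prime in $[m,2m]$ by Bertrand's postulate; this is valid since $m\ge p^e$ guarantees $q\ge\max\{m,p^e\}$. Together with $n<p<2n$, this gives the bounds $\log m\le\log q\le\log m+1$ and $\log n\le\log p\le\log n+1$. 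Finally, for $t=2$ the redundancy is $\log m+O(1)$, which matches the lower bound $\floorenv{t/2}\log(m)+\Omega(1)$ derived from \Cref{thm_uniquelybound1}. Hence the construction is optimal up to an additive constant.
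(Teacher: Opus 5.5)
Your argument for decoding and for the redundancy count is correct and is the paper's own proof. The modulus $tp^e$ turns the first constraint into an exact integer equation. Its reduction modulo $q$, together with the other $t-1$ constraints, gives a Vandermonde system over $\mathbb{F}_q$, which is invertible because $q\ge m$. Finally $q\ge p^e$ lifts the residues back to the values $f_{p,e}(\cX_{i_j})$. Your pigeonhole and Bertrand computation makes explicit what the paper leaves implicit.

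The step that fails is the optimality clause for $t=2$ when $e=1$, an oversight you inherit from the paper. For $e=1$ we have $\floorenv{e/2}=0$, so the inner sum $\sum_{l=1}^{\floorenv{e/2}}\binom{w}{l}\binom{n-w}{l}$ in \Cref{thm_uniquelybound1} is empty. Both bounds there then collapse to $\abs{\cC}\le\binom{n}{w}^m$. The reason is that two rows of weight $w$ are never at Hamming distance $1$, so the balls $\eB{1}{\cX,r}$ used in that proof are singletons. Hence \Cref{thm_uniquelybound1} gives no $\log(m)$ lower bound in this case.

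A direct double count closes the gap for every $e$. Count pairs $(\cX,\cY)$ with $\cX\in\cC$, $\cY\in\Sigma_w^{m\times n}$ and $\ed{2}{\cX,\cY}=1$; each codeword lies in exactly $m\,w(n-w)$ such pairs. Suppose $\cY$ is paired with codewords $\cX\ne\cX'$ through different rows. Then $\cX$ and $\cX'$ differ in exactly two rows, each at Hamming distance $2$, so $\ed{2e}{\cX,\cX'}=2$, contradicting $\ed{2e}{\cC}\ge3$. So all codewords paired with $\cY$ differ from $\cY$, and from each other, only in a common row $i$. Their $i$-th rows must be pairwise at Hamming distance greater than $2e$, and that distance is at most $4$.
\begin{itemize}
\item For $e\ge2$ this allows at most one codeword.
\item For $e=1$, each such row is obtained from $\cY_i$ by moving a $1$ from some position $a$ to some position $b$, with all the $a$'s distinct and all the $b$'s distinct. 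So there are at most $\min\{w,n-w\}$ of them.
\end{itemize}
Therefore $\abs{\cC}\,m\,w(n-w)\le\min\{w,n-w\}\binom{n}{w}^m$, i.e. $\rho(\cC)\ge\log(m)+\log\max\{w,n-w\}$. This matches your upper bound $\log(m)+O(1)$ for all $1\le e\le w$.
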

\begin{IEEEproof}
Let $\cY$ be obtained from a codeword $\cX$ by a $(t,e)$-composite asymmetric error. Comparing $\cY$ with $\cX$, we can know which rows suffered from errors. By \Cref{rmk_exactte}, suppose that there are $t$ erroneous rows $\cX_{i_1},\ldots,\cX_{i_t}$. Let $a_0^\prime=\parenv{a_0-\sum_{r\ne i_1,\ldots,i_t}f_{p,e}\parenv{\cX_{r}}\pmod{tp^e}}$ and $a_{\ell}^\prime=\parenv{a_{\ell}-\sum_{r\ne i_1,\ldots,i_t}r^{\ell}\cdot f_{p,e}\parenv{\cX_{r}}\pmod{q}}$ for all $1\le\ell\le t-1$. Then
\begin{equation}\label{eq_uniquegeneral1}
\left\{
  \begin{array}{c}
    f_{p,e}(\cX_{i_1})+\cdots + f_{p,e}(\cX_{i_t})=a_0^\prime\\
    i_1\cdot f_{p,e}(\cX_{i_1})+\cdots +i_t\cdot f_{p,e}(\cX_{i_t})\equiv a_1^\prime\pmod{q}\\
    \vdots\\
    i_1^{t-1}\cdot f_{p,e}(\cX_{i_1})+\cdots +i_t^{t-1}\cdot f_{p,e}(\cX_{i_t})\equiv a_{t-1}^\prime\pmod{q}
  \end{array}
  \right..
\end{equation}
The first line in \Cref{eq_uniquegeneral1} follows from the fact that $0\le a_0^\prime<tp^e$ and $0\le \parenv{a_0-\sum_{r\ne i_1,\ldots,i_t}f_{p,e}\parenv{\cX_{r}}\pmod{tp^e}}<tp^e$. By \Cref{eq_uniquegeneral1}, we have
 \begin{equation}\label{eq_uniquegeneral2}
\left\{
  \begin{array}{c}
    f_{p,e}(\cX_{i_1})+\cdots + f_{p,e}(\cX_{i_t})\equiv a_0^\prime\pmod{q}\\
    i_1\cdot f_{p,e}(\cX_{i_1})+\cdots +i_t\cdot f_{p,e}(\cX_{i_s})\equiv a_1^\prime\pmod{q}\\
    \vdots\\
    i_1^{t-1}\cdot f_{p,e}(\cX_{i_1})+\cdots +i_t^{t-1}\cdot f_{p,e}(\cX_{i_t})\equiv a_{t-1}^\prime\pmod{q}
  \end{array}
  \right..
\end{equation}
The coefficient matrix of system (\ref{eq_uniquegeneral2}) is
$$
A=
\begin{pmatrix}
  1 & 1 & \cdots & 1 \\
  i_1 & i_2 & \cdots & i_t \\
  \vdots & \vdots & \cdots & \vdots \\
  i_1^{t-1} & i_2^{t-1} & \cdots & i_t^{t-1}
\end{pmatrix}.
$$
Since $q\ge m$, we have $i_{j}\not\equiv i_{k}\pmod{q}$ for all $1\le j\ne k\le t$. Hence, the matrix $A$ is a Vandermonde matrix in the finite field $\mathbb{F}_q$. Therefore, system (\ref{eq_uniquegeneral2}) has a unique solution in $\mathbb{F}_q$: $\parenv{f_{p,e}(\cX_{i_1})\pmod{q},\ldots,f_{p,e}(\cX_{i_t})\pmod{q}}$. Since $q\ge p^e$, we conclude that $f_{p,e}(\cX_{i_j})=f_{p,e}(\cX_{i_j})\pmod{q}$ for all $1\le j\le t$. Now according to \Cref{prop_main}, the proof is completed.
\end{IEEEproof}

\begin{remark}
Comparing with \cite[Theorem 2]{Omer2024ISIT}, when $t=2$, the redundancy of $\cC$ is larger than the lower bound by at most $e\log\frac{n}{w(n-w)}+2\parenv{e\log(e)+1}+e$. When $e$ is a constant and $w(n-w)=\Theta\parenv{n}$, $\cC$ is optimal up to a constant in terms of redundancy.
\end{remark}

Next, we give another construction of optimal $(2,e)$-CAECCs when $\gcd\parenv{m,(p^e-1)!}=1$. This construction will improve the redundancy by roughly one bit.
\begin{lemma}\label{lem_uniquetis2}
Let $Q>1,m>1$ be two positive integers, $0\le\delta_1,\delta_2<Q$  and $0\le i<j< m$. Suppose that $i,j$ are known.
If $\gcd\parenv{m,(Q-1)!}=1$, given $\parenv{i\delta_1+j\delta_2}\pmod{m}$ and $\delta_1+\delta_2$, the values of $\delta_1$ and $\delta_2$ can be recovered.
\end{lemma}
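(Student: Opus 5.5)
Let $S=\delta_1+\delta_2$, which is known, and let $R\equiv i\delta_1+j\delta_2\pmod m$, also known. Substituting $\delta_2=S-\delta_1$ gives
$$
R-jS\equiv (i-j)\delta_1 \pmod m .
$$
Everything on the left is known, so the problem reduces to showing that the congruence $(i-j)x\equiv c\pmod m$ has at most one solution $x$ in the range of admissible values of $\delta_1$. Once $\delta_1$ is found, $\delta_2=S-\delta_1$ follows.

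The natural route is to invert $j-i$ modulo $m$. However, $j-i$ can be any integer in $[1,m-1]$, and $\gcd(m,(Q-1)!)=1$ only guarantees that $m$ has no prime factor at most $Q-1$. So $j-i$ need not be invertible, and I will argue differently, using the constraint on the unknowns.

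Suppose $\delta_1,\delta_1'$ are two candidates with $0\le \delta_1,\delta_1'<Q$, and set $\delta_2=S-\delta_1$, $\delta_2'=S-\delta_1'$, both also in $[0,Q)$. Then $(j-i)(\delta_1-\delta_1')\equiv 0\pmod m$, where $D=\delta_1-\delta_1'$ satisfies $|D|\le Q-1$. Since $j-i$ is not controlled, this congruence alone is not enough. What we do know is that $|D|\le Q-1$, so every prime divisor of a nonzero $D$ is at most $Q-1$, and hence $\gcd(D,m)=1$. That yields $m\mid (j-i)$, which contradicts $1\le j-i\le m-1$ unless $D=0$.

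To make this precise: from $m\mid (j-i)D$ and $\gcd(m,D)=1$ (since $D$ divides $(Q-1)!$ when $1\le |D|\le Q-1$), we get $m\mid (j-i)$. This is impossible because $0<j-i<m$. Therefore $D=0$, so $\delta_1$ is unique, and $\delta_2=S-\delta_1$ is determined as well.

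Decoding in practice means searching $\delta_1\in[\max(0,S-Q+1),\min(S,Q-1)]$ for the unique value satisfying $(i-j)\delta_1\equiv R-jS\pmod m$. The main subtlety is recognizing that one should apply the coprimality to $D$, not to $j-i$. Everything else is a routine check.
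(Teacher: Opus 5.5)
Your proposal is correct and follows essentially the same route as the paper: both take two candidate pairs with the same sum and residue, obtain $(i-j)(\delta_1-\sigma_1)\equiv 0 \pmod m$, use $0<|\delta_1-\sigma_1|\le Q-1$ together with $\gcd\parenv{m,(Q-1)!}=1$ to get $\gcd(m,\delta_1-\sigma_1)=1$, and conclude $i\equiv j\pmod m$, a contradiction. Your remark that coprimality must be applied to the difference $D$ rather than to $j-i$ is exactly the point on which the paper's argument turns.
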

\begin{IEEEproof}
Suppose that there are $0\le\sigma_1,\sigma_2<Q$ such that
\begin{equation*}
  \left\{
  \begin{array}{l}
    i\delta_1+j\delta_2\equiv i\sigma_1+j\sigma_2 \pmod{m}, \\
    \delta_1+\delta_2=\sigma_1+\sigma_2.
  \end{array}
  \right.
\end{equation*}
Then we have
\begin{equation}\label{eq_uniqueis2}
  (i-j)(\delta_1-\sigma_1)\equiv0\pmod{m}.
\end{equation}

Suppose $\delta_1\ne\sigma_1$. Then we have $-(Q-1)\le\delta_1-\sigma_1<0$ or $0<\delta_1-\sigma_1\le Q-1$. Since $\gcd\parenv{m,(Q-1)!}=1$, it holds that $\gcd(m,\delta_1-\sigma_1)=1$. Now it follows from \Cref{eq_uniqueis2} that $i\equiv j\mod{m}$, which contradicts the fact that $0\le i<j<m$. Therefore, we can conclude that $\delta_1=\sigma_1$ and $\delta_2=\sigma_2$.
\end{IEEEproof}

\begin{construction}[$t=2$]\label{constuction_uniquedecoding2}
Let $m,n>1$, $1\le w<n$, $1\le e\le w$. Let $p$ be the smallest prime in $\sparenv{n,2n}$.
Suppose $\gcd\parenv{m,(p^e-1)!}=1$. For any $\bm{a}=\parenv{a_1,a_2}\in\mathbb{Z}_{2p^e}\times\mathbb{Z}_{m}$, the code $\cU^\prime$ is defined as
  $$
  \cU^\prime\triangleq\mathset{\cX\in\Sigma_{w}^{m\times n}:\sum_{r=0}^{m-1}f_{p,e}\parenv{\cX_r}\equiv a_1\pmod{2p^e},\sum_{r=0}^{m-1}r\cdot f_{p,e}\parenv{\cX_{r}}\equiv a_2\pmod{m}}.
  $$
\end{construction}

\begin{theorem}\label{thm_unique2}
  The code $\cU^\prime$ is a $(2,e)$-CAECC. And there is some $\bm{a}$ such that the redundancy of $\cU_{2}^\prime$ is at most $\log(m)+e\log(p)+1\in\sparenv{\log(m)+e\log(n)+1, \log(m)+e\log(n)+e+1}$.
\end{theorem}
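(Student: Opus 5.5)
The plan is to mirror the proof of \Cref{thm_uniquegeneral}, with \Cref{lem_uniquetis2} replacing the Vandermonde argument. I will do the correctness part first and the redundancy count second.

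\textbf{Correctness.} Let $\cY$ be obtained from $\cX\in\cU^\prime$ by a $(2,e)$-CAE.
\begin{itemize}
\item By \Cref{rmk_exactte} we may assume exactly two erroneous rows $i<j$, and these are identifiable from the row weights.
\item Set $\delta_1=f_{p,e}(\cX_i)$ and $\delta_2=f_{p,e}(\cX_j)$. Both lie in $[0,p^e-1]$.
\item Subtract the known contributions of the other rows. This gives $a_1^\prime\equiv\delta_1+\delta_2\pmod{2p^e}$ and $a_2^\prime\equiv i\delta_1+j\delta_2\pmod{m}$.
\item Since $0\le\delta_1+\delta_2\le 2p^e-2<2p^e$, the reduced residue $a_1^\prime\in[0,2p^e-1]$ equals $\delta_1+\delta_2$ exactly as an integer. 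This is the same step as the first line of \Cref{eq_uniquegeneral1}.
\item Apply \Cref{lem_uniquetis2} with $Q=p^e$, using the hypothesis $\gcd(m,(p^e-1)!)=1$. It recovers $\delta_1$ and $\delta_2$.
\item \Cref{prop_main}, together with \Cref{fact_pre1} and \Cref{lem_pre2}, then recovers $\cX_i$ and $\cX_j$, so $\cX$ is decoded.
\end{itemize}

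\textbf{Redundancy.}
\begin{itemize}
\item The sets $\cU^\prime$, indexed by $\bm{a}\in\mathbb{Z}_{2p^e}\times\mathbb{Z}_m$, partition $\Sigma_w^{m\times n}$.
\item By pigeonhole, some $\bm{a}$ satisfies $|\cU^\prime|\ge\binom{n}{w}^m/(2p^e m)$. Hence the redundancy is at most $\log m+e\log p+1$.
\item Since $n<p<2n$, this lies in $[\log m+e\log n+1,\ \log m+e\log n+e+1]$.
\end{itemize}

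\textbf{Main obstacle.} The only delicate point is justifying that the first congruence determines $\delta_1+\delta_2$ as an integer and not merely modulo $2p^e$. This is exactly why the modulus is $2p^e$ rather than $p^e$; the bound $\delta_1+\delta_2<2p^e$ settles it. Everything else follows directly from the earlier results.
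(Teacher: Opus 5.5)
Your proposal is correct and follows the paper's proof. You recover $\delta_1+\delta_2$ exactly (it is below $2p^e$) and $i\delta_1+j\delta_2 \bmod m$, apply \Cref{lem_uniquetis2} with $Q=p^e$, and finish with \Cref{prop_main}; you also spell out the pigeonhole count over the $2p^e m$ choices of $\bm{a}$, which the paper leaves implicit. One small slip: $p$ is the smallest prime in $[n,2n]$, so $n\le p\le 2n$ (for example $p=n$ when $n$ is prime) rather than $n<p<2n$, but this still gives exactly the stated interval.
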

\begin{IEEEproof}
Let $\cY$ be obtained from a codeword $\cX$ by a $(2,e)$-CAE. Comparing $\cY$ with $\cX$, we can determine which rows suffered errors. Suppose that $\cX_i,\cX_j$ suffered errors, where $0\le i<j<m$. Let $Q=p^e$, $\delta_1=f_{p,e}\parenv{\cX_i}$ and $\delta_2=f_{p,e}\parenv{\cX_j}$. By computing  $\parenv{a_1-\sum_{r\ne i,j}f_{p,e}\parenv{\cX_{r}}}\pmod{2p^e}$ and $\parenv{a_2-\sum_{r\ne i,j}r\cdot f_{p,e}\parenv{\cX_{r}}}\pmod{m}$, we can find the values of $\parenv{i\delta_1+j\delta_2}\pmod{m}$ and $\delta_1+\delta_2$, respectively. Then, according to \Cref{lem_uniquetis2}, the values of $\delta_1$ and $\delta_2$, and hence values of $f_{p,e}\parenv{\cX_i}$ and $f_{p,e}\parenv{\cX_j}$, can be obtained. By \Cref{prop_main}, the original codeword can be uniquely recovered.
\end{IEEEproof}

Suppose $t=2$ and $\gcd\parenv{m,(p^e-1)!}=1$. The code sizes guaranteed by \Cref{thm_unique2} and \Cref{thm_uniquegeneral} are $\binom{n}{w}^m/(2p^em)$ and $\binom{n}{w}^m/(2p^eq)$, respectively. It is easy to see that \Cref{thm_unique2} improves the code size by a factor $q/m$, which is greater than $1$ when $m$ is not a prime.

\bigskip
\subsubsection{Constructions when $m$ is a Prime Power}\label{sec_primepower}
In \Cref{thm_uniquegeneral}, it was shown that there are $(t,e)$-CAECCs with redundancy at most $(t-1)\log(m)+O(1)$, which implies there are $(2,e)$-CAECCs with redundancy optimal up to a constant. Let $p$ be as in \Cref{thm_uniquegeneral} and $q$ is the smallest prime in $[p^e,2p^e]$. In this subsection, we will show that when $m=q^l$ for some $l\ge2$ and $t\ge 3$, the redundancy can be further reduced to $(t-1-k)\log(m)+O(1)$, where $k=\floorenv{\frac{t-1}{q}}$ or $\frac{1}{t-1}$.

\begin{construction}
Let $n>1$, $1\le w<n$, $1\le e\le w$. Let $p$ be the smallest prime in $\sparenv{n,2n}$ and $q$ be the smallest prime in $[p^e,2p^e]$. Let $m=q^l$ for some $l\ge2$ and $2\le t<m$. Choose a narrow-sense BCH code $\cC_{BCH}$ of length $q^l-1$ and designed distance $t$ over $\mathbb{F}_q$. Let $\mathbf{H}$ be a parity-check matrix of $\cC_{BCH}$. Suppose the number of rows in $\mathbf{H}$ is $\rho$. Let $\widetilde{\mathbf{H}}=\parenv{\mathbf{H},\mathbf{0}}$, i.e., appending a column of $0$s to the right of $\mathbf{H}$.

For any $a_0\in\mathbb{Z}_{tp^e}$ and $\bm{b}=\parenv{b_1,\ldots,b_{\rho}}\in\mathbb{F}_q^{\rho}$, the code $\cU_{1}$ is defined as
  $$
  \cU_{1}\triangleq\mathset{\cX\in\Sigma_{w}^{m\times n}:~\sum_{r=0}^{m-1}f_{p,e}\parenv{\cX_r}\equiv a_0\pmod{tp^e},\widetilde{\mathbf{H}}\cdot f\parenv{\cX}^T=\bm{b}^T}.
  $$
  Recall that $f\parenv{\cX}=\parenv{f_{p,e}\parenv{\cX_0},\ldots,f_{p,e}\parenv{\cX_{m-1}}}$.
\end{construction}

\begin{theorem}\label{thm_tlarge}
  The code $\cU_1$ is a $(t,e)$-CAECC. There exist some $a_0$ and $\bm{b}$ such that $\rho\parenv{\cU_1}\le\frac{\rho}{l}\log\parenv{m}+\log\parenv{tp^e}\le\parenv{t-1-\floorenv{\frac{t-1}{q}}}\log(m)+e\log(p)+\log(t)$.
  In particular, if $t\ge kq+1$ for some $k\ge 1$, then $\rho\parenv{\cU_1}\le\parenv{t-1-k}\log(m)+e\log(p)+\log(t)$.
\end{theorem}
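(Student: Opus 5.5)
The plan has two parts: correctness, then redundancy via pigeonhole and a bound on the BCH parity-check count $\rho$.

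\textbf{Correctness.} By \Cref{rmk_exactte} and \Cref{prop_main}, it suffices to recover $f_{p,e}(\cX_{i_1}),\ldots,f_{p,e}(\cX_{i_t})$ at the known erroneous rows $i_1<\cdots<i_t$. Index the rows by $\mathbb{F}_{q^l}$, with the last row (the appended zero column of $\widetilde{\mathbf{H}}$) corresponding to $0$ and the others to powers of a primitive element.

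We split into two cases according to whether the last row $m-1$ is among the erroneous rows.

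\emph{Case 1: row $m-1$ is not erroneous.} Subtract the known contributions of the correct rows from $\bm{b}$. The unknown vector, supported on $\{i_1,\ldots,i_t\}$, then satisfies $\mathbf{H}\bm{u}^T=\bm{b}'^T$. A narrow-sense BCH code of designed distance $t$ has minimum distance at least $t$, so any $t-1$ columns of $\mathbf{H}$ are independent. This handles only $t-1$ erasures, so one more equation is needed, and it comes from the sum modulo $tp^e$. Exactly as in the proof of \Cref{thm_uniquegeneral}, $0\le\sum_j f_{p,e}(\cX_{i_j})<tp^e$, so the residue modulo $tp^e$ determines the integer sum exactly, and hence the sum modulo $q$. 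Adding the all-ones row to $\mathbf{H}$ gives the parity-check matrix of the extended code. For a narrow-sense BCH code this is equivalent to appending the evaluation at $0$, i.e.\ the rows $(\alpha^{ij})$ for $j=0,\ldots,t-1$. The $t\times t$ submatrix on columns $i_1,\ldots,i_t$ is then a Vandermonde matrix (the powers $j=1,\ldots,t-1$ together with $j=0$), nonsingular over $\mathbb{F}_{q^l}$. Rather than working with the $\mathbb{F}_q$-rows of $\mathbf{H}$ directly, I will work over $\mathbb{F}_{q^l}$, where $\mathbf{H}$ is row-equivalent to the $\mathbb{F}_q$-expansion of $(\alpha^{ij})_{1\le j\le t-1}$. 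Since $q\ge p^e$, the unique solution modulo $q$ equals the actual values of $f_{p,e}$.

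\emph{Case 2: row $m-1$ is erroneous.} Its column in $\widetilde{\mathbf{H}}$ is zero, so the BCH equations involve only the remaining $t-1$ erased rows, and $t-1$ columns of $\mathbf{H}$ are independent. These rows are therefore recovered first. The sum equation then gives the value of row $m-1$ exactly.

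\textbf{Redundancy.} The $\Sigma_w^{m\times n}$ matrices are partitioned over the $tp^e q^\rho$ choices of $(a_0,\bm{b})$, so some code has redundancy at most $\rho\log q+\log(tp^e)=\frac{\rho}{l}\log m+e\log p+\log t$. It remains to bound $\rho\le l\,(t-1-\lfloor (t-1)/q\rfloor)$. The narrow-sense BCH code of designed distance $t$ has defining set the union of cyclotomic cosets $C_1,\ldots,C_{t-1}$ modulo $q^l-1$, each of size at most $l$. The coset of $iq$ equals that of $i$, so for $i\le t-1$ divisible by $q$ the coset is already counted. There are $\lfloor (t-1)/q\rfloor$ such $i$, which leaves at most $t-1-\lfloor (t-1)/q\rfloor$ distinct cosets and hence $\rho\le l(t-1-\lfloor(t-1)/q\rfloor)$. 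The ``in particular'' statement follows immediately.

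\textbf{Main obstacle.} I expect the main difficulty to be the Case~1 argument: showing that the $\mathbb{F}_q$ parity checks of the BCH code plus the single all-ones sum equation uniquely determine $t$ erasures. This rests on the extended narrow-sense BCH code having minimum distance at least $t+1$. I will justify it by working over $\mathbb{F}_{q^l}$ with the full Vandermonde system and noting that the $\mathbb{F}_q$-solution space coincides with it, since $\mathbf{H}$ and the defining-set rows have the same kernel over $\mathbb{F}_q$.
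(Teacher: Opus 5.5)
Your proposal is correct and follows the same route as the paper's proof. You recover the exact sum of the erased values from the modulo-$tp^e$ constraint, decode the $t$ erasures in the coset of the extended narrow-sense BCH code, apply the pigeonhole principle over the $tp^e q^{\rho}$ choices of $(a_0,\bm{b})$, and bound $\rho$ by counting cyclotomic cosets. Your case split with the Vandermonde argument over $\mathbb{F}_{q^l}$ also supplies a justification the paper omits, namely why the extended code has distance at least $t+1$. This is not automatic for extended codes in general, but it holds here because codewords of $\cC_{BCH}$ with zero coordinate sum have the $t$ consecutive zeros $\alpha^0,\ldots,\alpha^{t-1}$. The case split could even be dropped, since the appended column $(1,0,\ldots,0)^T$ is just the Vandermonde column at the node $0$.
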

\begin{IEEEproof}
Let $\cY$ be obtained from a codeword $\cX$ by a $(t,e)$-composite asymmetric error. Comparing $\cY$ with $\cX$, we can know which rows suffered from errors. Suppose that the erroneous rows are $\cX_{i_1},\ldots,\cX_{i_t}$. Since $0\le\sum_{j=1}^{t}f_{p,e}\parenv{\cX_{i_j}}<tp^e$, we can know the value of $\sum_{j=1}^{t}f_{p,e}\parenv{\cX_{i_j}}$ by calculating $\parenv{a_0-\sum_{r\ne i_1,\ldots,i_t}f_{p,e}\parenv{\cX_{r}}}\pmod{tp^e}$.  Denote $b_0=\sum_{r=0}^{m-1}f_{p,e}\parenv{\cX_r}\pmod{q}$. Let $$
\widehat{\mathbf{H}}=
\begin{pmatrix}
  1\cdots 1 \\
  \text{------} \\
  \widetilde{\mathbf{H}},
\end{pmatrix}
$$
i.e., adding a row of $1$s on top of $\widetilde{\mathbf{H}}$. Then we have
\begin{equation}\label{eq_extendedbch}
\widehat{\mathbf{H}}\cdot
f\parenv{\cX}^T
=
\begin{pmatrix}
  b_0 \\
  \bm{b}^T
\end{pmatrix}.
\end{equation}
Let $\widehat{\cC}_{BCH}$ be the extended code of $\cC_{BCH}$ (see \cite[page 38]{LintGTM86} for the definition of the extended code of a given code). Then, \Cref{eq_extendedbch} implies that the vector $f\parenv{\cX}$ is in the coset of $\widehat{\cC}_{BCH}$ with syndrome $\parenv{b_0,\bm{b}}$. Since $\cC_{BCH}$ has minimum Hamming distance at least $t$, its extended code $\widehat{\cC}_{BCH}$ (and each of its coset) has minimum Hamming distance at least $t+1$. Therefore, given $f\parenv{\cY}$ and $\parenv{b_0,\bm{b}}$, we can uniquely recover $f\parenv{\cX}$ and hence the codeword $\cX$.

By the pigeonhole principle, there exists a $a_0$ and $\bm{b}$, such that $\rho\parenv{\cU_3}\le\rho\log(q)+\log\parenv{tp^e}=\frac{\rho}{l}\log(m)+\log\parenv{tp^e}$.
Since the BCH code $\cC_{BCH}$ is narrow-sense and with designed distance $t$, we have
$$
\rho\le l\parenv{t-1-\floorenv{\frac{t-1}{q}}}.
$$
Now the proof is completed.
\end{IEEEproof}

Note that the code $\cU_1$ 
requires less redundancy than $\cU$ only when $t\ge q+1$. Next, we will give a construction which  
requires less redundancy than $\cU$
 when $3\le t\le q$ and $m=q^l$ where $l>(t-2)!$. Let $q\ge t$ be a prime and $l>(t-2)!$. A $q$-ary $\sparenv{m=q^l,m-\rho-1,\ge t+1}$ linear code with the following parity-check matrix
$$
\widehat{\mathbf{H}}=
\begin{pmatrix}
  1\cdots 1 & 1 \\
  \mathbf{H} & \mathbf{0}
\end{pmatrix}
$$
was constructed in\cite[Theorem 5]{Yekhanin2004IT}, where $\mathbf{H}$ is a $\rho\times(q^l-1)$ matrix over $\mathbb{F}_q$ and $\rho\le(t-2)l+\ceilenv{\frac{l}{t-1}}$.

\begin{construction}[$3\le t\le q$]
Let $n>1$, $1\le w<n$, $1\le e\le w$. Let $p$ be the smallest prime in $\sparenv{n,2n}$ and $q$ be the smallest prime in $[p^e,2p^e]$. Let $3\le t\le q$ and $m=q^l$ for some $l>(t-2)!$. Let $\mathbf{H}$ be as above and $\widetilde{\mathbf{H}}=\parenv{\mathbf{H},\mathbf{0}}$.

For any $a_0\in\mathbb{Z}_{tp^e}$ and $\bm{b}=\parenv{b_1,\ldots,b_{\rho}}\in\mathbb{F}_q^{\rho}$, the code $\cU_{2}$ is defined as
  $$
  \cU_{2}\triangleq\mathset{\cX\in\Sigma_{w}^{m\times n}:~\sum_{r=0}^{m-1}f_{p,e}\parenv{\cX_r}\equiv a_0\pmod{tp^e},\widetilde{\mathbf{H}}\cdot f\parenv{\cX}^T=\bm{b}^T}.
  $$
\end{construction}

The following theorem can be proved in a similar way as \Cref{thm_tlarge}.
\begin{theorem}
  The code $\cU_2$ is a $(t,e)$-CAECC and there exist some $a_0$ and $\bm{b}$ such that $\rho\parenv{\cU_2}\le\frac{\rho}{l}\log\parenv{m}+\log\parenv{tp^e}\le\parenv{t-2+\frac{1}{t-1}}\log(m)+\log(q)+\log(tp^e)$.
\end{theorem}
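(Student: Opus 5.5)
We follow the proof of \Cref{thm_tlarge}, replacing the extended BCH code by the Yekhanin code of \cite[Theorem 5]{Yekhanin2004IT}. The argument has two parts: correctness, then a pigeonhole count for the redundancy.

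\textbf{Correctness.} Let $\cY$ be obtained from a codeword $\cX\in\cU_2$ by a $(t,e)$-CAE. Comparing row weights with $w$ identifies the erroneous rows, and by \Cref{rmk_exactte} we may assume they are exactly $\cX_{i_1},\ldots,\cX_{i_t}$.
\begin{enumerate}
\item Since $0\le\sum_{j}f_{p,e}\parenv{\cX_{i_j}}<tp^e$, the exact integer value of this sum equals $\parenv{a_0-\sum_{r\ne i_1,\ldots,i_t}f_{p,e}\parenv{\cX_r}}\pmod{tp^e}$.
\item Reducing modulo $q$ gives the value $b_0\triangleq\sum_{r=0}^{m-1}f_{p,e}\parenv{\cX_r}\pmod{q}$.
\item The first row of $\widehat{\mathbf{H}}$ is all ones. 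The remaining rows are $\parenv{\mathbf{H},\mathbf{0}}=\widetilde{\mathbf{H}}$. Hence $\widehat{\mathbf{H}}\cdot f\parenv{\cX}^T=(b_0,\bm{b})^T$, and these two syndromes are the known $b_0$ and the $\bm{b}$ fixed by the construction.
\item Here $f\parenv{\cX}$ is viewed in $\mathbb{F}_q^m$. This is legitimate because $f_{p,e}(\cdot)<p^e\le q$, so reducing entries modulo $q$ loses nothing.
\end{enumerate}
Therefore $f\parenv{\cX}$ lies in a known coset of the $[m,m-\rho-1,\ge t+1]$ code with parity-check matrix $\widehat{\mathbf{H}}$. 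A coset of a code with minimum distance at least $t+1$ corrects $t$ erasures, and the erased positions $i_1,\ldots,i_t$ are known. So the values $f_{p,e}\parenv{\cX_{i_j}}\pmod q$ are recovered, and they equal the true values since $f_{p,e}<q$. \Cref{prop_main} then recovers $\cX$.

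\textbf{Redundancy.} The sets $\cU_2$ over all choices of $(a_0,\bm{b})\in\mathbb{Z}_{tp^e}\times\mathbb{F}_q^{\rho}$ partition $\Sigma_w^{m\times n}$. By pigeonhole, some choice gives $\abs{\cU_2}\ge\binom{n}{w}^m/(tp^eq^{\rho})$. Its redundancy is therefore at most
$$\rho\log(q)+\log(tp^e)=\frac{\rho}{l}\log(m)+\log(tp^e),$$
using $m=q^l$. With $\rho\le(t-2)l+\ceilenv{\frac{l}{t-1}}\le (t-2)l+\frac{l}{t-1}+1$ we get
$$\frac{\rho}{l}\log(m)\le\parenv{t-2+\frac{1}{t-1}}\log(m)+\frac{1}{l}\log(m)=\parenv{t-2+\frac{1}{t-1}}\log(m)+\log(q),$$
which is the claimed bound.

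\textbf{Main obstacle.} The delicate step is making sure the known first syndrome matches the Yekhanin matrix exactly. That matrix's first row is all ones including the last coordinate, and its zero column sits last. This agrees with $\widetilde{\mathbf{H}}$ having its zero column appended on the right, so the all-ones parity check really is $b_0$. The rest is routine.
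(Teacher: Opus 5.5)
Your proposal is correct and follows the paper's route. The paper says only that the theorem is proved as \Cref{thm_tlarge}: recover the exact sum of the erased values from $a_0$, turn it into the all-ones syndrome $b_0$ of Yekhanin's $[q^l,q^l-\rho-1,\ge t+1]$ code, correct the $t$ known erasures in that coset, and apply pigeonhole over $(a_0,\bm{b})$; your bound $\lceil l/(t-1)\rceil\le l/(t-1)+1$ accounts for exactly the extra $\log(q)$ term in the statement.
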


\subsection{Constructions---improved encoding of each row}

Let $p$ be the smallest prime between $n$ and $2n$. All constructions above and in \cite{Omer2024ISIT} share one commonality: using $e$ constraints $s_{\ell}^p\parenv{\cX_i}$ for $1\le\ell\le e$, to identify the erroneous positions in $\cX_i$ and encode each row $\cX_i$ into a symbol in an alphabet of size $p^e$.
In this subsection, we will consider two alternative methods of encoding the information contained within each row of our codewords. In the first method, we define a combinatorial object, called weak $B_e$-set, to identify error positions in each row. This method works for arbitrary $e$. The second method, which utilizes algorithms of encoding/decoding a binary sequence of length $n$ and Hamming weight $w$ into/from a number in $\sparenv{\binom{n}{w}}$, only works for the case $e=w$.

Both methods represent each row of our matrices using a symbol of an alphabet of size $Q$ where $\binom{n}{e} \leq Q \leq n^e$. This means that in all previous constructions, we can replace $p^e$ with $Q$. So the guaranteed code size is increased by $p^e/Q\ge (p/n)^e$ times. Or equivalently, the redundancy is reduced by $e\log(p/n)$.
Besides, since the first method uses only one, instead of $e$, functions to encode each row, the decoding process is more efficient.

\begin{definition}\label{dfn_goodset}
  Let $Q>1$ be an integer. A subset $B\subseteq\mathbb{Z}_Q$ is called a \emph{\textbf{weak}} $B_e$-set, where $e<\abs{B}$, if
  $$
  \abs{\mathset{b_1+\cdots+b_{\bar{e}}:~b_1,\ldots,b_{\bar{e}}\in B\text{ are mutually distinct}}}=\binom{\abs{B}}{\bar{e}}, 
  $$
  for each $1\le\bar{e}\le e$.
In other words, the sum $b_1+\cdots+b_{\bar{e}}$ uniquely determines the set $\mathset{b_1,\ldots,b_{\bar{e}}}$, for any distinct $b_1,\ldots,b_{\bar{e}}\in B$. Here, the operation ``$+$" is performed over $\mathbb{Z}_Q$.
\end{definition}

Suppose that $B=\mathset{b_0,\ldots,b_{n-1}}\subseteq\mathbb{Z}_Q$ is a weak $B_e$-set. Without loss of generality, assume $b_0<\cdots <b_{n-1}$. For a sequence $\bm{x}\in\{0,1\}^n$, define $g_{B}(\bm{x})\triangleq\sum_{j=0}^{n-1}b_jx_j\pmod{Q}$. Suppose that $\bm{x}$ suffers $e$ asymmetric errors at positions $j_1,\ldots,j_e$ (where $j_1<\cdots<j_e$) and we receive $\bm{y}$. Then we have
\begin{equation}\label{eq_vefifyBe}
    b_{j_1}+\cdots +b_{j_e}\equiv g_B(\bm{x})-g_B(\bm{y})\pmod{Q}.
\end{equation}
To decode $\bm{x}$ from $\bm{y}$, we need to know the values of $j_1,\ldots,j_e$. By the definition of weak $B_e$-set, we can know $b_{j_1},\ldots,b_{j_e}$ and thus $j_1,\ldots,j_e$, if the value of $g_B(\bm{x})$ is known to us. Note that there are at most $\binom{n}{e}$ possibilities for $\parenv{j_1,\ldots,j_e}$ and for each choice, we only have to verify single modular equation (\Cref{eq_vefifyBe}). Therefore, we can recover $\parenv{j_1,\ldots,j_e}$ efficiently by brute-force search. As a comparison, if we encode $\bm{x}$ into $f_{p,e}(\bm{x})$, then we have to verify $e$ modular equations. As a result, encoding with a weak $B_e$ set is more efficient.

The analysis above implies that all constructions in \Cref{subsection_outer} can be re-established by replacing $f_{p,e}\parenv{\cX_r}$ with $g_B\parenv{\cX_r}$. Here we only show alternative constructions of \Cref{constuction_uniquegeneral} and \Cref{constuction_uniquedecoding2}.
\begin{construction}\label{construction_uniquedecoding4}
  Let $Q>1$ be an integer. Suppose that $B=\mathset{b_0,\ldots,b_{n-1}}\subseteq\mathbb{Z}_Q$ is a weak $B_e$-set. Let $m>1$, $1\le e\le w<n$ be integers.
  \begin{enumerate}[$(1)$]
    \item Let $q\ge\max\mathset{m,Q}$ be the smallest prime. For any $\bm{a}=\parenv{a_0,a_1,\ldots,a_{t-1}}\in\mathbb{Z}_{tQ}\times\mathbb{Z}_{q}^{t-1}$, the code $\cU_{3}$ is defined as
  $$
  \cU_{3}\triangleq\mathset{\cX\in\Sigma_{w}^{m\times n}:\sum_{r=0}^{m-1}g_{B}\parenv{\cX_{r}}\equiv a_0\pmod{tQ},~\sum_{r=0}^{m-1}r^{\ell}g_{B}\parenv{\cX_{r}}\equiv a_{\ell}\pmod{q},\forall\ell\in\sparenv{1,t-1}}.
  $$
    \item If $\gcd\parenv{m,(Q-1)!}=1$, for any $\bm{a}=\parenv{a_1,a_2}\in\mathbb{Z}_{2Q}\times\mathbb{Z}_{m}$, the code $\cU_{3}^\prime$ is defined as
  $$
  \cU_{3}^\prime\triangleq\mathset{\cX\in\Sigma_{w}^{m\times n}:\sum_{r=0}^{m-1}g_{B}\parenv{\cX_{r}}\equiv a_1\pmod{2Q},\sum_{r=0}^{m-1}r\cdot g_{B}\parenv{\cX_{r}}\equiv a_2\pmod{m}}.
  $$
  \end{enumerate}
\end{construction}

\begin{proposition}\label{prop_unique4}
   The code $\cU_{3}$, $\cU_{3}^\prime$ is a $(t,e)$-CAECC,  $(2,e)$-CAECC, respectively. There exists an $\bm{a}$ such that $\rho\parenv{\cU_{3}}\le(t-1)\log(m)+\log(tQ)+t-1$ and there exists an $\bm{a}$ such that $\rho\parenv{\cU_{3}^\prime}\le\log(m)+\log(Q)+1$.
\end{proposition}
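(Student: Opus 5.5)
The plan is to mirror the proofs of \Cref{thm_uniquegeneral} and \Cref{thm_unique2}, with $f_{p,e}$ replaced by $g_B$ and $p^e$ replaced by $Q$. The only new ingredient is a row-level analogue of \Cref{prop_main}: if an erroneous row $\cX_i$ has weight $w$ and $\cY_i$ has weight $w-\bar e$ with $1\le \bar e\le e$, then knowing $g_B(\cX_i)$ suffices to recover $\cX_i$. Indeed, by \Cref{eq_vefifyBe}, $g_B(\cX_i)-g_B(\cY_i)\pmod Q$ equals the sum of $\bar e$ distinct elements $b_{j_1},\ldots,b_{j_{\bar e}}$ of $B$. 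By \Cref{dfn_goodset}, this sum determines the set $\{b_{j_1},\ldots,b_{j_{\bar e}}\}$, and hence the error positions $j_1,\ldots,j_{\bar e}$ (the $b_j$ are distinct). Combined with \Cref{rmk_exactte}, which lets us identify the erroneous rows and $\bar e$ from the row weights, this reduces decoding to correcting erasures in the vector $g(\cX)=(g_B(\cX_0),\ldots,g_B(\cX_{m-1}))\in\cA_Q^m$ at the known erroneous positions $i_1,\ldots,i_t$.

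For $\cU_3$, I assume exactly $t$ erroneous rows, as in the proof of \Cref{thm_uniquegeneral}. Since $0\le g_B(\cX_{i_j})<Q$, the sum $\sum_j g_B(\cX_{i_j})$ lies in $[0,tQ)$. It is therefore recovered exactly as $a_0-\sum_{r\notin\{i_1,\ldots,i_t\}}g_B(\cX_r)\pmod{tQ}$, and reducing it mod $q$ gives the first equation of a $t\times t$ Vandermonde system over $\mathbb{F}_q$. The remaining $t-1$ equations come from $a_\ell$ in the same way. The nodes $i_1,\ldots,i_t$ are distinct mod $q$ because $q\ge m$, so the system has a unique solution in $\mathbb{F}_q$. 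Since $q\ge Q$, each residue equals $g_B(\cX_{i_j})$ itself.

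If fewer than $t$ rows are erroneous, I either pad with error-free rows treated as erasures, or note that the argument works verbatim for any $s\le t$ by using the first $s$ equations. Checking that \Cref{rmk_exactte} covers this case is the main point needing care; nothing else is hard.

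For $\cU_3^\prime$, the one-error-row case is immediate from the sum mod $2Q$. With two erroneous rows $i<j$, I obtain $\delta_1+\delta_2$ exactly from $a_1$ mod $2Q$ and $(i\delta_1+j\delta_2)\bmod m$ from $a_2$, where $\delta_1=g_B(\cX_i)$ and $\delta_2=g_B(\cX_j)$ lie in $[0,Q)$. Then \Cref{lem_uniquetis2}, applied with the hypothesis $\gcd(m,(Q-1)!)=1$, yields $\delta_1$ and $\delta_2$.

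The redundancy bounds follow from pigeonhole: the codes indexed by all $\bm a$ partition $\Sigma_w^{m\times n}$. For $\cU_3$ there are $tQ\,q^{t-1}$ choices, so some code has redundancy at most $\log(tQ)+(t-1)\log q$. By Bertrand's postulate the smallest prime $q\ge\max\{m,Q\}$ satisfies $q<2\max\{m,Q\}$, giving $\log q\le\log m+1$ as long as $m\ge Q$, which is the regime of interest. For $\cU_3^\prime$ there are $2Qm$ choices, giving $\log m+\log Q+1$.
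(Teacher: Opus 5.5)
Your proposal is correct and takes the same route as the paper. The paper proves this proposition only by pointing to the proofs of \Cref{thm_uniquegeneral} and \Cref{thm_unique2} with $f_{p,e}$ replaced by $g_B$, noting that $g_B(\cX_r)$ determines $\cX_r$ via the weak $B_e$ property; this is exactly your row-level analogue of \Cref{prop_main}. You are also right that $\log q\le\log m+1$ requires $m\ge Q$: this mirrors the hypothesis $m\ge p^e$ in \Cref{thm_uniquegeneral}, which the statement of this proposition leaves implicit.
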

This proposition can be proved as \Cref{thm_uniquegeneral} and \Cref{thm_unique2}, just by noticing that given $g_B\parenv{\cX_r}$, $\cX_r$ can be uniquely recovered.

Note that in \Cref{dfn_goodset}, it is required that $b_1,\ldots,b_{\bar{e}}$ are mutually distinct. A quite close concept is the well-known ``$B_e$ set", where it is required that all the sums $b_1+\cdots+b_{e}$ with $b_1\le\cdots\le b_e$ are different. Clearly, a $B_e$ set is a weak $B_e$-set. But the opposite is not necessarily true.

\begin{example}\label{example_comparison}
  It is easy to verify that $B=\mathset{0,1,2,4}$ is a weak $B_2$-set
  in $\mathbb{Z}_6$. Since $2+2=0+4$, $B$ is not a $B_2$ set.
\end{example}

In \cite{Sidon1932}, Sidon introduced $B_2$ sets. In \cite{Singer1938}, Singer constructed optimal $B_2$ sets of size $n$ in $\mathbb{Z}_Q$, where $n-1$ is a prime power and $Q=n^2-n+1$. Following \cite{Singer1938}, Bose and Chowla \cite{Bose1962} gave a construction of $B_e$ sets in $\mathbb{Z}_Q$ for arbitrary $e\ge2$, when $n$ is a prime power and $Q=n^e-1$; or $n-1$ is a prime power and $Q=((n-1)^{e+1}-1)/(n-2)$.
According to \Cref{prop_unique4}, we have $(2,e)$-CAECCs with redundancy at most $\log(m)+\log(n^e-1)+2$ when $n$ is a prime power, or at most $\log(m)+\log((n-1)^{e+1}-1)-\log(n-2)+2$ when $n-1$ is a prime power. In either case, the redundancy is lower than the redundancy guaranteed by \Cref{thm_unique2}.

Known constructions of $B_e$-sets (with as small $Q$ as possible) require that $n$ or $n-1$ is a prime power, which limits the choice of $n$ when it comes to constructing $(t,e)$-CAECCs with redundancy as small as possible. On the other hand, as implied in \Cref{example_comparison}, there might exist a weak $B_e$-set of size larger than those of weak $B_e$-sets given by $B_e$ sets. Therefore, it is possible that there are $(2,e)$-CAECCs with even larger cardinality. In addition, when $e=w$ and $Q=\binom{n}{e}$, the existence of weak $B_e$-sets of size $n$ in $\mathbb{Z}_Q$ enables encoding/decoding algorithms for codes in \Cref{construction_uniquedecoding4} (see \Cref{thm_encoder}). 

\begin{table}[!t]
  \centering
  \caption{weak $B_e$-set $B$ of size $n$ in $\mathbb{Z}_{Q_{n,e}}$, where $2\le e\le\floorenv{n/2}$}\label{tab_goodset}
  \begin{tabular}{c|c|c|c}
    \hline
    \hline
    $e$ & $n$ &$Q_{n,e}$ & $B$\\
    \hline
    $2$& $4$& $6$ & $\mathset{0,1,2,4}$\\
    \hline
    $2$&$5$&$11$&$\mathset{0,1,2,4,7}$\\
    \hline
    $2$&$6$&$19$&$\mathset{0,1,2,4,7,12}$\\
    \hline
    $3$&$6$&$20$ &$\mathset{0,1,2,4,7,13}$\\
    \hline
    $2$&$7$&$28$&$\mathset{0,1,2,4,8,15,20}$\\
    \hline
    $3$&$7$&$39$&$\mathset{0,1,2,4,7,21,32}$\\
    \hline
    $2$&$8$&$40$&$\mathset{0,1,5,7,9,20,23,35}$\\
    \hline
    \hline
  \end{tabular}
\end{table}

Therefore, there is a strong motivation to use weak $B_e$-sets instead of $B_e$ sets to encode rows, as well as to find the smallest $Q$ such that there is a weak $B_e$-set of size $n$ in $\mathbb{Z}_Q$, for given $n$ and $e$. For given $n$ and $2\le e<n$, denote by $Q_{n,e}$ the smallest $Q$ such that there is a weak $B_e$-set of size $n$ in $\mathbb{Z}_Q$.

\begin{lemma}\label{lem_restricte}
  If $\floorenv{n/2}\le e\le n-2$, then $B\subseteq\mathbb{Z}_{Q}$ is a weak $B_e$-set if and only if it is a weak $B_{e+1}$-set in $\mathbb{Z}_Q$. In particular, we have $Q_{n,e+1}=Q_{n,e}$.
\end{lemma}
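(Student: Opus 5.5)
Throughout, $\abs{B}=n$. The plan is to use complements: in $\mathbb{Z}_Q$, the sum of $k$ distinct elements of $B$ determines the sum of the complementary $n-k$ elements, because the two add up to $S\triangleq\sum_{b\in B}b$.

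One direction is immediate from \Cref{dfn_goodset}: a weak $B_{e+1}$-set is a weak $B_e$-set, since its condition covers every $\bar e\le e+1$. The content is the converse. Assume $B$ is a weak $B_e$-set with $\floorenv{n/2}\le e\le n-2$. Then $e+1\le n-1<\abs{B}$, so the size requirement in \Cref{dfn_goodset} holds, and it remains to show that the map $T\mapsto\sum_{b\in T}b$ is injective on the $(e+1)$-subsets $T$ of $B$.

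\begin{enumerate}
\item Take two $(e+1)$-subsets $T_1,T_2$ of $B$ with equal sums, and pass to their complements $B\setminus T_1$ and $B\setminus T_2$. Each complement has size $n-e-1$, and their sums agree, since each equals $S$ minus the common sum.
\item Bound the complement size: from $e\ge\floorenv{n/2}$ we get $n-e-1\le n-\floorenv{n/2}-1=\ceilenv{n/2}-1\le\floorenv{n/2}\le e$. Together with $e\le n-2$, this gives $1\le n-e-1\le e$.
\item Since $\bar e=n-e-1$ lies in $[1,e]$, the weak $B_e$ property applies at this $\bar e$. So $B\setminus T_1=B\setminus T_2$, hence $T_1=T_2$, and the number of distinct $(e+1)$-sums is $\binom{n}{e+1}$. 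Therefore $B$ is a weak $B_{e+1}$-set.
\end{enumerate}

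For $Q_{n,e+1}=Q_{n,e}$: by the equivalence, a size-$n$ subset of $\mathbb{Z}_Q$ is a weak $B_e$-set exactly when it is a weak $B_{e+1}$-set. The set of moduli $Q$ admitting such a subset is therefore the same for $e$ and $e+1$, so the minima coincide. This uses $e+1<n$, i.e.\ $e\le n-2$, so that $Q_{n,e+1}$ is defined.

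The only delicate point is the inequality $n-e-1\le e$ for both parities of $n$:
\begin{itemize}
\item $n$ even: $e\ge n/2$ gives $n-e-1\le n/2-1<e$.
\item $n$ odd: $e\ge(n-1)/2$ gives $n-e-1\le(n-1)/2\le e$.
\end{itemize}
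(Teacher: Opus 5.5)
Your proof is correct, but it takes a different route from the paper's.

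The paper argues by intersection. From $e\ge\lfloor n/2\rfloor$ it gets $2(e+1)>n$, so any two $(e+1)$-subsets of $B$ with equal sums must share an element. Deleting that element leaves two $e$-subsets with equal sums, and these coincide by the weak $B_e$ property applied at level exactly $e$.

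You instead pass to complements and apply the weak $B_e$ property at level $n-e-1$. The hypothesis enters both arguments through the same inequality, since $2(e+1)>n \iff n-e-1\le e$. So neither proof is more general for this lemma.

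Your complement observation is, however, the more structural fact. For any $B$ with $|B|=n$, injectivity of subset sums at level $k$ is equivalent to injectivity at level $n-k$, with no condition on $k$. Two consequences follow:
\begin{itemize}
\item A weak $B_e$-set with $e\ge\lfloor n/2\rfloor$ is a weak $B_{n-1}$-set in one step, because every level $k\in[e+1,n-1]$ pairs with $n-k\in[1,e]$. The paper's argument would have to be iterated to reach this.
\item It explains why one may restrict to $e\le\lfloor n/2\rfloor$ when computing $Q_{n,e}$.
\end{itemize}

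You also check two points the paper leaves implicit. One is the size requirement $e+1<|B|$ in the definition of a weak $B_{e+1}$-set. The other is why equivalence of the two properties for every $Q$ gives $Q_{n,e+1}=Q_{n,e}$.
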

\begin{IEEEproof}
By definition, it is clear that a weak $B_{e+1}$-set is also a weak $B_e$-set. Now we only need to show the necessity.

Suppose there are two subsets $\mathset{b_1,\ldots,b_{\bar{e}+1}}, \mathset{c_1,\ldots,c_{\bar{e}+1}}\subseteq B$ such that $b_1+\ldots+b_{\bar{e}+1}\equiv c_1+\ldots+c_{\bar{e}+1}\pmod{Q}$, where $1\le \bar{e}\le e$. If $\bar{e}<e$, then $\mathset{b_1,\ldots,b_{\bar{e}+1}}=\mathset{c_1,\ldots,c_{\bar{e}+1}}$ since $B$ is a weak $B_e$-set.

Now suppose $\bar{e}=e$. Since $e\ge\floorenv{n/2}$, we have $2\parenv{e+1}>n$, which implies that the two subsets $\mathset{b_1,\ldots,b_{e+1}}$ and $\mathset{c_1,\ldots,c_{e+1}}$ must intersect. Without loss of generality, suppose $b_{e+1}=c_{e+1}$. Then we have $b_1+\cdots+b_{e}\equiv c_1+\cdots+c_{e}\pmod{Q}$. Since $B$ is a weak $B_e$-set, this implies $\mathset{b_1,\ldots,b_{e}}=\mathset{c_1,\ldots,c_{e}}$. Therefore, $B$ is a weak $B_{e+1}$-set.
\end{IEEEproof}

By \Cref{lem_restricte}, to find $Q_{n,e}$, we can assume $2\le e\le\floorenv{n/2}$. Then it is easy to see that $Q_{n,e}\ge\binom{n}{e}$ by the definition of weak $B_e$-sets. We find by computer searching the smallest $Q$ such that there exists a weak $B_e$-set in $\mathbb{Z}_Q$ for $4\le n\le 7$ and $2\le e\le\floorenv{n/2}$ and $(n,e)=(8,2)$. See \Cref{tab_goodset}. When $n=5$ and $e=4$, \Cref{construction_uniquedecoding4} (2) and \Cref{tab_goodset} implies that there is a $(2,4)$-CAECC with redundancy $\log(m)+\log(11)+1$, while the redundancies guaranteed by \Cref{constuction_uniquedecoding2} and $B_4$ sets are $\log(m)+\log(5^4)+1$ and $\log(m)+\log(341)+1$, respectively.

\begin{example}
\begin{itemize}
  \item In \Cref{example_comparison}, we gave a weak $B_2$ set $B=\mathset{0,1,2,3}$ in $\mathbb{Z}_6$. By \Cref{prop_unique4}, there is a $(2,2)$-CAECC, where $n=4$, with redundancy at most $\log(m)+\log(6)+1$ if none of $2,3$ and $5$ is a factor of $m$. On the other hand, \Cref{thm_unique2} gives a code with redundancy at most $\log(m)+\log(25)+1$ if neither $2$ or $3$ is not a factor of $m$.

  Let $w=e=2$, $n=4$ and $m=7$, by computer searching, we find that when $a_1=0$ and $a_2=5$, $\cU_3^\prime$ has the maximum size $3628$, which implies its redundancy is $7=\ceilenv{\log(7)+\log(6)+1}$.
\end{itemize}
\end{example}

Recall that $Q_{n,e}\ge\binom{n}{e}$. The next theorem implies that the case $Q=\binom{n}{e}$ is of great interest.
In \cite[Section V]{Omer2024ISIT}, the authors gave an efficient encoder for the $(t,1)$-CAECC in \cite[Theorem 1]{Omer2024ISIT} when $n=p$ is a prime number and $m\le n$. Next, we show that, under some conditions, there is an efficient encoder for the $(t,e)$-CAECCs in \Cref{prop_unique4}. For simplicity, we only present the conclusion for $t=2$.

\begin{theorem}\label{thm_encoder}
Let $e=w\le\floorenv{n/2}$ and $Q=\binom{n}{e}$. Suppose that there exists a weak $B_w$ set $B$ in $\mathbb{Z}_Q$. Then there is an efficient encoder for the $(2,e)$-CAECCs $\cU_3$ in \Cref{prop_unique4}, with redundancy $2\floorenv{\log\binom{n}{w}}+\ceilenv{m\parenv{\log\binom{n}{w}-\floorenv{\log\binom{n}{w}}}}$.
\end{theorem}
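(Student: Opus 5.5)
The plan rests on one observation. When $e=w$ and $Q=\binom{n}{w}$, a weak $B_w$-set $B=\mathset{b_0,\ldots,b_{n-1}}\subseteq\mathbb{Z}_Q$ makes $g_B$ a bijection from the weight-$w$ vectors of length $n$ onto $\mathbb{Z}_Q$.

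\emph{Bijectivity of $g_B$.} For a row $\bm{x}$ of weight $w$ with support $S$, we have $g_B(\bm{x})=\sum_{j\in S}b_j \pmod Q$. By \Cref{dfn_goodset} with $\bar e=w$, distinct $w$-subsets give distinct sums. There are exactly $\binom{n}{w}=Q$ such subsets and $Q$ residues, so $g_B$ is a bijection. Its inverse is computed efficiently: precompute a table, or for a given residue search the $\binom{n}{w}$ subsets. In effect $g_B$ replaces enumerative coding of constant-weight words. A codeword of $\cU_3$ is therefore the same thing as a vector $\bm{u}=(u_0,\ldots,u_{m-1})\in\mathbb{Z}_Q^m$, with $u_r=g_B(\cX_r)$, satisfying
\[
\sum_r u_r\equiv a_0 \pmod{2Q},\qquad \sum_r r\,u_r\equiv a_1\pmod q .
\]
The encoder only has to produce such a vector $\bm{u}$ from the message and then output $\cX_r=g_B^{-1}(u_r)$ row by row.

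\emph{Encoding the message.} Write $k=\floorenv{\log Q}$. Comparing with $m\log Q$, the stated redundancy corresponds to a message of about $(m-2)k-\ceilenv{m(\log Q-k)}+m(\log Q-k)$ bits. So I would reserve two rows (say rows $0$ and $1$, so that the $2\times2$ Vandermonde system of \Cref{thm_uniquegeneral} has $i_1=0$, $i_2=1$) as redundancy rows. Each of the other $m-2$ rows carries $k$ raw message bits as a value in $[0,2^k)\subseteq\mathbb{Z}_Q$. The leftover fractional capacity $Q/2^k$ per row is not used for data; it is spent as the $\ceilenv{m(\log Q-k)}$-bit slack in the count. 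Given the information rows, the redundancy values are obtained exactly as in the decoding proof of \Cref{thm_uniquegeneral}, treating rows $0,1$ as erasures:
\[
u_1\equiv a_1-\sum_{r\ge2} r u_r \pmod q,\qquad u_0+u_1\equiv a_0-\sum_{r\ge2}u_r \pmod{2Q}.
\]

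\emph{Main obstacle: solvability.} The hard step is that these redundancy values must actually lie in $[0,Q)$, since $u_1$ is a residue mod $q\ge Q$ and $u_0$ is determined mod $2Q$. My first attempt is to not fix $\bm{a}$ in advance but choose it from the encoder's behaviour, e.g. the value attained most often, as in the pigeonhole argument. If that fails, I would use the unused headroom $[2^k,Q)$ of the information rows as adjustable slack: modify a bounded number of information-row values to push $u_0,u_1$ into range, and make this recoverable by the decoder. Accounting for exactly this slack is what I expect produces the $\ceilenv{m(\log Q-k)}$ term.

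\emph{Correctness and efficiency.} Decoding is then inverse: recover $\bm{u}$ from the codeword, strip rows $0,1$, read the information rows. Correctness of the code itself is \Cref{prop_unique4}. Each step costs either $O(m)$ modular operations or one table lookup per row, so the encoder is efficient.
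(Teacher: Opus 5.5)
Your outline is the paper's proof: $m-2$ information rows carrying $k=\floorenv{\log Q}$ bits each, two parity rows obtained by solving the two congruences, and bijectivity of $g_B$. You justify the bijectivity correctly from the case $\bar e=w$ of the weak $B_w$ definition together with $Q=\binom{n}{w}$.

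\textbf{The gap.} The step you label the main obstacle is never closed. An encoder must handle \emph{every} message with one fixed $\bm{a}$. Your first remedy (choose $\bm{a}$ by pigeonhole) only produces a large code, not a total map. In fact no $\bm{a}$ makes your systematic map total. The parity sum $u_0+u_1$ is an integer in $[0,2Q-2]$. Meanwhile $S=\sum_{r\ge2}u_r$, with information rows ranging over $[0,2^k)$, takes every residue modulo $2Q$ once $(m-2)(2^k-1)\ge 2Q-1$. Any message with $S\equiv a_0+1\pmod{2Q}$ forces $u_0+u_1\equiv 2Q-1\pmod{2Q}$, which has no solution with $u_0,u_1\in[0,Q)$. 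Separately, $(a_1-\sum_{r\ge2} r u_r)\bmod q$ can land in $[Q,q)$ when $q>Q$. The paper's own proof has exactly this hole. It invokes \Cref{lem_uniquetis2}, which is a uniqueness statement and says nothing about whether $g_B(\cX_{m-2}),g_B(\cX_{m-1})\in[0,Q)$ exist. You were right to flag it, but flagging is not proving.

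\textbf{The second remedy.} Spending the headroom $[2^k,Q)$ is the only plausible direction, but as written it is not a step. You would need three things:
\begin{itemize}
\item an explicit, efficiently invertible rule for which rows absorb the correction;
\item a way for the decoder to tell corrected values from uncorrected ones;
\item a proof that the cost fits inside the $\ceilenv{m(\log Q-k)}$ term.
\end{itemize}

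\textbf{The statement needs an extra hypothesis.} The remedy cannot work in every case the statement covers. When $Q=2^k$ there is no headroom, and the claim is false. Take $n=4$ and $w=e=1$, so $Q=4$, $B=\mathbb{Z}_4$, and the claimed redundancy is $4$, meaning $4^{m-2}$ messages. Each tail $(u_2,\ldots,u_{m-1})$ has at most one completion $(u_0,u_1)$. Hence $\abs{\cU_3}\le 4^{m-2}$, with equality only if every tail completes. The argument above shows some tail fails once $m\ge 5$, so no such encoder exists.

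A valid proof must therefore use $Q>2^k$ in an essential way. This holds when $2\le w\le n/2$: by Sylvester's theorem $\binom{n}{w}$ then has a prime factor larger than $w$, so it is not a power of $2$. The statement should carry a corresponding hypothesis, and you should expect to need $m$ sufficiently large as well.
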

\begin{IEEEproof}
We first divide our length-$(m-2)\floorenv{\log\binom{n}{w}}$ information sequence into $(m-2)$ disjoint windows, each of length $\floorenv{\log\binom{n}{w}}$. Since each length-$n$ binary sequence with Hamming weight $w$ can be represented as a length-$\ceilenv{\log\binom{n}{w}}$ sequence, each length-$\floorenv{\log\binom{n}{w}}$ binary sequence can be encoded into a length-$n$ sequence with Hamming weight $w$. Now we get $m-2$ length-$n$ sequences of Hamming weight $w$. We put them in the first $m-2$ rows of our codeword $\cX$. Now it remains to show how to get $\cX_{m-2}$ and $\cX_{m-1}$. From the definition of $\cU_3$, we have
\begin{align*}
  g_B(\cX_{m-2})+g_B(\cX_{m-1})\equiv a_0-\parenv{\sum_{i=0}^{m-3}g_B(\cX_i)}\pmod{2Q},\\
  (m-2)g_B(\cX_{m-2})+(m-1)g_B(\cX_{m-1})\equiv a_1-\parenv{\sum_{i=0}^{m-3}ig_B(\cX_i)}\pmod{q}.
\end{align*}
Then by \Cref{lem_uniquetis2}, $g_B(\cX_{m-2})$ and $g_B(\cX_{m-1})$ can be uniquely determined. Since $Q=\binom{n}{w}$ and $B$ is a weak $B_w$ set in $\mathbb{Z}_Q$, $\cX_{m-2}$ and $\cX_{m-1}$ can be uniquely determined. Now we get the codeword $\cX$.

By definition, the redundancy of this encoder is $\ceilenv{m\log\binom{n}{w}}-(m-2)\floorenv{\log\binom{n}{w}}=2\floorenv{\log\binom{n}{w}}+\ceilenv{m\parenv{\log\binom{n}{w}-\floorenv{\log\binom{n}{w}}}}$.
\end{IEEEproof}

\begin{remark}
  Denote $\alpha_n=\log\binom{n}{w}-\floorenv{\log\binom{n}{w}}$ for $n>w$. Then $0<\alpha_n<1$ and the redundancy of the encoder in \Cref{thm_encoder} is $2\floorenv{\log\binom{n}{w}}+\ceilenv{m\alpha_n}$. If $\ceilenv{m\alpha_n}$ is small, the redundancy is near-optimal.
\end{remark}

Although \Cref{thm_encoder} motivates the search for weak $B_e$-sets of size $n$ in $\mathbb{Z}_Q$, where $Q=\binom{n}{e}$, currently we do not how to determine $Q_{n,e}$ theoretically, let alone when it holds that $Q_{n,e}=\binom{n}{e}$. To extend the encoding idea to a more range of values of $w$, we need to look for alternative solutions. Let $Q=\binom{n}{w}$. In \cite{kabal2018combinatorial}, Kabal introduced an efficient way to encode/decode a binary sequence of length $n$ and Hamming weight $w$ into/from an integer in $\cA_Q$. We denote the encoding and decoding function by $g_{enc}\parenv{\cdot}$ and $g_{enc}^{-1}\parenv{\cdot}$, respectively. Replacing $g_B\parenv{\cdot}$ in \Cref{construction_uniquedecoding4}, \Cref{prop_unique4} and \Cref{thm_encoder} with $g_{enc}\parenv{\cdot}$, the following theorem is clear.
\begin{theorem}\label{thm_uniquelast}
  There are $\sparenv{m,(n,w);t,w}$-composite codes with redundancy at most $(t-1)\log(m)+\log\binom{n}{w}+\log(t)+t-1$. Furthermore, there is an efficient encoder/decoder for these codes and the redundancy of this encoder is $t\floorenv{\log\binom{n}{w}}+\ceilenv{m\parenv{\log\binom{n}{w}-\floorenv{\log\binom{n}{w}}}}$.
\end{theorem}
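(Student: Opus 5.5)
The approach is to rerun \Cref{constuction_uniquegeneral} with $f_{p,e}$ replaced by Kabal's ranking map $g_{enc}\colon\Sigma_w^{n}\to\cA_Q$, where $Q=\binom{n}{w}$, and then copy the systematic encoder of \Cref{thm_encoder} with $t$ redundant rows in place of two. The code is
$$
\cU_{enc}\triangleq\mathset{\cX\in\Sigma_{w}^{m\times n}:~\sum_{r=0}^{m-1}g_{enc}\parenv{\cX_{r}}\equiv a_0\pmod{tQ},~\sum_{r=0}^{m-1}r^{\ell}g_{enc}\parenv{\cX_{r}}\equiv a_{\ell}\pmod{q},~\forall\ell\in\sparenv{1,t-1}},
$$
where $q$ is the smallest prime with $q\ge\max\{m,Q\}$.

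\textbf{Correctness.} When $e=w$, every erroneous row with exactly $w$ asymmetric errors becomes the zero row. So the received row carries no information, and we need the whole of $\cX_{i_j}$. The map $g_{enc}$ is a bijection onto $\cA_Q$, so the value $g_{enc}(\cX_{i_j})$ determines the row completely, and no $B_e$-property is needed. The problem then reduces, exactly as in \Cref{prop_main}, to correcting $t$ erasures in $\parenv{g_{enc}(\cX_0),\ldots,g_{enc}(\cX_{m-1})}\in\cA_Q^m$.

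\textbf{Decoding.} We proceed as in the proof of \Cref{thm_uniquegeneral}.
\begin{enumerate}[$(1)$]
\item Since each value lies in $[0,Q)$, the true sum $\sum_j g_{enc}(\cX_{i_j})$ lies in $[0,tQ)$. It is therefore recovered exactly from $a_0$ modulo $tQ$, hence also modulo $q$.
\item The remaining $t-1$ congruences modulo $q$ give a $t\times t$ Vandermonde system in the distinct nodes $i_1,\ldots,i_t$. These nodes stay distinct in $\mathbb{F}_q$ because $q\ge m$.
\item Solving the system yields each $g_{enc}(\cX_{i_j})\bmod q$, which equals $g_{enc}(\cX_{i_j})$ because $q\ge Q$.
\item Finally, apply $g_{enc}^{-1}$ to recover each erased row.
\end{enumerate}

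\textbf{Redundancy.} By pigeonhole over the $tQ\,q^{t-1}$ choices of $\bm a$, some $\bm{a}$ gives redundancy at most $\log(tQ)+(t-1)\log q$. Bertrand's postulate gives $q<2\max\{m,Q\}$, and since $m$ is large we may take $m\ge Q$, so $q<2m$. This yields the bound $(t-1)\log(m)+\log\binom{n}{w}+\log(t)+t-1$.

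\textbf{Encoder.} First split the information string into $m-t$ blocks of $\floorenv{\log\binom{n}{w}}$ bits. Each block is an integer below $Q$, so $g_{enc}^{-1}$ maps it to a weight-$w$ row, and these fill rows $0,\ldots,m-t-1$. The last $t$ rows are then $t$ erasures at the known positions $m-t,\ldots,m-1$. Running the decoder above on these positions gives the unique values $g_{enc}(\cX_{m-t}),\ldots,g_{enc}(\cX_{m-1})\in\cA_Q$ consistent with $\bm a$, and $g_{enc}^{-1}$ turns them into rows. Counting as in \Cref{thm_encoder}, the redundancy is $\ceilenv{m\log\binom{n}{w}}-(m-t)\floorenv{\log\binom{n}{w}}$, which equals the stated expression. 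Encoding and decoding cost only Kabal's ranking/unranking, modular sums, and solving one $t\times t$ Vandermonde system over $\mathbb{F}_q$, all polynomial.

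\textbf{Main obstacle.} The delicate step is the encoder's existence claim: the Vandermonde solution must be a genuine value in $[0,Q)$, and it must also satisfy the mod-$tQ$ constraint, not just the mod-$q$ one. Solving modulo $q$ alone does not guarantee either property. I would handle this by not solving blindly. Instead, search the $Q^t$ (constant) candidate tuples in $\cA_Q^t$ for the unique one meeting all $t$ congruences; uniqueness follows from the decoding argument. Existence, however, is only guaranteed if $\bm a$ is in the image of the syndrome map. The cleanest fix is to fix $\bm a$ from one reference codeword and note that surjectivity onto the relevant residues must be checked. If that check fails, I would fall back, as \Cref{thm_encoder} implicitly does, to using only those $\bm a$ for which the last $t$ rows can always be completed.
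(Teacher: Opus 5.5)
Your code, decoder and redundancy bound follow the paper's route exactly (replace $g_B$ by $g_{enc}$ in \Cref{construction_uniquedecoding4}(1) and argue as in \Cref{thm_uniquegeneral} and \Cref{prop_unique4}), and that part is correct. The genuine gap is the encoder. You rightly flagged that solving for the last $t$ rows gives uniqueness but not existence. Neither of your fallbacks repairs this, and nothing along this route can.

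For fixed $\bm a$, the last $t$ rows take only $Q^t$ values in $\cA_Q^t$, so they can absorb at most $Q^t$ residual syndromes. In particular, $\sum_{j}(m-t+j)\,g_{enc}(\cX_{m-t+j})\bmod q$ takes at most $Q^t$ values. Now let the information blocks give $g_{enc}(\cX_r)\in\mathset{0,1}$ for $r<m-t$. The required value $a_1-\sum_{r<m-t}r\,g_{enc}(\cX_r)\bmod q$ then runs through all $q\ge m$ residues, because subset sums of $\mathset{0,\ldots,m-t-1}$ fill $[0,\binom{m-t}{2}]$. So once $m>Q^t$ (and $m$ is moderately large), for \emph{every} $\bm a$ there are information strings with no admissible completion. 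Your search over $\cA_Q^t$ returns nothing for them. An $\bm a$ taken from one reference codeword certifies only that one prefix. The set of $\bm a$ that always admit a completion is empty.

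The paper's own argument has the same hole. \Cref{thm_encoder} appeals to \Cref{lem_uniquetis2}, which only says the last rows are determined \emph{if} they exist, and \Cref{thm_uniquelast} inherits this. In fact the encoder claim is false in general:
\begin{itemize}
\item For $t\ge2$ and $e=w$, every erroneous row can be all-zero. So every $\sparenv{m,(n,w);t,w}$-composite code is a $t$-erasure-correcting code of length $m$ over an alphabet of size $\binom{n}{w}$.
\item Such a code has minimum distance at least $3$, so the Hamming bound forces redundancy at least $\log\parenv{1+m\parenv{\binom{n}{w}-1}}$.
\item If $\log\binom{n}{w}$ is an integer (e.g.\ $w=1$, $n=4$), the stated encoder redundancy is $t\log\binom{n}{w}$, which is independent of $m$. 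This contradicts the Hamming bound for large $m$.
\end{itemize}
What your argument (and the paper's) actually establishes is the first sentence of the statement, together with efficient decoding. By the count above, a systematic encoder of this kind must reserve a number of rows that grows with $m$ (at least $\log_Q m$), not $t$. The stated encoder redundancy would have to be revised accordingly.
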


\begin{remark}
  The code in \Cref{thm_uniquelast} is a $(t,e)$-CAECC for all $e$. But when $e<w$, the constructions before \Cref{thm_uniquelast} could provide codes with larger cardinalities.
\end{remark}

\section{List-decodable CAECCs}\label{sec_list}
In this section, we investigate how much we will gain if the requirement for unique-decoding is relaxed.
For any $\cX\in\Sigma_w^{m\times n}$, denote
$$
B_{(t,e)}\parenv{\cX}\triangleq\mathset{\cY\in\{0,1\}^{m\times n}:~\cY\text{ is obtained from }\cX\text{ by a }(t,e)\text{-error}}.
$$
Let $\cE_{t,e}^{m,n}\triangleq\cup_{\cX\in\Sigma_w^{m\times n}}B_{(t,e)}\parenv{\cX}$.

According to \Cref{rmk_exactte}, for each $\cY\in B_{(t,e)}\parenv{\cX}$, we always assume that there are exactly $t$ erroneous rows and in each erroneous row, there are exactly $e$ composite-asymmetric errors. Then the number of words $\cX\in\Sigma_w^{m\times n}$ such that $\cY\in B_{(t,e)}\parenv{\cX}$ is
\begin{equation}\label{eq_ballofY}
n_{0}\triangleq\binom{n-w+e}{e}^t.
\end{equation}
By double-counting, we have
\begin{equation}\label{eq_errorwords}
  \abs{\cE_{t,e}^{m,n}}=\frac{\binom{n}{w}^m\binom{m}{t}\binom{w}{e}^t}{\binom{n-w+e}{e}^t}.
\end{equation}

\begin{definition}
  A set $\cC\subseteq\Sigma_w^{m\times n}$ is called a $(t,e,L)$-composite asymmetric errors ($(t,e,L)$-CAE)  list-decodable code, if for any $\cY\in\cE_{t,e}^{m,n}$, there are at most $L$ $\cX\in\cC$ such that $\cY\in B_{(t,e)}\parenv{\cX}$.
\end{definition}

By \Cref{eq_ballofY}, if $L\ge\binom{n-w+e}{e}^t$, any code is $(t,e,L)$-CAE list-decodable. Therefore, in the rest of this section, it is always assumed that $L<\binom{n-w+e}{e}^t$.

\subsection{Bounds}
The following upper bound is a straightforward generalization of \cite[Theorem 3]{Omer2024ISIT}.
\begin{proposition}
Suppose $L<\binom{n-w+e}{e}^t$. Let $\cC\subseteq\Sigma_w^{m\times n}$ be a $(t,e)$-CAE list-decodable code with list size $L$. Then
  $$
  \abs{\cC}\le L\frac{\binom{n}{w}^{m}}{\binom{n-w+e}{e}^t},
  $$
  and $\rho\parenv{\cC}\ge t\log\binom{n-w+e}{e}-\log(L)$.
\end{proposition}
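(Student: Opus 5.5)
The bound follows from a double-counting argument on the pairs formed by codewords and the words in their error balls. Each codeword $\cX\in\cC$ has an error ball $B_{(t,e)}(\cX)$. The list-decoding property says that every $\cY\in\cE_{t,e}^{m,n}$ lies in at most $L$ of these balls. Define
$$
\cP=\mathset{(\cX,\cY)\in\cC\times\cE_{t,e}^{m,n}:~\cY\in B_{(t,e)}(\cX)}.
$$
Summing over $\cY$ gives $\abs{\cP}\le L\abs{\cE_{t,e}^{m,n}}$. Summing over $\cX$ gives $\abs{\cP}=\abs{\cC}\cdot\abs{B_{(t,e)}(\cX)}$, provided the ball size does not depend on $\cX$.

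Following \Cref{rmk_exactte}, I count only the words obtained by exactly $t$ erroneous rows, each with exactly $e$ asymmetric errors. This is the convention already used in \Cref{eq_ballofY} and \Cref{eq_errorwords}. Under it, a word in the ball is determined by choosing the $t$ rows and then, in each chosen row, which $e$ of its $w$ ones are deleted. Hence $\abs{B_{(t,e)}(\cX)}=\binom{m}{t}\binom{w}{e}^t$ for every $\cX\in\Sigma_w^{m\times n}$.

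Combining this with \Cref{eq_errorwords},
$$
\abs{\cC}\binom{m}{t}\binom{w}{e}^t\le L\,\frac{\binom{n}{w}^m\binom{m}{t}\binom{w}{e}^t}{\binom{n-w+e}{e}^t},
$$
and cancelling the common factor $\binom{m}{t}\binom{w}{e}^t$ gives the claimed upper bound on $\abs{\cC}$. The redundancy bound then follows from the definition $\rho(\cC)=m\log\binom{n}{w}-\log\abs{\cC}$ after taking logarithms.

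The only delicate point is consistency of the counting convention. The statement could be read with balls containing words of at most $t$ rows and at most $e$ errors per row. Restricting to exact errors only shrinks each ball, and the list property still bounds how many balls cover each exact-error word. So it suffices to run the argument on exact-error words, with \Cref{eq_errorwords} as the count of all such $\cY$.
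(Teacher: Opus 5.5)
Your double-counting argument is correct, and it is essentially the proof the paper intends. The paper gives no separate proof here, calling the bound a straightforward generalization of \cite[Theorem 3]{Omer2024ISIT}, but it uses exactly your counting of pairs $(\cY,\cX)$ in its proof of the Elias-type bound: $\abs{T}=\abs{\cC}\binom{m}{t}\binom{w}{e}^t$ on one side, at most $L\abs{\cE_{t,e}^{m,n}}$ on the other, combined with \Cref{eq_errorwords}. Your remark that it suffices to count exact-error words even if the balls are read with ``at most'' errors is also sound.
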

Although this bound is trivial, in \Cref{subsec_listconstruction}, we will show that when $L$ is much larger than $t$, there are $(t,e,L)$-CAE list-decodable codes with arbitrary length $m$ and redundancy close to this lower bound.

\bigskip
In \cite[Lemma 1]{Omer2024ISIT}, it was proved that $\cC$ can uniquely correct $(t,e)$-CAECCs if and only if $t\le\ed{2e}{\cC}-1$. When $t\ge\ed{2e}{\cC}$, for a given $\cY\in\cE_{t,e}^{m,n}$, there might be more than one codeword $\cX\in\cC$ such that $\cY\in B_{(t,e)}\parenv{\cX}$. Suppose the number of such codewords is $L$. Then we have the following upper bound on $L$.
\begin{theorem}[Plotkin/Johnson-type Bound]\label{thm_johnsonlistsize}
Let $N=\binom{n-w+e}{e}$ and $\cC\subseteq\Sigma_w^{m\times n}$ be a subset with $\ed{2e}{\cC}\ge d>\frac{N-1}{N}t$. Let $\cY\in\cE_{t,e}^{m,n}$. Suppose that there are exactly $s\le t$ rows $\cY_i$ with $\wth{\cY_i}<w$. If there are $L$ $\cX\in\cC$ such that $\cY\in B_{(t,e)}\parenv{\cX}$, then it holds that
$$
L\le\frac{Nd}{d-(s-d)(N-1)}.
$$
\end{theorem}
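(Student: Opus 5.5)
The plan is a Plotkin-style double count over the $s$ corrupted rows, after first reducing the $L$ codewords to words over an alphabet of size $N$.

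\textbf{Step 1: fix the structure of every candidate codeword.} Let $S\subseteq[m]$, $\abs{S}=s$, be the set of rows $i$ with $\wth{\cY_i}<w$. Let $\cX^{(1)},\ldots,\cX^{(L)}\in\cC$ be the codewords with $\cY\in B_{(t,e)}\parenv{\cX^{(k)}}$.

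Errors only lower row weights and every row of a codeword has weight $w$. Hence every row of $\cY$ of weight $w$ is unchanged, so $\cX^{(k)}_i=\cY_i$ for all $i\notin S$ and all $k$. By \Cref{rmk_exactte}, each row $\cY_i$ with $i\in S$ has weight $w-e$. Therefore $\cX^{(k)}_i$ is obtained from $\cY_i$ by turning $e$ of its $n-w+e$ zeros into ones. This gives exactly $N=\binom{n-w+e}{e}$ possible values for each such row.

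Consequently, each $\cX^{(k)}$ is determined by a word $\bm{u}^{(k)}\in[N]^s$ recording which completion is used in each row of $S$. Two distinct candidates differ only in rows of $S$. In each such row both are completions of the same $\cY_i$, so their Hamming distance is at most $2e$. By \Cref{dfn_etdistance}, $\ed{2e}{\cX^{(a)},\cX^{(b)}}$ then equals the ordinary Hamming distance $d_H\parenv{\bm{u}^{(a)},\bm{u}^{(b)}}$. The hypothesis $\ed{2e}{\cC}\ge d$ thus says the $L$ words $\bm{u}^{(k)}$ form an $N$-ary code of length $s$ with minimum distance at least $d$.

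\textbf{Step 2: double count.} Set $T=\sum_{a\ne b} d_H\parenv{\bm{u}^{(a)},\bm{u}^{(b)}}$, summed over ordered pairs.

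The distance condition gives the lower bound $T\ge L(L-1)d$.

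For the upper bound, fix a coordinate $i\in S$ and let $n_{i,c}$ be the number of words with symbol $c$ there, so $\sum_c n_{i,c}=L$. The number of ordered pairs differing at $i$ is $L^2-\sum_c n_{i,c}^2$. By Cauchy--Schwarz, $\sum_c n_{i,c}^2\ge L^2/N$, so this is at most $L^2(N-1)/N$. Summing over the $s$ coordinates gives $T\le sL^2(N-1)/N$.

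\textbf{Step 3: solve for $L$.} Combining the two bounds and dividing by $L$ gives $(L-1)d\le sL(N-1)/N$, that is, $L\parenv{Nd-s(N-1)}\le Nd$. Since $s\le t$ and $d>\frac{N-1}{N}t$, the quantity $Nd-s(N-1)$ is positive. Hence
$$
L\le\frac{Nd}{Nd-s(N-1)}=\frac{Nd}{d-(s-d)(N-1)},
$$
which is the claimed bound.

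The main obstacle is Step 1. One must check carefully that the $2e$-Hamming distance between two candidates reduces exactly to the Hamming distance of the $N$-ary words, with no infinite values arising. This holds because every differing row is within distance $2e$, both rows being completions of the same row $\cY_i$ of weight $w-e$. Once this is established, Steps 2 and 3 are the classical Plotkin argument.
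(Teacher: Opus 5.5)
Your proposal is correct and follows essentially the same route as the paper. You reduce the $L$ candidates to $N$-ary words indexed by the $s$ corrupted rows, bound $\sum_{k_1\ne k_2}\ed{2e}{\cX^{(k_1)},\cX^{(k_2)}}$ below by $L(L-1)d$ and above coordinatewise via Cauchy--Schwarz, then solve for $L$. (Your appeal to \Cref{rmk_exactte} to get exactly $N$ completions per row is harmless: a row that lost fewer than $e$ ones has at most $N$ completions, which is all the Cauchy--Schwarz step needs and is how the paper phrases it.)
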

\begin{IEEEproof}
By assumption, there is a subset $I\subseteq[m]$ with $\abs{I}=s$ such that $\wth{\cY_i}<w$ if and only if $i\in I$. Without loss of generality, assume $I=[s]$.
Suppose $\cX^{(1)},\ldots,\cX^{(L)}\in\cC$ are such that $\cY\in B_{(t,e)}\parenv{\cX^{(k)}}$ for all $1\le k\le L$. Then according to the definition of a $(t,e)$-error, we conclude that $\cX^{(k_1)}_i\ne\cX^{(k_2)}_i$ if and only if $i\in [s]$, for all $1\le k_1\ne k_2\le L$. This implies that a row index $i$ contributes to $\ed{2e}{\cX^{(k_1)},\cX^{(k_2)}}$ only if $i\in[s]$.

For each $i\in[s]$ and $1\le k\le L$, since $\cY_i$ is obtained from $\cX^{(k)}_i$ by replacing at most $e$ $1$'s with $0$'s, we conclude that there are at most $\binom{n-w+e}{e}$ choices for $\cX^{(k)}_i$. On the other hand, since $\cX^{(k)}_i$ is a binary vector of length $n$ and Hamming weight $w$, there are at most $\binom{n}{w}$ choices for $\cX^{(k)}_i$.  Therefore, each $\cX^{(k)}_i$ can be viewed as a symbol in an alphabet $\cA_i$ of size $N$. Let $n_{a,i}$ be the number of $k$'s such that $\cX^{(k)}_i=a$. Then we have $\sum_{a\in\cA_i}n_{a,i}=L$ and hence
\begin{equation}\label{eq_listsize1}
  \sum_{a\in\cA_i}n_{a,i}^2\ge\frac{1}{\abs{\cA_i}}\parenv{\sum_{a\in\cA_i}n_{a,i}}^2=\frac{L^2}{N},
\end{equation}
for each $i\in[s]$.
Since $\ed{2e}{\cC}\ge d$ and $d_H\parenv{\cX^{(k_1)}_i,\cX^{(k_2)}_i}\le 2e$ whenever $i\in[s]$, we have
\begin{equation*}
\begin{aligned}
  L(L-1)d &\le\mathop{\sum}\limits_{1\le k_1\ne k_2\le L}\ed{2e}{\cX^{(k_1)},\cX^{(k_2)}}\\
  &=\sum_{i\in[s]}\sum_{a\in\cA_i}n_{a,i}\parenv{L-n_{a,i}}\\
  &=L\sum_{i\in[s]}\sum_{a\in\cA_i}n_{a,i}-\sum_{i\in[s]}\sum_{a\in\cA_i}n_{a,i}^2\\
  &=sL^2-\sum_{i\in[s]}\sum_{a\in\cA_i}n_{a,i}^2\\
  &\le sL^2-\frac{sL^2}{N},
\end{aligned}
\end{equation*}
where the last inequality follows from \Cref{eq_listsize1}. Note that the requirement $d>\frac{N-1}{N}t$ ensures that $d-(s-d)(N-1)>0$ for all $1\le s\le t$.
Now the proof is completed.
\end{IEEEproof}

Suppose $d\le t<\frac{N}{N-1}d$. \Cref{thm_johnsonlistsize} implies that a $(d-1,e)$-CAECC can also list-decode a $(t,e)$-CAE with a constant list-size.
\begin{corollary}
  Let $N=\binom{n-w+e}{e}$ and $\cC\subseteq\Sigma_w^{m\times n}$ be a subset with $\ed{2e}{\cC}=d>\frac{N-1}{N}t$. For all $d\le t\le m$ and $1\le e\le w$, $\cC$ is a $(t,e)$-CAE list-decodable code with list size at most $\frac{Nd}{d-(t-d)(N-1)}$.
\end{corollary}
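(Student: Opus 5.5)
The corollary should follow from \Cref{thm_johnsonlistsize} by taking the worst case over the number $s$ of rows of $\cY$ with weight below $w$. We fix an arbitrary $\cY\in\cE_{t,e}^{m,n}$ and let $s$ be the number of rows $\cY_i$ with $\wth{\cY_i}<w$. Since $\cY$ arises from some word by a $(t,e)$-error, we have $s\le t$. If $s=0$, then $\cY\in\Sigma_w^{m\times n}$, and the only $\cX$ with $\cY\in B_{(t,e)}(\cX)$ is $\cY$ itself. So the list has size at most $1$, which is at most $\frac{Nd}{d-(t-d)(N-1)}$ whenever that denominator is positive, because $Nd\ge d\ge d-(t-d)(N-1)$.

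For $1\le s\le t$, the hypothesis $d>\frac{N-1}{N}t$ is exactly the one required in \Cref{thm_johnsonlistsize}. Applying the theorem gives
\[
L\le\frac{Nd}{d-(s-d)(N-1)}.
\]
We then note that the denominator $d-(s-d)(N-1)$ is nonincreasing in $s$ and stays positive for every $s\le t$. Positivity holds because $d-(t-d)(N-1)=Nd-(N-1)t>0$, which is just a rearrangement of $d>\frac{N-1}{N}t$. Hence the bound is monotone nondecreasing in $s$, and its maximum over $1\le s\le t$ is attained at $s=t$, which yields $\frac{Nd}{d-(t-d)(N-1)}$. Since $\cY$ was arbitrary, at most this many codewords have $\cY$ in their error ball, which is the definition of list-decodability with that list size.

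The only delicate point is checking that the hypotheses of \Cref{thm_johnsonlistsize} are met. First, we need $\ed{2e}{\cC}\ge d$, which holds since $\ed{2e}{\cC}=d$. Second, we need the positivity of the denominator uniformly in $s$, which reduces to the one-line inequality above. The stated range $d\le t$ plays no role in the argument beyond fixing the regime of interest.
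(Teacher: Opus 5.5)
Your proposal is correct and matches the paper, which gives no separate proof and reads the corollary off \Cref{thm_johnsonlistsize}. As you observe, the bound $\frac{Nd}{d-(s-d)(N-1)}=\frac{Nd}{Nd-(N-1)s}$ is nondecreasing in $s$ and has positive denominator at $s=t$. One small inaccuracy: your chain $Nd\ge d\ge d-(t-d)(N-1)$ in the $s=0$ case does use $d\le t$, contrary to your closing remark. This is harmless, since $Nd\ge Nd-(N-1)t$ holds directly for any $t\ge 0$, and the theorem already covers $s=0$ (it gives $L\le 1$).
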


As an application of \Cref{thm_johnsonlistsize}, we next show an Elias-type upper bound on the size of uniquely-decodable $(d-1,e)$-CAECCs.
Suppose $\cC\subseteq\Sigma_w^{m\times n}$ with $\ed{2e}{\cC}\ge d>\frac{N-1}{N}t$. Let
$$
T=\mathset{\parenv{\cY,\cX}:\cY\in\cE_{t,e}^{m,n},\cX\in\cC}.
$$
For each $\cY\in\cE_{t,e}^{m,n}$, let $n_{\cC}\parenv{\cY}$ denote the number of codewords $\cX\in\cC$ such that $\parenv{\cY,\cX}\in T$. Then \Cref{thm_johnsonlistsize} asserts that $n_{\cC}\parenv{\cY}\le\frac{d}{d-\frac{N-1}{N}t}$.
\begin{corollary}[Elias-type upper bound]
    With above notations, we have
    $$
    \abs{\cC}\le\frac{d}{d-\frac{N-1}{N}t}\cdot\frac{\binom{n}{w}^m}{\binom{n-w+e}{e}^t}.
    $$
\end{corollary}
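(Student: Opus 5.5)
We prove the Elias-type bound by double counting the pairs $(\cY,\cX)$ with $\cY\in B_{(t,e)}(\cX)$. Concretely, we restrict $T$ to
$$
T'=\mathset{\parenv{\cY,\cX}\in\cE_{t,e}^{m,n}\times\cC:~\cY\in B_{(t,e)}\parenv{\cX}},
$$
and, following \Cref{rmk_exactte} and the convention used for \Cref{eq_ballofY} and \Cref{eq_errorwords}, we only count words $\cY$ with exactly $t$ erroneous rows and exactly $e$ errors in each. We then count $\abs{T'}$ in two ways.

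\textbf{Counting from the code side.} Each codeword $\cX\in\cC$ has exactly $\binom{m}{t}\binom{w}{e}^t$ such words $\cY$: choose the $t$ erroneous rows, then the $e$ ones to erase in each. Hence
$$
\abs{T'}=\abs{\cC}\binom{m}{t}\binom{w}{e}^t .
$$

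\textbf{Counting from the received-word side.} For each $\cY$, the number of codewords paired with it is $n_{\cC}(\cY)$. Apply \Cref{thm_johnsonlistsize} with $s=t$; this is legitimate because $d>\frac{N-1}{N}t$ ensures the denominator is positive. It gives
$$
n_{\cC}(\cY)\le\frac{Nd}{d-(t-d)(N-1)}=\frac{Nd}{Nd-(N-1)t}=\frac{d}{d-\frac{N-1}{N}t}.
$$
This is exactly the quantity stated before the corollary; the key step is simply the algebraic identity $d-(t-d)(N-1)=Nd-(N-1)t$. Consequently
$$
\abs{T'}\le\frac{d}{d-\frac{N-1}{N}t}\cdot\abs{\cE_{t,e}^{m,n}},
$$
where the count runs over the exact-$t$ words, whose number is given by \Cref{eq_errorwords}.

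\textbf{Combining.} Equating the two counts and substituting \Cref{eq_errorwords}, the factor $\binom{m}{t}\binom{w}{e}^t$ cancels, leaving
$$
\abs{\cC}\le\frac{d}{d-\frac{N-1}{N}t}\cdot\frac{\binom{n}{w}^m}{\binom{n-w+e}{e}^t},
$$
which is the claimed bound.

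The only point needing care is consistency. The set $\cE_{t,e}^{m,n}$ and the ball $B_{(t,e)}$ must both be taken with exactly $t$ rows and exactly $e$ errors per row, so that the counts in \Cref{eq_ballofY} and \Cref{eq_errorwords} apply and the list-size bound is invoked with $s=t$. There is no real obstacle beyond this bookkeeping.
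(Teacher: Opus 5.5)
Your proof is correct and is essentially the paper's argument: you double count the incident pairs, get $\abs{\cC}\binom{m}{t}\binom{w}{e}^t$ from the code side, and bound the other side using the list-size bound $\frac{d}{d-\frac{N-1}{N}t}$ from \Cref{thm_johnsonlistsize} with $s=t$ together with \Cref{eq_errorwords}. The differences are cosmetic. You bound the sum by the maximum times $\abs{\cE_{t,e}^{m,n}}$, whereas the paper passes through a $\cY$ attaining at least the average. You also make the incidence condition $\cY\in B_{(t,e)}\parenv{\cX}$ explicit in $T'$, which the paper's definition of $T$ leaves implicit.
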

\begin{IEEEproof}
We count the size of $T$ in two different ways.
For each $\cX\in\cC$, there are exactly $\binom{m}{t}\binom{w}{e}^t$ $\cY$'s such that $\parenv{\cY,\cX}\in T$. Therefore, $\abs{T}=\abs{\cC}\binom{m}{t}\binom{w}{e}^t$. On the other hand, it is clear that $\abs{T}=\sum_{\cY\in\cE_{t,e}^{m,n}}n_{\cC}\parenv{\cY}$. Combining these two expressions for $\abs{T}$, we conclude that there is a $\cY\in\cE_{t,e}^{m,n}$ with
$$
n_{\cC}\parenv{\cY}\ge\frac{\abs{\cC}\binom{m}{t}\binom{w}{e}^t}{\abs{\cE_{t,e}^{m,n}}}.
$$
Since $n_{\cC}\parenv{\cY}\le\frac{d}{d-\frac{N-1}{N}t}$, we get
\begin{align*}
    \abs{\cC}&\le\frac{d}{d-\frac{N-1}{N}t}\cdot\frac{\abs{\cE_{t,e}^{m,n}}}{\binom{m}{t}\binom{w}{e}^t}\\
    &=\frac{d}{d-\frac{N-1}{N}t}\cdot\frac{\binom{n}{w}^m}{\binom{n-w+e}{e}^t}.
\end{align*}
The last equality follows from \Cref{eq_errorwords}.
\end{IEEEproof}

When $d$ is a constant, this bound is weaker than those in \Cref{thm_uniquelybound1}. When when $d$ and $t$ are proportional to $m$, the Elias-type bound could outperform those bounds in \Cref{thm_uniquelybound1}. Next, we will show this.

To simplify our analysis, fix $n,w$ and $e=w$. Let $t=\tau m$ and $d=\delta m$, where $0<\tau<1$ and $\frac{N-1}{N}\tau<\delta<1$. Given $\tau$ and $\delta$, let $U^S(\delta,m)$ be the sphere-packing upper bound of $(\delta m-1,e)$-CAECCs in \Cref{thm_uniquelybound1} and $U^E(\delta,\tau,m)$ be the Elias-type upper bound (with $t=\tau m$) of $(\delta m-1,e)$-CAECCs. Define
\begin{align*}
    \alpha(\delta)=\limsup_{m\rightarrow\infty}\frac{\log\parenv{U^S(\delta,m)}}{m\log\binom{n}{w}},\\
    \beta(\delta,\tau)=\limsup_{m\rightarrow\infty}\frac{\log\parenv{U^E(\delta,\tau,m)}}{m\log\binom{n}{w}}.
\end{align*}
It is easy to see that $\beta(\delta,\tau)=1-\tau$. To calculate $\alpha(\delta)$,
let $Q=\sum_{l=1}^{\floorenv{e/2}}\binom{w}{l}\binom{n-w}{l}+1$. Since $\sum_{s=0}^{\floorenv{\delta m-1/2}}\binom{m}{s}\parenv{Q-1}^s\ge Q^{m\parenv{H_Q(\delta/2)-o(1)}}$, where $H_Q(\cdot)$ is the $Q$-ary entropy function, by \Cref{thm_uniquelybound1}, we have $\alpha(\delta)=1-\log(Q)/\log\binom{n}{w}\cdot H_Q(\delta/2)$. If $\tau>\log(Q)/\log\binom{n}{w}\cdot H_Q(\delta/2)$, it holds that $\beta(\delta,\tau)<\alpha(\delta)$. As an example, let $n=4$, $e=w=2$. Then $Q=5$ and $N=6$. Let $\delta=1.01\cdot\frac{N-1}{N}\tau$. Then $\alpha(\delta)=1-\log(5)/\log(6))\cdot H_5\parenv{\frac{5.05}{6}\tau}$. We plot functions $\beta(\delta,\tau)$ and $\alpha(\delta)$ in \Cref{fig_bounds}. It is to see that when $\tau\ge 0.4$, the asymptotic Elias-type upper bound outperforms the asymptotic Sphere-packing bound.

\begin{figure}[!t]
    \centering
    \includegraphics[width=0.5\textwidth]{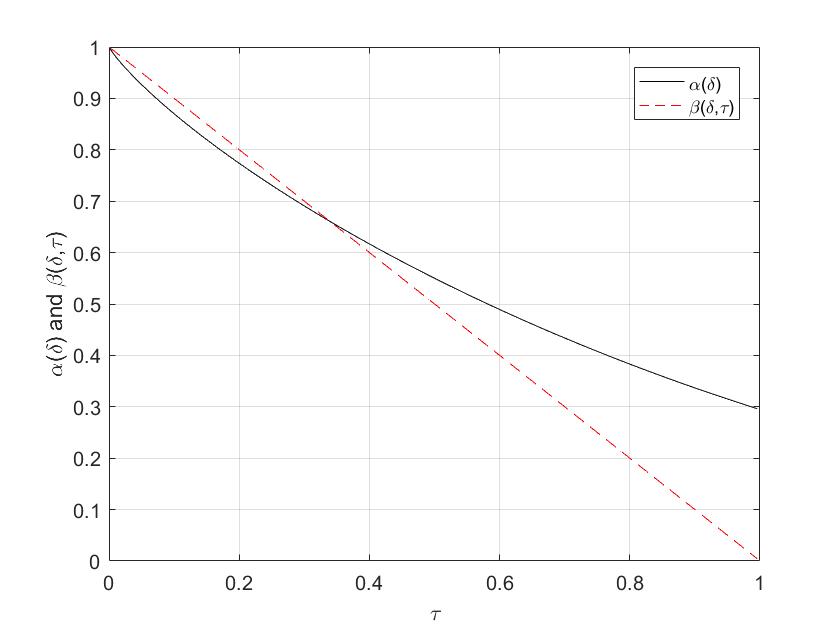}
    \caption{Comparison between Sphere-packing bound and Elias-type upper bound when $n=4$, $e=w=2$ and $\delta=\frac{101}{120}\tau$.}
    \label{fig_bounds}
\end{figure}

\subsection{Constructions}\label{subsec_listconstruction}
In this section, we give three constructions of list-decodable $(t,e)$-CAECCs with constant list-size $L$. In the first two constructions, we consider general $t$. The resulted $L$ is factorial in $t$ or an exponential function in $t$. Recall that we assume $t,n,w$ and $e$ are constants with respect to $m$. This implies that $L$ is a constant. Moreover, the redundancies of the first two codes are also constants.

In the third construction, we focus on $t=3$. We obtain a list-decodable $(3,e)$-CAECC with list-size $2$ and the redundancy is $\log(m)+O(1)$. It is worth noting that the redundancy differs by a constant from the lower bound on the smallest redundancy of \emph{uniquely} decodable $(t,e)$-CAECCs implied by \Cref{thm_uniquelybound1}.
\begin{construction}\label{constuction_list-decoding1}
  Let $m,n>1$, $1\le w<n$, $1\le e\le w$ and $1\le t<m$. Let $p$ be the smallest prime in $\sparenv{n,2n}$ and $q$ is the smallest prime between $\sparenv{p^e,2p^e}$. Let $0\le a_1,\ldots, a_t<q$. The code $\cL_{1}$ is defined as
  $$
  \cL_{1}\triangleq\mathset{\cX\in\Sigma_{w}^{m\times n}:~\sum_{i=0}^{m-1}f_{p,e}\parenv{\cX_i}^{\ell}\equiv a_{\ell}\pmod{q},\forall\ell\in[1,t]}.
  $$
\end{construction}

\begin{theorem}\label{thm_list1}
 When $q>t$, the code $\cL_{1}$ is a $(t,e,L)$-CAE list-decodable code with $L=t!$. There exist some $a_1,\ldots,a_t$ such that $\rho(\cL_1)\le t\log\parenv{q}\in\sparenv{et\log(p),et\log(p)+t}$.
\end{theorem}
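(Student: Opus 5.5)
Given a received word $\cY\in\cE_{t,e}^{m,n}$, by \Cref{rmk_exactte} we may assume the erroneous rows are exactly $i_1,\ldots,i_t$ (the rows of weight $w-e$), and these are visible from $\cY$. Every codeword $\cX\in\cL_1$ with $\cY\in B_{(t,e)}(\cX)$ agrees with $\cY$ outside these rows. So for each $\ell\in[1,t]$ the decoder can compute
$$a_\ell^\prime\equiv a_\ell-\sum_{r\ne i_1,\ldots,i_t}f_{p,e}(\cY_r)^{\ell}\pmod{q}.$$
The unknowns $\alpha_j=f_{p,e}(\cX_{i_j})\in[0,p^e-1]$ then satisfy the power-sum system
$$\alpha_1^\ell+\cdots+\alpha_t^\ell\equiv a_\ell^\prime\pmod q,\qquad 1\le\ell\le t.$$

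I would apply \Cref{lem_pre1} with $Q=p^e$. Its hypothesis $q\ge\max\{Q,t+1\}$ holds because $q\ge p^e$ by construction and $q>t$ by assumption. The lemma then says the multiset $\multiset{\alpha_1,\ldots,\alpha_t}$ is uniquely determined by $\cY$. The one thing to check is that $\alpha_j<p^e\le q$, so distinct integers in this range stay distinct modulo $q$ and the multiset of integers is recovered, not just its reduction mod $q$. The lemma is set up with $\alpha_j<Q\le q$, so this is exactly the intended setting; Newton's identities work because $t<q$ makes $1,\ldots,t$ invertible. The only loss of information is which value goes to which erroneous row.

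Thus every candidate codeword corresponds to an assignment of the multiset elements to positions $i_1,\ldots,i_t$, and there are at most $t!$ such assignments. For a fixed assignment, each $f_{p,e}(\cX_{i_j})$ is fixed. By \Cref{fact_pre1} this fixes $s^p_1(\cX_{i_j}),\ldots,s^p_e(\cX_{i_j})$, and \Cref{lem_pre2} with $p\ge n$ then recovers $\cX_{i_j}$ uniquely from $\cY_{i_j}$. The map from assignments to candidate codewords is therefore well defined, so at most $t!$ codewords of $\cL_1$ can produce $\cY$, i.e.\ $L=t!$. A subtle point is that a permutation may give an $\alpha$-value for which \Cref{lem_pre2} yields no valid $\cX_{i_j}$; this only shrinks the list, so it does no harm.

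For the redundancy, the sets $\cL_1(a_1,\ldots,a_t)$ over all $(a_1,\ldots,a_t)\in\mathbb{Z}_q^t$ partition $\Sigma_w^{m\times n}$. By pigeonhole some choice has $\abs{\cL_1}\ge\binom{n}{w}^m/q^t$, giving $\rho(\cL_1)\le t\log q$. Since $p^e\le q\le 2p^e$, we get $t\log q\in[et\log p,\,et\log p+t]$. No step looks like a real obstacle; the only care needed is verifying the hypotheses of \Cref{lem_pre1}, namely the range of the $\alpha_j$ and the condition $q>t$.
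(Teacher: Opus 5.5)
Your proof is correct and follows the paper's argument. You reduce to a power-sum system modulo $q$ in the $t$ erased values $f_{p,e}(\cX_{i_j})$, recover their multiset via \Cref{lem_pre1} (using $q\ge p^e$ and $q>t$), bound the list by the at most $t!$ assignments of that multiset to the erroneous rows, and recover each row via \Cref{fact_pre1} and \Cref{lem_pre2}; you also write out the pigeonhole step for the redundancy bound, which the paper leaves implicit.
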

\begin{IEEEproof}
Let $\cY$ be obtained from a codeword $\cX$ by a $(t,e)$-CAE. Comparing $\cY$ with $\cX$, we can know which rows suffered errors. Suppose that the positions in $f\parenv{\cX}$ suffered erasures are $i_1,\ldots,i_t$. Let $a_{\ell}^{\prime}=\parenv{a_{\ell}-\sum_{i\ne i_1,\ldots,i_{t}}f_{p,e}\parenv{\cX_{i}}^{\ell}}\pmod{q}$. Then we have
\begin{equation*}
  \left\{
  \begin{array}{c}
    f_{p,e}\parenv{\cX_{i_1}}+\cdots +f_{p,e}\parenv{\cX_{i_s}}\equiv a_1^{\prime}\pmod{q},\\
    \vdots \\
    f_{p,e}\parenv{\cX_{i_1}}^t+\cdots +f_{p,e}\parenv{\cX_{i_t}}^t\equiv a_t^{\prime}\pmod{q}.
  \end{array}
  \right.
\end{equation*}
Now by \Cref{lem_pre1}, one can recover the multiset $\mathset{f_{p,e}\parenv{\cX_{i_1}},\ldots,f_{p,e}\parenv{\cX_{i_s}}}$. There are at most $s!\le t!$ ways to order the multiset $\mathset{f_{p,e}\parenv{\cX_{i_1}},\ldots,f_{p,e}\parenv{\cX_{i_s}}}$. By \Cref{fact_pre1} and \Cref{lem_pre2}, one can obtain at most one candidate $\cX^\prime$ of $\cX$ for each order. Now the proof is completed.
\end{IEEEproof}

\begin{lemma}\label{lem_numofsolutions1}
  Let $q>1$ be an integer, $0<a_1,\ldots,a_k<q$ and $0\le b<q$. Consider the following congruent equation
  \begin{equation}\label{eq_numberofsolutions1}
  a_1x_1+\cdots+a_kx_k\equiv b\pmod{q}.
  \end{equation}
  Let $d=\gcd\parenv{a_1,\ldots,a_k,q}$. Then \Cref{eq_numberofsolutions1} has a solution if and only if $d\mid b$. Furthermore, when $d\mid b$, the number of solutions in $\mathbb{Z}_q^k$ to \Cref{eq_numberofsolutions1} is $dq^{k-1}$.
\end{lemma}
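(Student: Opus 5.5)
The plan is to view the congruence as a homomorphism of finite abelian groups. Define $\phi:\mathbb{Z}_q^k\to\mathbb{Z}_q$ by $\phi(x_1,\ldots,x_k)=a_1x_1+\cdots+a_kx_k \pmod q$. The solutions of \Cref{eq_numberofsolutions1} are exactly $\phi^{-1}(b)$. This set is either empty or a coset of $\ker\phi$. So both claims reduce to two facts: identifying the image of $\phi$, and computing $\abs{\ker\phi}$ via the first isomorphism theorem.

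\textbf{Step 1 (image).} I will show $\mathrm{Im}\,\phi=d\mathbb{Z}_q$, the subgroup of residues divisible by $d$, which has exactly $q/d$ elements since $d\mid q$.
\begin{itemize}
\item The inclusion $\mathrm{Im}\,\phi\subseteq d\mathbb{Z}_q$ holds because $d$ divides every $a_i$ and also $q$. Hence $a_1x_1+\cdots+a_kx_k-\lambda q$ is divisible by $d$ for every integer $\lambda$, so every value of $\phi$ is a multiple of $d$ in $\mathbb{Z}_q$.
\item For the reverse inclusion, apply Bézout's identity to the integers $a_1,\ldots,a_k,q$. This gives integers $u_1,\ldots,u_k,v$ with $u_1a_1+\cdots+u_ka_k+vq=d$. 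Reducing modulo $q$ gives $\phi(u_1,\ldots,u_k)\equiv d$.
\item Since $\mathrm{Im}\,\phi$ is a subgroup containing $d$, it contains all multiples of $d$.
\end{itemize}

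\textbf{Step 2 (solvability).} Since $0\le b<q$ and $d\mid q$, $b$ lies in $d\mathbb{Z}_q$ if and only if $d\mid b$ as an integer. By Step 1 this is exactly the condition for $b\in\mathrm{Im}\,\phi$, which gives the "if and only if" statement.

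\textbf{Step 3 (counting).} By the first isomorphism theorem, $\abs{\ker\phi}=q^k/\abs{\mathrm{Im}\,\phi}=q^k/(q/d)=dq^{k-1}$. When $d\mid b$, pick one solution $\bm{x}_0$. The map $\bm{z}\mapsto\bm{x}_0+\bm{z}$ is a bijection from $\ker\phi$ onto the solution set, so there are $dq^{k-1}$ solutions.

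The only step needing care is the Bézout step in Step 1. It must be applied to the gcd of all of $a_1,\ldots,a_k$ together with $q$, not to the $a_i$ alone, so that the reduction modulo $q$ still hits exactly $d$. The rest is routine group theory.
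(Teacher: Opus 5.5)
Your proof is correct and complete. The image of $\phi$ is exactly the subgroup of $\mathbb{Z}_q$ generated by $d$: one inclusion holds because $d$ divides every $a_i$ and $q$, and the other follows from B\'ezout applied to $a_1,\ldots,a_k,q$ jointly. Solvability is then the condition that $b$ lies in this subgroup, which is equivalent to $d\mid b$ because $d\mid q$. The count follows from the first isomorphism theorem, since a nonempty fiber is a coset of $\ker\phi$.

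The paper gives no proof of this lemma. It quotes it as a standard fact of elementary number theory and uses it only for the list-size bound of the code $\cL_2$ in \Cref{construction_list-decoding2}. There all coefficients equal $1$, so $d=1$ and the count is $p^{e(s-1)}$.

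A common alternative is induction on $k$, starting from the one-variable fact: $ax\equiv b\pmod q$ is solvable iff $\gcd(a,q)\mid b$, and then it has exactly $\gcd(a,q)$ solutions. One fixes $x_k$ and counts the values for which the remaining congruence, modulo $\gcd(a_1,\ldots,a_{k-1},q)$, is solvable.

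Your homomorphism argument avoids that gcd bookkeeping and settles existence and the count at once. It also shows that the hypothesis $0<a_i<q$ is not needed: any integer coefficients work.
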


\begin{construction}\label{construction_list-decoding2}
  Let $m,n>1$, $1\le w<n$, $1\le e\le w$ and $1\le t<m$. Let $p$ be the smallest prime in $\sparenv{n,2n}$. Let $0\le a<p^e$. The code $\cL_2$ is defined as
  $$
  \cL_2\triangleq\mathset{\cX\in\Sigma_{w}^{m\times n}:~\sum_{i=0}^{m-1}f_{p,e}\parenv{\cX_i}=a\pmod{p^e}}.
  $$
\end{construction}

\begin{theorem}
 The code $\cL_2$ a $(t,e,L)$-CAE list-decodable code with $L=p^{e(t-1)}$. There exists some $a$ such that $\rho(\cL_2)\le e\log(p)\in[e\log(n),e\log(n)+1]$.
\end{theorem}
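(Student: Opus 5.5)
The plan is to reduce list-decoding to counting the solutions of a single linear congruence, and to get the redundancy from a pigeonhole argument over the $p^e$ cosets.

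\textbf{List size.} Let $\cY$ be obtained from a codeword $\cX\in\cL_2$ by a $(t,e)$-CAE. By \Cref{rmk_exactte} we may assume exactly $t$ rows $i_1,\ldots,i_t$ are erroneous, and we can read off these indices by checking which rows of $\cY$ have weight below $w$. Every other row satisfies $\cX_i=\cY_i$, so $f_{p,e}(\cX_i)$ is known there. Set
$$b\equiv a-\sum_{i\ne i_1,\ldots,i_t}f_{p,e}\parenv{\cY_i}\pmod{p^e}.$$
Then the unknowns $x_j=f_{p,e}(\cX_{i_j})\in\cA_{p^e}=\mathbb{Z}_{p^e}$ must satisfy
$$x_1+\cdots+x_t\equiv b\pmod{p^e}.$$
Apply \Cref{lem_numofsolutions1} with $q=p^e$, $k=t$ and all coefficients equal to $1$, so that $d=1$. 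It gives exactly $p^{e(t-1)}$ tuples $(x_1,\ldots,x_t)\in\mathbb{Z}_{p^e}^t$. This can also be seen directly: choose $x_1,\ldots,x_{t-1}$ freely and $x_t$ is then determined.

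For each such tuple, \Cref{fact_pre1} recovers the syndromes $s_1^p,\ldots,s_e^p$ of each erased row. \Cref{lem_pre2} then reconstructs each $\cX_{i_j}$ from $\cY_{i_j}$ uniquely; this applies because $p\ge n$ and row $i_j$ suffered exactly $e$ errors. Some tuples may be inconsistent, and those are simply discarded. So each tuple yields at most one candidate codeword. Every $\cX'\in\cL_2$ with $\cY\in B_{(t,e)}(\cX')$ has the same erroneous rows and agrees with $\cY$ elsewhere, so it arises from one of these tuples. Hence $L\le p^{e(t-1)}$.

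\textbf{Redundancy.} The sets $\cL_2(a)$ for $a\in\mathbb{Z}_{p^e}$ partition $\Sigma_w^{m\times n}$. By pigeonhole, some $a$ gives $\abs{\cL_2(a)}\ge\binom{n}{w}^m/p^e$, so $\rho(\cL_2)\le e\log(p)$. Since $n\le p\le 2n$, we get $e\log(p)\in[e\log(n),e\log(n)+e]$. The stated upper end $e\log(n)+1$ seems to need an $e$ (or a sharper bound on $p$) in place of $1$; I would prove the interval with $+e$ and flag this.

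\textbf{Main obstacle.} The only delicate point is making sure every candidate in the list is accounted for. Different codewords could in principle be consistent with $\cY$ via different erroneous-row sets, but the row weights of $\cY$ fix that set, so counting tuples suffices.
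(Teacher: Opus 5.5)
Your proposal is correct and follows the paper's proof. The paper likewise reduces to the single congruence $\sum_j f_{p,e}(\cX_{i_j})\equiv a'\pmod{p^e}$, counts its $p^{e(t-1)}$ solutions via \Cref{lem_numofsolutions1}, and notes that each solution yields at most one candidate; your pigeonhole argument over the $p^e$ values of $a$ supplies the redundancy part, which the paper leaves implicit.

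Your flag about the interval is also justified. From $n\le p\le 2n$ one only gets $e\log(p)\le e\log(n)+e$, and the $+1$ fails for instance at $n=8$, $p=11$, $e=3$, where $3\log(11)>3\log(8)+1$. So the upper end should read $e\log(n)+e$, consistent with the paper's other theorems.
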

\begin{IEEEproof}
Let $\cY$ be obtained from a codeword $\cX$ by a $(t,e)$-CAE. Comparing $\cY$ with $\cX$, we can know which rows suffered errors. Suppose that the positions in $f\parenv{\cX}$ suffered erasures are $i_1,\ldots,i_t$. Let $a^{\prime}=\parenv{a-\sum_{i\ne i_1,\ldots,i_{t}}f_{p,e}\parenv{\cX_{i}}}\pmod{p^e}$. Then we have
\begin{equation*}
    f_{p,e}\parenv{\cX_{i_1}}+\cdots +f_{p,e}\parenv{\cX_{i_s}}\equiv a^{\prime}\pmod{p^e}.
\end{equation*}
By \Cref{lem_numofsolutions1}, this equation has $p^{e(s-1)}\le p^{e(t-1)}$ solutions. For each solution, one can obtain at most one candidate $\cX^\prime$ of $\cX$.
\end{IEEEproof}

\begin{remark}
  It is clear that $\rho(\cL_2)<\rho(\cL_1)$. But when $t$ is small, it could be that $t!<p^{e(t-1)}$.
\end{remark}

Next, we turn to the case $t=3$ and a much smaller list-size $L$. First, we need the following lemma.
\begin{lemma}\label{lem_listtis3}
Let $Q>1,m>1$ be two integers and $0\le i<j<k< m$. Suppose that $i,j,k$ are known. Let $u$ be a given integer. Suppose that $(x_1,x_2,x_3)=(a,b,c)$ is a solution to the following equation
\begin{equation}\label{eq_constraint_t3}
  ix_1+jx_2+kx_3=u,
\end{equation}
where $0\le a,b,c<Q$. Then given the multiset $\multiset{a,b,c}$, vector $(a,b,c)$ can be determined within up to two options.
\end{lemma}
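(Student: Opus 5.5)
The idea is to look at all orderings of the multiset $\multiset{a,b,c}$. Each is a candidate vector $(x_1,x_2,x_3)$, and we show that at most two of them satisfy \Cref{eq_constraint_t3}. We use only that $0\le i<j<k$; the bound $Q$ plays no role. The basic observation is about transpositions. Suppose two candidate vectors differ only by exchanging the entries in positions $r<s$, with weights $c_r<c_s$ from $\{i,j,k\}$, and the exchanged values $v\ne v'$. Then their weighted sums $ix_1+jx_2+kx_3$ differ by $\pm(c_s-c_r)(v-v')\neq 0$. So two vectors related by such a swap cannot both satisfy \Cref{eq_constraint_t3}.

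\emph{Repeated values.} If the multiset has all three values equal, there is only one vector. If it has exactly two equal values, say $\multiset{\alpha,\alpha,\beta}$ with $\alpha\ne\beta$, the three distinct arrangements are obtained from one another by exchanging $\beta$ with one copy of $\alpha$. By the observation, their weighted sums are pairwise distinct, so at most one vector is possible. Either way the claim holds.

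\emph{Distinct values} $\alpha<\beta<\gamma$. Split the six permutations by parity in $S_3$. Two permutations of opposite parity differ by an odd element of $S_3$, i.e.\ a transposition, so by the observation they give different sums. Hence all solutions of \Cref{eq_constraint_t3} have the same parity, and it suffices to show that no parity class contains three solutions.

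\emph{Even class.} The even class is $(\alpha,\beta,\gamma),(\beta,\gamma,\alpha),(\gamma,\alpha,\beta)$ with sums $S_1,S_2,S_3$. A direct rearrangement gives
$$S_1-S_2=(k-j)(\gamma-\alpha)+(j-i)(\beta-\alpha)>0,$$
so $S_1\neq S_2$, and these three cannot all equal $u$. Equivalently, by the rearrangement inequality, the sorted assignment is the strict maximum.

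\emph{Odd class.} The odd class is $(\alpha,\gamma,\beta),(\beta,\alpha,\gamma),(\gamma,\beta,\alpha)$. The reversed assignment $(\gamma,\beta,\alpha)$ is the strict minimum of $ix_1+jx_2+kx_3$ over all six arrangements, by the analogous computation. Hence it differs from the other two, and again these three cannot all equal $u$.

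Therefore at most two vectors remain, which proves the lemma. The only step needing care is the explicit sign computation $S_1-S_2>0$ and its mirror for the odd class. Everything else is the parity bookkeeping in $S_3$.
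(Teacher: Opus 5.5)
Your proof is correct, and it is organized differently from the paper's.

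The paper argues pairwise. It takes two solutions $\alpha,\beta$ that are rearrangements of the same multiset. It first disposes of the case where they agree in some coordinate; this is your transposition observation. Otherwise all coordinates differ, so the values are distinct and $\beta$ is a 3-cycle of $\alpha$. The paper then runs through the six sign patterns of $(\alpha_l-\beta_l)_l$. It rules out four of them by a weighted-sum inequality, and in the remaining two it shows that $\beta$ is forced by the order type of $\alpha$ (one of the four orderings $\alpha_1<\alpha_3<\alpha_2$, $\alpha_3<\alpha_1<\alpha_2$, $\alpha_2<\alpha_1<\alpha_3$, $\alpha_2<\alpha_3<\alpha_1$). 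Hence any solution has at most one companion.

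You argue globally instead. The transposition observation forces all solutions into a single coset of $A_3$. The rearrangement inequality then shows that the sorted arrangement (resp.\ the reversed one) is the strict maximum (resp.\ strict minimum), so it cannot coexist with the other two members of its coset.

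Your route replaces the six-case sign analysis with one parity remark and one explicit identity. It also makes transparent which arrangements can form a two-element list: the non-extremal members of a coset, which are exactly the four orderings the paper lists. The paper's pairwise version directly yields a decoding rule: find one solution and read off its single possible alternative from its ordering. Your version decodes by checking all (at most six) arrangements, which is equally adequate here. Both arguments, as you note, make no use of $Q$.
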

\begin{IEEEproof}
Suppose $\multiset{\beta_1,\beta_2,\beta_3}=\multiset{\alpha_1,\alpha_2,\alpha_3}=\multiset{a,b,c}$, such that
$i\alpha_1+j\alpha_2+k\alpha_3= i\beta_1+j\beta_2+k\beta_3$.
Then we have
\begin{equation}\label{eq_t3_1}
  i(\alpha_1-\beta_1)+j(\alpha_2-\beta_2)+k(\alpha_3-\beta_3)=0.
\end{equation}
If $\alpha_l=\beta_l$ for some $1\le l\le 3$, then similar to the proof of \Cref{thm_uniquegeneral} (when $t=2$), we can show that $\beta_1=\alpha_1$, $\beta_2=\alpha_2$ and $\beta_3=\alpha_3$. Now assume $\beta_l\ne\alpha_l$ for all $1\le l\le 3$. This means that $\alpha_1,\alpha_2,\alpha_3$ are mutually distinct. Then by \Cref{eq_t3_1}, we conclude that exactly one of $\alpha_l-\beta_l$ is positive or exactly one of $\alpha_l-\beta_l$ is negative. There are six cases:
\begin{enumerate}[$(1)$]
  \item $\alpha_1-\beta_1>0$, $\alpha_2-\beta_2<0$ and $\alpha_3-\beta_3<0$;
  \item $\alpha_1-\beta_1<0$, $\alpha_2-\beta_2<0$ and $\alpha_3-\beta_3>0$;
  \item $\alpha_1-\beta_1<0$, $\alpha_2-\beta_2>0$ and $\alpha_3-\beta_3>0$;
  \item $\alpha_1-\beta_1>0$, $\alpha_2-\beta_2>0$ and $\alpha_3-\beta_3<0$;
  \item $\alpha_1-\beta_1<0$, $\alpha_2-\beta_2>0$ and $\alpha_3-\beta_3<0$;
  \item $\alpha_1-\beta_1>0$, $\alpha_2-\beta_2<0$ and $\alpha_3-\beta_3>0$.
\end{enumerate}

Cases (1)--(4) can be handled in the same way. Hence, we only discuss case (1) here. Then \Cref{eq_t3_1} is equivalent to $i(\alpha_1-\beta_1)=j(\beta_2-\alpha_2)+k(\beta_3-\alpha_3)$. Since $j\ge i+1$ and $k\ge i+2$, we have $i(\alpha_1-\beta_1)\ge i(\beta_2-\alpha_2+\beta_3-\alpha_3)+(\beta_2-\alpha_2)+2(\beta_3-\alpha_3)$, which implies
$i\parenv{\alpha_1+\alpha_2+\alpha_3-\beta_1-\beta_2-\beta_3}\ge \beta_2-\alpha_2+2(\beta_3-\alpha_3)>0$. Then it follows that $\alpha_1+\alpha_2+\alpha_3>\beta_1+\beta_2+\beta_3$, which contradicts the assumption $\multiset{\beta_1,\beta_2,\beta_3}=\multiset{\alpha_1,\alpha_2,\alpha_3}$.

For case (5), there are only two possibilities:
\begin{enumerate}[$(i)$]
  \item $\alpha_1<\alpha_3<\alpha_2$, then $\beta_1=\alpha_3$, $\beta_2=\alpha_1$ and $\beta_3=\alpha_2$; or
  \item $\alpha_3<\alpha_1<\alpha_2$, then $\beta_1=\alpha_2$, $\beta_2=\alpha_3$ and $\beta_3=\alpha_1$.
\end{enumerate}
For case (6), multiply $-1$ at both sides of \Cref{eq_t3_1} and we get case (5).

Now we describe how to determine vector $(a,b,c)$ within up to two options. Firstly, find a permutation $\parenv{\alpha_1,\alpha_2,\alpha_3}$ of $\multiset{a,b,c}$ that satisfies \Cref{eq_constraint_t3}. If at least two of $a,b,c$ are equal, then $\parenv{a,b,c}=\parenv{\alpha_1,\alpha_2,\alpha_3}$. When $a,b,c$ are mutually distinct, if $\parenv{\alpha_1,\alpha_2,\alpha_3}$ satisfies one of the following conditions
\begin{itemize}
  \item $\alpha_1<\alpha_3<\alpha_2$,
  \item $\alpha_3<\alpha_1<\alpha_2$,
  \item $\alpha_2<\alpha_1<\alpha_3$
  \item $\alpha_2<\alpha_3<\alpha_1$,
\end{itemize}
then apart from $\parenv{\alpha_1,\alpha_2,\alpha_3}$, there are at most one permutation of $\multiset{a,b,c}$ that satisfies \Cref{eq_constraint_t3}. For other cases of $\parenv{\alpha_1,\alpha_2,\alpha_3}$, we have $\parenv{a,b,c}=\parenv{\alpha_1,\alpha_2,\alpha_3}$.
\end{IEEEproof}

\begin{construction}[$t=3,L=2$]\label{constuction_listtis3}
  Let $m,n>1$, $1\le w<n$, $1\le e\le w$. Let $p$ be the smallest prime in $\sparenv{n,2n}$ and $q$ be the smallest prime in $\sparenv{p^e,2p^e}$.
  For any $\bm{a}=\parenv{a_1,a_2,a_3,a_4}\in\mathbb{Z}_{q}^3\times\mathbb{Z}_{3mp^e}$, the code $\cL_{3}$ is defined as
  $$
  \cL_{3}\triangleq\mathset{\cX\in\Sigma_{w}^{m\times n}:~\sum_{r=0}^{m-1}f_{p,e}\parenv{\cX_r}^{l}\equiv a_{l}\pmod{q},\forall l\in[1,3],\sum_{r=0}^{m-1}r\cdot f_{p,e}\parenv{\cX_{r}}\equiv a_4\pmod{3mp^e}}.
  $$
\end{construction}

\begin{theorem}\label{thm_listtis3}
 Suppose $q>3$. The code $\cL_3$ is $(3,e,2)$-CAE list-decodable. There is some $\bm{a}$ such that $\rho\parenv{\cL_3}\le\log(m)+\log(3p^e)+3\log(q)$.
\end{theorem}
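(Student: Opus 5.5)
The plan follows the template of \Cref{thm_list1} and \Cref{thm_uniquegeneral}. First I reduce the list-decoding problem to an erasure problem on $f(\cX)$, then I recover the multiset of erased values, and finally I use the weighted constraint modulo $3mp^e$ to cut the $3!$ orderings down to two.

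\textbf{Reduction.} Let $\cY$ be obtained from a codeword $\cX\in\cL_3$ by a $(3,e)$-CAE. The erroneous rows are identified from their weights, and by \Cref{rmk_exactte} I assume there are exactly three of them, with indices $i<j<k$. By \Cref{prop_main}, together with \Cref{fact_pre1} and \Cref{lem_pre2}, each candidate value of the triple $(f_{p,e}(\cX_i),f_{p,e}(\cX_j),f_{p,e}(\cX_k))$ yields at most one candidate codeword. So it suffices to show that, given $\cY$ and $\bm{a}$, the vector $(x_1,x_2,x_3)=(f_{p,e}(\cX_i),f_{p,e}(\cX_j),f_{p,e}(\cX_k))$ is confined to at most two possibilities.

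\textbf{Recovering the multiset.} Subtracting the contributions of the known rows gives the power sums $x_1^l+x_2^l+x_3^l \bmod q$ for $l=1,2,3$. Each $x_r$ lies in $[0,p^e)$, and $q\ge p^e$ is a prime with $q>3$. Hence \Cref{lem_pre1} (with $Q=p^e$ and $t=3$) determines the multiset $\multiset{x_1,x_2,x_3}$ uniquely.

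\textbf{Cutting the orderings to two.} Subtracting the known terms from $a_4$ gives $u'\equiv ix_1+jx_2+kx_3 \pmod{3mp^e}$. Since $0\le i,j,k<m$ and $0\le x_r<p^e$, we have $0\le ix_1+jx_2+kx_3<3mp^e$. Therefore the integer $u=ix_1+jx_2+kx_3$ is recovered exactly as the residue $u'$. This is where the modulus $3mp^e$ is needed, since \Cref{lem_listtis3} is stated over the integers rather than modulo something. Now \Cref{lem_listtis3}, applied with $Q=p^e$ and this $u$, shows that the multiset together with $u$ determines $(x_1,x_2,x_3)$ up to two options. Consequently at most two codewords are consistent with $\cY$, so $\cL_3$ is $(3,e,2)$-CAE list-decodable.

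\textbf{Redundancy.} The codes $\cL_3$, as $\bm{a}$ ranges over $\mathbb{Z}_q^3\times\mathbb{Z}_{3mp^e}$, partition $\Sigma_w^{m\times n}$. By pigeonhole, some $\bm{a}$ gives $\abs{\cL_3}\ge\binom{n}{w}^m/(3mp^eq^3)$. This yields $\rho(\cL_3)\le\log(m)+\log(3p^e)+3\log(q)$.

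The only delicate point is the integer-versus-modular issue in the third step, namely checking that the bound $3(m-1)(p^e-1)<3mp^e$ makes the residue coincide with the true integer value. Everything else is a direct citation of earlier results.
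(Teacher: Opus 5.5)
Your proposal is correct and follows the paper's proof essentially step for step. You recover the multiset of the three erased values of $f_{p,e}$ via \Cref{lem_pre1} from the power-sum constraints modulo $q$, then use the weighted constraint modulo $3mp^e$ with \Cref{lem_listtis3} to cut the orderings to at most two, each of which yields at most one codeword. Your explicit check that $0\le ix_1+jx_2+kx_3\le 3(m-1)(p^e-1)<3mp^e$, so that the residue equals the integer $u$ required by \Cref{lem_listtis3}, and your pigeonhole count for the redundancy, are details the paper leaves implicit but clearly intends.
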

\begin{IEEEproof}
Let $\cY$ be obtained from a codeword $\cX$ by a $(3,e)$-CAE. Comparing $\cY$ with $\cX$, we can know which rows suffered errors.

Suppose that $\cX_i,\cX_j,\cX_k$ suffered errors, where $0\le i<j<k<m$. According to \Cref{lem_pre1}, we can recover the multiset $\multiset{f_{p,e}\parenv{\cX_i},f_{p,e}\parenv{\cX_j},f_{p,e}\parenv{\cX_k}}$ by the three constraints $\sum_{r=0}^{m-1}f_{p,e}\parenv{\cX_r}^{\ell}\equiv a_{\ell}\pmod{q}$, where $1\le\ell\le3$. Then by \Cref{lem_listtis3} and the constraint $\sum_{r=0}^{m-1}r\cdot f_{p,e}\parenv{\cX_{r}}\equiv a_4\pmod{3mp^e}$, the tuple $\parenv{f_{p,e}\parenv{\cX_i},f_{p,e}\parenv{\cX_j},f_{p,e}\parenv{\cX_k}}$ can be determined up to two options.
For each option, there are at most one codeword corresponding to it.
\end{IEEEproof}

\bigskip
\bigskip
\section{conclusion}\label{sec_conclusion}
In this paper, we improved previous constructions of $\sparenv{m,(n,w);t,e}$-composite codes when $m$ is large while $t,n,w$ and $e$ are constants. We also investigate the list-decodability of $\sparenv{m,(n,w);t,e}$-composite codes. We showed that when the list-size $L$ is large compared to $t$, i.e., $L=t!$ or $L=p^{e(t-1)}$, there are codes with constant redundancy. When $t=3$ and $L=2$, we constructed a list-decodable code with redundancy $\log(m)+O(1)$. This introduces the following interesting problem: is there (and how to find) a function $f_{n,w,e}(t)$ such that when $L>f_{n,w,e}(t)$, there are $(t,e,L)$-CAE list-decodable codes with constant redundancy?

\appendices
\section{Proof of \Cref{lem_pre2}}\label{appendix_pfpre2}
\begin{IEEEproof}
Assume that the error positions are $j_1,\ldots,j_e$, where $0\le j_1<\cdots<j_e<n$. To recover $\bm{x}$, it sufficient to recover the vector $\parenv{j_1,\ldots,j_e}$. To that end, let $a_{\ell}=\parenv{s_{\ell}^p(\bm{x})-s_{\ell}^p(\bm{y})}$ for all $1\le\ell\le e$. Then we have
\begin{equation*}
    \left\{
    \begin{array}{c}
     j_1+\cdots+j_e\equiv a_1\pmod{p}\\
     j_1^2+\cdots+j_e^2\equiv a_2\pmod{p}\\
     \vdots\\
     j_1^e+\cdots+j_e^e\equiv a_e\pmod{p}
    \end{array}
    \right..
  \end{equation*}
  According to \Cref{lem_pre1}, the vector $\parenv{j_1,\ldots,j_s}$ can be uniquely determined. Now the proof is completed.
\end{IEEEproof}

\bigskip

\bibliographystyle{IEEEtran}
\bibliography{ref}
\end{document}